\newcommand{\extra}[1]{}
\renewcommand{\comment}[1]{}
\newtheorem{theorem}{Theorem}[section]
\newtheorem{corollary}[theorem]{Corollary}
\newtheorem{lemma}[theorem]{Lemma}
\newtheorem{proposition}[theorem]{Proposition}
\theoremstyle{remark}
\def\squareforqed{\hbox{\rlap{$\sqcap$}$\sqcup$}}
\def\qed{\ifmmode\squareforqed\else{\unskip\nobreak\hfil
\penalty50\hskip1em\null\nobreak\hfil\squareforqed
\parfillskip=0pt\finalhyphendemerits=0\endgraf}\fi}
\def\endenv{\ifmmode\;\else{\unskip\nobreak\hfil
\penalty50\hskip1em\null\nobreak\hfil\;
\parfillskip=0pt\finalhyphendemerits=0\endgraf}\fi}
\renewenvironment{proof}{\noindent \textbf{{Proof~} }}{\qed\medskip}
\newenvironment{proof+}[1]{\noindent \textbf{{Proof #1~} }}{\qed\medskip}
\mathchardef\ordinarycolon\mathcode`\:
\def\vcentcolon{\mathrel{\mathop\ordinarycolon}}
\newcommand{\nc}{\newcommand}
\nc{\rnc}{\renewcommand}
\nc{\beq}{\begin{equation}}
\nc{\eeq}{{\end{equation}}}
\nc{\beqa}{\begin{eqnarray}}
\nc{\eeqa}{\end{eqnarray}}
\nc{\lbar}[1]{\overline{#1}}
\nc{\bra}[1]{\langle#1|}
\nc{\ket}[1]{|#1\rangle}
\nc{\ketbra}[2]{|#1\rangle\!\langle#2|}
\nc{\braket}[2]{\langle#1|#2\rangle}
\nc{\proj}[1]{| #1\rangle\!\langle #1 |}
\nc{\avg}[1]{\langle#1\rangle}
\nc{\smfrac}[2]{\mbox{$\frac{#1}{#2}$}}
\nc{\tr}{\operatorname{tr}}
\nc{\tracedist}[1]{\Delta_{}\!\left( #1 \right)}
\nc{\fid}[1]{F\!\left( #1 \right)}
\nc{\ox}{\otimes}
\nc{\dg}{\dagger}
\nc{\dn}{\downarrow}
\nc{\cA}{{\cal A}}
\nc{\cB}{{\cal B}}
\nc{\cC}{{\cal C}}
\nc{\cD}{{\cal D}}
\nc{\cE}{{\mathcal E}}
\nc{\cF}{{\cal F}}
\nc{\cG}{{\cal G}}
\nc{\cH}{{\cal H}}
\nc{\cI}{{\cal I}}
\nc{\cJ}{{\cal J}}
\nc{\cK}{{\cal K}}
\nc{\cL}{{\cal L}}
\nc{\cM}{{\cal M}}
\nc{\cN}{{\cal N}}
\nc{\cO}{{\cal O}}
\nc{\cP}{{\cal P}}
\nc{\cR}{{\cal R}}
\nc{\cS}{{\cal S}}
\nc{\cT}{{\cal T}}
\nc{\cU}{{\cal U}}
\nc{\cV}{{\cal V}}
\nc{\cX}{{\cal X}}
\nc{\cZ}{{\cal Z}}
\nc{\entI}{{\bf I}}
\nc{\entIarrow}{{\bf I}^{\leftarrow}}
\nc{\entH}{{\bf H}}
\nc{\entHtwo}{{\mathrm{H}}_2}
\nc{\entS}{{\bf S}}
\nc{\entHmin}{\mathrm{H}_{\min}}
\nc{\aentHmin}{\hat{\mathrm{H}}_{\min}}
\nc{\supp}{\textrm{supp}}
\nc{\entF}{{\bf E}_f}
\nc{\isom}{\simeq}
\nc{\rank}{\operatorname{rank}}
\nc{\rar}{\rightarrow}
\nc{\lrar}{\longrightarrow}
\nc{\polylog}{\operatorname{polylog}}
\nc{\poly}{\operatorname{poly}}
\nc{\weight}{\textbf{w}}
\nc{\hamdist}{d_{H}}
\def\e{\epsilon}
\nc{\Sp}{{{\mathbb S}}}
\nc{\RR}{{{\mathbb R}}}
\nc{\CC}{{{\mathbb C}}}
\nc{\FF}{{{\mathbb F}}}
\nc{\NN}{{{\mathbb N}}}
\nc{\ZZ}{{{\mathbb Z}}}
\nc{\PP}{{{\mathbb P}}}
\nc{\QQ}{{{\mathbb Q}}}
\nc{\UU}{{{\mathbb U}}}
\nc{\OO}{{{\mathbb O}}}
\nc{\EE}{{{\mathbb E}}}
\nc{\id}{{\operatorname{id}}}
\nc{\qubitchannel}{\id_2}
\nc{\bitchannel}{\overline{\id}_2}
\nc{\be}{\begin{equation}}
\nc{\ee}{{\end{equation}}}
\nc{\bea}{\begin{eqnarray}}
\nc{\eea}{\end{eqnarray}}
\nc{\Hom}[2]{\mbox{Hom}(\CC^{#1},\CC^{#2})}
\nc{\rU}{\mbox{U}}
\nc{\ob}[1]{#1}
\newcommand{\eqdef}	{\stackrel{\textrm{def}}{=}}
\newcommand{\ex}[1]	{\mathbf{E}\left\{ #1 \right\}}
\newcommand{\exc}[2]	{\underset{#1}{\mathbf{E}}\left\{ #2 \right\}}
\newcommand{\pr}[1]	{\mathbf{P}\left\{ #1 \right\}}
\newcommand{\event}[1]	{\left[ #1 \right]}
\newcommand{\eventfont}[1]	{\textsf{#1}}
\renewcommand{\exp}[1]	{\operatorname{exp}\left( #1 \right)}
\newcommand{\ceil}[1]	{\left\lceil #1 \right\rceil}
\nc{\unif}{\textrm{unif}}
\nc{\circuit}{\textrm{circ}}
\nc{\haar}{\textrm{haar}}
\nc{\clifford}{\textrm{clifford}}
\nc{\secondmoment}{\operatorname{M}}
\nc{\Mclifford}{\operatorname{M}_{\clifford}}
\nc{\Mcirc}{\operatorname{M}_{\circuit}}
\nc{\depth}{\operatorname{depth}}
\nc{\binent}{h}
\newcommand{\mfS}{\mathfrak{S}}
\begin{document}

\title{Decoupling with random quantum circuits}

\author{Winton Brown\thanks{D\'{e}partement de Physique, Universit\'{e} de Sherbrooke} \and Omar Fawzi\thanks{Institute for Theoretical Physics, ETH Z\"{u}rich}}

\date{\today}

\maketitle

\begin{abstract}
Decoupling has become a central concept in quantum information theory with applications including proving coding theorems, randomness extraction and the study of conditions for reaching thermal equilibrium. However, our understanding of the dynamics that lead to decoupling is limited. In fact, the only families of transformations that are known to lead to decoupling are (approximate) unitary two-designs, i.e., measures over the unitary group which behave like the Haar measure as far as the first two moments are concerned.  Such families include for example random quantum circuits with $O(n^2)$ gates, where $n$ is the number of qubits in the system under consideration. In fact, all known constructions of decoupling circuits use $\Omega(n^2)$ gates.

Here, we prove that random quantum circuits with $O(n \log^2 n)$ gates satisfy an essentially optimal decoupling theorem. In addition, these circuits can be implemented in depth $O(\log^3 n)$. This proves that decoupling can happen in a time that scales polylogarithmically in the number of particles in the system, provided all the particles are allowed to interact. Our proof does not proceed by showing that such circuits are approximate two-designs in the usual sense, but rather we directly analyze the decoupling property.
\end{abstract}

\section{Introduction}
Consider an observer $E$ that holds some information about a large system $A$, modeled by a joint state $\rho_{AE}$. In many settings, one wants this information to be mapped to global properties of the system $A$. This allows the information not to be affected by transformations (such as noise) provided they act on a small enough subsystem $B$. Such a condition is described formally by saying that the systems $B$ and $E$ are \emph{decoupled}, i.e., $\rho_{BE} = \rho_B \otimes \rho_E$. In other words, this describes the absence of correlations between $B$ and $E$.  This condition naturally arises in the context of quantum error correcting codes, where information about which state was encoded must be unavailable on any corrupted subsystem, and in the notion of topological order, where information becomes stored in a topological degree of freedom and is inaccessible to measurements on a topologically trivial region.

A decoupling statement generally has the following form: applying a typical unitary transform chosen from some specified set to the system $A$ leads to a state $\rho_{BE} \approx \rho_{B} \otimes \rho_{E}$, provided $B$ is small enough compared to the initial correlations between $A$ and $E$.
A statement of this form is essential in proving a coding theorem for many information processing tasks. But taking the point of view of decoupling for proving coding theorems is especially useful in quantum information, mainly because of the notion of purification. Decoupling appears now as the most successful technique for analyzing quantum information processing tasks. Such an approach was used to study very general quantum information processing tasks like state merging \cite{HOW05, HOW06, Ber09} and fully quantum Slepian-Wolf \cite{ADHW09}, but also in many other settings \cite{Dup09}. For each of these task, a specific decoupling statement was proved but recently Dupuis et al. \cite{DBWR10} proved a very general essentially tight decoupling theorem from which the previously mentioned results can be derived.

The notion of decoupling when $A$ is classical is also studied under the name of privacy amplification. The maps that are applied in order to obtain decoupling are known as randomness extractors, a combinatorial object that is heavily studied in the context of complexity theory and cryptography; see \cite{vadhan:survey} for a survey on this topic. Quantum uncertainty relations can also be viewed as decoupling statements \cite{BFW12}.

Ideas from quantum information related to decoupling have also been used in the context of thermodynamics. For example, del Rio et al. \cite{dRARDV11} used the decoupling theorem of \cite{DBWR10} to study the work cost of an erasure in a fully quantum context. Also, general conditions under which thermal equilibrium is reached are analyzed in \cite{HW13, adrian:thesis, lidia:inprep}. In a different area, Hayden and Preskill \cite{HP07} proved that an $m$-qubit quantum state that was dropped into a black hole could be recovered with high fidelity  from an amount of Hawking radiation containing slightly more than $m$ qubits of quantum information, as long as the dynamics of the black hole approximates a unitary two-design sufficiently well. The speed at which decoupling occurs is particularly important for this question and it motivated the study of fast scramblers \cite{SS08, LSHOH11}.

\subsection{Decoupling with random quantum circuits}
In this paper, we are interested in understanding the dynamics that lead to decoupling. For example, in a system with $n$ particles with only pairwise interactions, how long does it take for the correlations with some observer $E$ to become global? The time required by the dynamics generated by such a Hamiltonian is roughly equivalent to the depth of a corresponding quantum circuit. Thus, in terms of computational complexity, we want to determine what is the minimum size, and particularly, depth for a family of quantum circuits that leads to a decoupled state?

We consider the simple but natural model of random quantum circuits, in which $t$ random gates are applied to randomly chosen pairs of qubits. Random quantum circuits of polynomial size are efficient implementations that are meant to inherit many properties of completely random unitary transformations, which typically require a circuit decomposition which is exponentially large in system size. An important property of interest is that a random unitary maps product states into highly entangled states \cite{HLW04}. As Haar random states are not physical in the sense of computational complexity, it is interesting to determine whether such generic entanglement can be achieved by efficient random quantum circuits.
A lot of work has been done in analyzing convergence properties of the distribution defined by random quantum circuits to the Haar measure on the full unitary group acting on $n$ qubits \cite{RM, ELL05, ODP07, Znidaric2, HL09, Low10, BVPRL, BHH12, TG07, HSZ12} especially properties related to the second moment.
Specifically, Harrow and Low \cite{HL09} proved that random quantum circuits are approximate two-designs with $O(n^2)$ gates. Using the result of \cite{SDTR11}, it follows that such random circuits satisfy a decoupling theorem provided the number of gates is $\Omega(n^2)$. Such a circuit has at least depth $\Omega(n)$, which is much larger than the simple signaling lower bound of $\Omega(\log n)$. %, derived in \cite{LSHOH11} for a Hamiltonian consisting of pairwise terms on a complete graph. 
Another, arguably less natural random circuit model defined in \cite{DCEL09} was shown to decouple a constant size observer $E$ from any macroscopic size subsystem in depth $O(\log n )$. However, it requires a depth proportional to the size of $E$ in general, and thus requires a circuit with depth that is linear in the system size in general. This can be shown using the exact solution to the convergence properties of this model given in \cite{BV13}.

\subsection{Results}
We prove that random quantum circuits with $t = O(n \log^2 n)$ gates achieve essentially optimal decoupling, improving on the results of \cite{HL09} combined with \cite{SDTR11}, which proved this result for $t = O(n^2)$.
Then, by applying gates that act on disjoint qubits in parallel, we show that this circuit runs in time $O(\log^3 n)$.

\subsubsection{Proof technique}

The first step of the proof is to relate the property of interest to the second moment operator of the random quantum circuit. For the random quantum circuits we consider, this moment operator, when evaluated in the Pauli basis, can be seen as the transition matrix of a Markov chain on the Pauli basis elements. The property of decoupling can be formulated in terms of this Markov chain. The convergence times of such Markov chains arising from the second order moments have been previously studied in \cite{ODP07,  Znidaric2, HL09}. However, these convergence times are not sufficient to prove the result we are aiming for and can only give useful bounds when $\Omega(n^2)$ gates are applied. Instead, we analyze the Markov chain in a finer way by bounding the probabilities of going from an initial Pauli string of weight $\ell$ to a Pauli string of weight $k$ within $O(n \log^2 n)$ steps. This is proved by building on the techniques used in \cite{HL09}.

Another reason to see why the methods of \cite{Znidaric2, HL09,   BHH12} cannot lead to the results we obtain here is that they use the spectral gap of the moment operator. It can be shown that this spectral gap only weakly depends on the underlying interaction graph of the circuit \cite{BV13}. But clearly, if the interaction graph is for example a one-dimensional line, Lieb-Robinson bounds give a lower bound of $\Omega(n)$ on the circuit depth at which decoupling can happen. Recalling that our aim is to prove decoupling after a polylogarithmic number of steps, the method we use cannot rely only on the spectral gap of the moment operator.

\subsection{Applications}
Our results show that many information processing tasks in the quantum setting can have very efficient encoding circuits with almost linear size and polylogarithmic depth in the system size. In particular, we can achieve the quantum capacity of the erasure channel to within an arbitrary error using such an encoding circuit. The measurements for optimal quantum state merging can also be implemented using such circuits. Our main technical result can also be used to show that almost-linear sized random quantum circuits define codes with distances that achieve the quantum Gilbert-Varshamov bound; see \cite{BF13isit} for details. To our knowledge capacity achieving codes of such short depth are only known for the quantum polar codes \cite{RDR12, WR12, SRDR13}, which for some special channels can even be efficiently decoded. We note that though inefficient to decode, a code defined by a short depth random quantum circuit is insensitive to which qubits the information to be encoded is initially located.

From a thermodynamics viewpoint, decoupling can be seen as a strong form of thermalization. We refer the reader to recent works that used decoupling theorems in order to derive general conditions under which thermal equilibrium is achieved \cite{HW13, adrian:thesis, lidia:inprep}.  As such, we believe that our results shed light on the speed at which thermal equilibrium is reached for generic two-body dynamics. A simple lower bound for the speed at which global thermal equilibrium can be reached for a closed quantum system is given by Lieb-Robinson bounds on the speed at which a signal can travel under such dynamics.  For pairwise interactions on a complete graph this is given by time $\log n$.  Our results show that this lower bound is almost achieved by a family of time dependent two-body Hamiltonians, specifically those that generate the random quantum circuit model we study.

Whether or not decoupling can be accomplished at the time scale of the signaling speed is relevant to the study of fast scramblers, \cite{SS08}, which was motivated by questions pertaining to quantum information processing in a black hole  \cite{HP07}. Specifically, it was estimated that if the dynamics of the black hole can encode a message dropped into it in a time  $O(\sqrt{n}\log n)$, which is just larger than the lower bound from signaling on the two-dimensional "stretched horizon" of the black hole, then  a violation of the quantum no cloning principle assuming complementarity at the event horizon could occur.  Our results imply that ``infinite''  dimensional random quantum circuits are (pretty) fast scramblers in a strong sense, i.e., scramble a message of any size in $O(\log^3 n)$ time.

\subsection{Organization}
Section \ref{sec:prelim} introduces some basic notation and the model of random quantum circuits we consider here. In Section \ref{sec:decoupling} we state our main result on decoupling with random circuits and reduce the problem to the study of a Markov chain $Q$. This Markov chain $Q$ is studied in Section \ref{sec:mc-analysis}, which is the main technical result of this paper. The fact that the circuits can be parallelized is proved in Section \ref{sec:parallelization}. The appendix contains various technical results that are used in the proofs, such as a generalization of the gambler's ruin lemma and simple estimates for binomial coefficients.

\section{Preliminaries}
\label{sec:prelim}
\subsection{Generalities}
A quantum state for a system $A$ is described by a density operator $\rho \in \cS(A)$ acting on the Hilbert space $A$ associated with the system $A$. A density operator on $A$ lives in the set $\cS(A)$ of positive semidefinite operator with unit trace. If $\rho_{AE}$ describes the joint state on $AE$, the state on the system $A$ is described by the partial trace $\rho_A \eqdef \tr_E \rho_{AE}$. A pure state is a state of rank $1$ and is denoted by $\rho_A = \proj{\rho}_A$ where $\ket{\rho} \in A$. A quantum operation with input system $A$ and output system $C$ is given by a completely positive map $\cT$ that maps operators on $A$ to operators on $C$. A map $\cT$ is said to be completely positive if for any system $B$ and $X \in \cS(A \otimes B)$ we have $(\cT \otimes \id)(X) \geqslant 0$. The system $A$ in this paper is always composed on $n$ qubits, and we denote by $\Phi_{AA'} = \frac{1}{2^n} \sum_{a,a' \in \{0,1\}^n} \ket{a} \bra{a'}_{A} \otimes \ket{a} \bra{a'}_{A'}$ a maximally entangled state between $A$ and $A'$. Here $\{\ket{a}\}$ is the standard basis for $A$. 

%If $U$ is a unitary acting on $A$, and $\ket{\psi}$ a state in $A \ox B$, we sometimes use $U \ket{\psi}$ to denote the state $(U \ox \1_B) \ket{\psi}$, where the symbol $\1_B$ is reserved for the identity map on $B$. For an introduction to quantum information, we refer the reader to \cite{NC00}.

Throughout the paper, we use the Pauli basis, which is an orthogonal basis for $2 \times 2$ matrices:
\[
\sigma_0 = \left(
\begin{array}{cc}
1 & 0 \\
0 & 1 \\
\end{array}
\right) \qquad
\sigma_1 = \left(
\begin{array}{cc}
0 & 1 \\
1 & 0 \\
\end{array}
\right)
\qquad
\sigma_2 = \left(
\begin{array}{cc}
0 & -i \\
i & 0 \\
\end{array}
\right)
\qquad
\sigma_3 = \left(
\begin{array}{cc}
1 & 0 \\
0 & -1 \\
\end{array}
\right).
\]
For a string $\nu \in \{0,1,2,3\}^n$, we define $\sigma_{\nu} = \sigma_{\nu_1} \otimes \cdots \otimes \sigma_{\nu_n}$. Observe that $\tr[\sigma_{\nu} \sigma_{\nu'}] = 2^n$ if $\nu = \nu'$ and $0$ otherwise. The support $\supp(\nu)$ of $\nu$ is simply the subset $\{i \in [n] : \nu_i \neq 0\}$ and the weight $|\nu| = |\supp(\nu)|$.
We also need to introduce an entropic quantity to quantify the decoupling accuracy. In particular, for a state $\rho_{AE}$, define 
\begin{equation}
\label{eq:def-h2}
\entHtwo(A|E)_{\rho} = -\log_2\left[\tr\left[\left(\rho_{E}^{-1/4} \rho_{AE} \rho_E^{-1/4}\right)^2\right]\right].
\end{equation}

In order to simplify the statement of the results we use the notation $\poly(n)$ for a number that could be chosen as any polynomial in $n$ and the power of the polynomial can be made large by appropriately choosing the related constants. The set of permutations of $\{1, \dots, n\}$ is denoted by $\mfS_n$.

%We now introduce some various notation. The notation $\poly(n)$ refers to a term that could be chosen to be any polynomial and the power of the polynomial can be made larger by appropriately choosing the related constants. As we are going to deal with binomial coefficients, the binary entropy function $h(x) = - x\log x - (1-x) \log(1-x)$ is going to be used. We also use the shorthand $[n] \eqdef \{1, \dots, n\}$.

\subsection{Random quantum circuits}
\label{sec:prelim-rqc}

In a sequential random quantum circuit \textsc{rqc($t$)}, $t$ random two-qubit gates are applied to randomly chosen pairs of qubits sequentially. Here the random two-qubit gate is chosen from the Haar measure on the unitary group acting on two qubits. In fact, our results apply equally well to any gate set whose second-order moment operator is the same as the one for the Haar measure on two qubits. This means that our results would also work if the gates are Clifford unitaries on two qubits. The number of gates of the circuit is one complexity measure but we are also interested in the depth. In this setting, multiple gates can be applied in the same time step as long as they act on disjoint qubits.

We construct a parallelized version of the sequential model in a natural way. Gates are sequentially added to the current level until it is not possible, i.e., there is a gate that shares a qubit with a previously added gate in that level. In this case, a new level is created and the process continues. We then define the parallelized model \textsc{rqc($t$,$d$)} as follows. Choose a random \textsc{rqc($t$)} circuit then parallelize it using the method describe above. If the circuit has depth at most $d$, then we return this circuit, otherwise the circuit is discarded and we restart the procedure.

A model of random circuits of a certain size defines a measure over unitary transformations on $n$ qubits that we call $p_{\circuit}$.
%We will sometimes compare the behaviour of the circuit to a unitary transformation chosen from the full Clifford group on $n$ qubits measure $p_{\clifford}$ over the full unitary group on $n$ qubits.
The second-order moment operator will play an important role in all our proofs. The second-order moment operator is a super-operator acting on two copies of the space of operators acting on the ambient Hilbert space, which is an $n$-qubit space in our setting. For a measure $p$ over the unitary group, we can define the second moment operator $\operatorname{M}_p$ as
\[
\operatorname{M}_p[X \otimes Y] = \exc{U \sim p}{UXU^\dagger \otimes UYU^\dagger}.
\]
In particular $\secondmoment_{\haar} = \exc{U \sim p_{\haar}}{UXU^\dagger \otimes UYU^\dagger}$.
Any distribution for which $\secondmoment = \secondmoment_{\haar}$ is referred to as a two-design. We denote by $\Mcirc$ the moment operator for the distribution obtained by applying one step of the random circuit. For the case of  a random unitary distributed according to the Haar measure  applied to a randomly chosen pair $i,j$ of qubits, see e.g., \cite[Section 3.2]{HL09}. We have
\[
\Mcirc = \frac{1}{n(n-1)} \sum_{i \neq j} \operatorname{m}_{ij},
\]
where $\operatorname{m}_{ij}$ only acts on qubits $i$ and $j$ and is defined by
\[
\operatorname{m}_{ij}[\sigma_{\mu} \otimes \sigma_{\mu'}] = \left\{
\begin{array}{ll}
0 & \text{if } \mu \neq \mu' \\
\sigma_{0} \otimes \sigma_{0} & \text{if } \mu = \mu' = 0 \\
\frac{1}{15} \sum\limits_{\nu \in \{0,1,2,3\}^2, \nu \neq 0} \sigma_{\nu} \otimes \sigma_{\nu} &\text{if } \mu = \mu' \neq 0 \\
\end{array}
\right.
\]
for all $\mu, \mu' \in \{0,1,2,3\}^2$. We can thus represent the operator $\Mcirc$ in the Pauli basis using the following $4^n \times 4^n$ matrix
\begin{equation}
\label{eq:markov-chain-q}
Q(\mu, \nu) = \frac{1}{4^n} \tr\left[\sigma_{\nu} \otimes \sigma_{\nu} \Mcirc[\sigma_{\mu} \otimes \sigma_{\mu}] \right].
\end{equation}
In fact, it is simple to verify that $\sum_{\nu \in \{0,1,2,3\}^n} Q(\mu, \nu) = 1$ for all $\mu$ and so $Q$ can be seen as a transition matrix for a Markov chain over the Pauli strings $\{0,1,2,3\}^n$ of length $n$.

Now for a random circuit with $t$ independent random gates applied sequentially, the second moment operator is simply $\Mcirc^t$ and the corresponding matrix in the Pauli basis is also the $t$-th power of $Q$. The properties we are interested in can be expressed as quadratic functions of the entries of the unitary transformation defined by the circuit and thus can be computed from the second moment operator. This means that these properties can be completely reduced to studying the evolution of the Markov chain defined by $Q$.

\section{Decoupling with random quantum circuits}
\label{sec:decoupling}

We start by describing the setting for the general decoupling theorem of \cite{DBWR10}. Consider a state $\rho_{AE}$ on $AE$ and a quantum channel, i.e., a completely positive trace preserving map $\cT : \cS(A) \to \cS(B)$. For example, $\cT$ might be the partial trace map keeping only the qubits in some subsystem $B$. See Figure \ref{fig:decoupling} for an illustration. The theorem gives a sufficient condition on entropic quantities on the state $\rho_{AE}$ and the state $\tau_{A'B} = \cT \otimes \id_{A'}(\Phi_{AA'})$ where $\Phi_{AA'} = \frac{1}{2^n} \sum_{a,a'} \ket{a} \bra{a'}_A \otimes \ket{a} \bra{a'}_{A'}$ is a maximally entangled state on $AA'$. The definition of the entropy $\entHtwo$ is given in \eqref{eq:def-h2}.
\begin{theorem}[General one-shot decoupling \cite{DBWR10}]
\label{thm:decoupling}
With the notation above,
\begin{equation}
\label{eq:decoupling}
\exc{U}{\| \cT(U \rho_{AE} U^{\dagger} ) - \tau_B \otimes \rho_E \|_1} \leq 2^{-\frac{1}{2}\left(\entHtwo(A|E)_{\rho} + \entHtwo(A|B)_{\tau}\right)},
\end{equation}
where $U$ is distributed according to the Haar measure over unitaries acting on $A$.
\end{theorem}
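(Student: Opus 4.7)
The plan is to reduce the trace-norm expectation to a Hilbert--Schmidt quantity and then evaluate the resulting Haar second-moment integral using the unitary 2-design formula from Schur--Weyl duality.

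First, I would invoke the weighted Cauchy--Schwarz inequality
\begin{equation}
\|Y\|_1 \;\leq\; \bigl\|(\sigma_B^{-1/4}\otimes\sigma_E^{-1/4})\,Y\,(\sigma_B^{-1/4}\otimes\sigma_E^{-1/4})\bigr\|_2,
\end{equation}
valid for any Hermitian $Y$ on $BE$ whenever $\sigma_B,\sigma_E$ are density operators (the non-invertible case is handled by restricting to supports and passing to a limit). Choosing $\sigma_B=\tau_B$, $\sigma_E=\rho_E$, applying this with $Y=\cT(U\rho_{AE}U^\dagger)-\tau_B\otimes\rho_E$, and then pulling the expectation inside the square root by Jensen's inequality, I reduce \eqref{eq:decoupling} to upper bounding
\begin{equation}
\exc{U}{\bigl\|(\tau_B^{-1/4}\otimes\rho_E^{-1/4})\,\bigl(\cT(U\rho_{AE}U^\dagger)-\tau_B\otimes\rho_E\bigr)\,(\tau_B^{-1/4}\otimes\rho_E^{-1/4})\bigr\|_2^2}.
\end{equation}

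Expanding the squared 2-norm as a trace and using the Haar identity $\exc{U}{U\rho_A U^\dagger}=\id_A/d_A$, which implies $\exc{U}{\cT(U\rho_{AE}U^\dagger)}=\tau_B\otimes\rho_E$, the cross and constant terms combine into a single subtracted $1$, leaving the purely quadratic expectation
\begin{equation}
\exc{U}{\tr\bigl[(\tau_B^{-1/2}\otimes\rho_E^{-1/2})\,\cT(U\rho_{AE}U^\dagger)\,(\tau_B^{-1/2}\otimes\rho_E^{-1/2})\,\cT(U\rho_{AE}U^\dagger)\bigr]}\;-\;1.
\end{equation}
I would then rewrite $\cT$ through its Choi--Jamio\l kowski operator $\tau_{A'B}$, so that all the $U$-dependence sits in two copies of $U$ acting on $A$. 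The remaining Haar integral $\exc{U}{U^{\otimes 2}(\cdot)U^{\dagger\otimes 2}}$ is evaluated by the 2-design formula, returning a combination of the identity and swap operators on $A\otimes A$ whose coefficients depend on the trace and swap-trace of the sandwiched operator. After contracting against $\tau_{A'B}$ twice, the identity-coefficient contribution combines with the subtracted $1$ to yield a nonpositive remainder, while the swap-coefficient contribution factorises as the product of two purity-like traces, one on $AE$ and one on $A'B$.

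The definition \eqref{eq:def-h2} then identifies these factors as $\tr[(\rho_E^{-1/4}\rho_{AE}\rho_E^{-1/4})^2]=2^{-\entHtwo(A|E)_\rho}$ and $\tr[(\tau_B^{-1/4}\tau_{A'B}\tau_B^{-1/4})^2]=2^{-\entHtwo(A|B)_\tau}$, and taking the square root from the Jensen step produces the claimed bound. I expect the main technical obstacle to be step three: carefully merging the Choi representation with the 2-design formula, checking that the identity contribution plus the subtracted $1$ is indeed nonpositive so that the inequality goes in the right direction, and confirming that the swap contribution splits cleanly as a tensor-product purity so that the two conditional entropies appear separately without cross terms.
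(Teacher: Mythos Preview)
The paper does not prove Theorem~\ref{thm:decoupling}; it is quoted from \cite{DBWR10} as a known benchmark against which the main result (Theorem~\ref{thm:decoupling-seq}) is compared. So there is no ``paper's own proof'' of this statement to match against.

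Your proposal is essentially the standard proof of the general decoupling theorem as given in \cite{DBWR10} and Dupuis's thesis: H\"older/Cauchy--Schwarz with weights $\tau_B^{1/4}\otimes\rho_E^{1/4}$ to pass to the $2$-norm, Jensen to move the expectation inside the square root, the swap trick to write the squared $2$-norm as a trace on two copies, and the Schur--Weyl second-moment formula to split the result into an identity term and a swap term. The swap term indeed factorises into the two sandwiched purities $2^{-\entHtwo(A|E)_\rho}$ and $2^{-\entHtwo(A|B)_\tau}$, and the identity term combined with the subtracted constant is nonpositive (it equals $\frac{1}{d_A^2-1}\bigl(\tr[\tilde\rho_{AE}^2]\tr[\tilde\tau_{A'B}^2]d_A^{-2}$-type terms minus their swap counterparts$\bigr)$, which one checks is $\leq 0$). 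One small remark: the cancellation of the cross and constant terms does not actually require the Haar first-moment identity. Since $\cT$ is trace-preserving, the cross term $\tr\bigl[(\tau_B^{-1/2}\otimes\rho_E^{-1/2})\cT(U\rho_{AE}U^\dagger)\,(\tau_B^{1/2}\otimes\rho_E^{1/2})\bigr]=\tr[\cT(U\rho_{AE}U^\dagger)]=1$ for every $U$, so the subtraction is deterministic.

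It is worth noting that the opening moves of your proposal coincide with what the paper does in its proof of Theorem~\ref{thm:decoupling-seq}: the same H\"older inequality appears (equation after \eqref{eq:l1-to-purity}), and the same reduction to a purity expectation. The divergence is that the paper then expands in the Pauli basis and controls the second moment of the random circuit via the Markov chain $Q$, whereas for the Haar case you can evaluate the second moment exactly via Schur--Weyl and finish in one line.
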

\begin{figure}
\caption{A unitary $U$ (which is going to be a random circuit in this paper) is applied to system $A$ followed by a map $\cT$.}
\label{fig:decoupling}
\begin{center}
\includegraphics{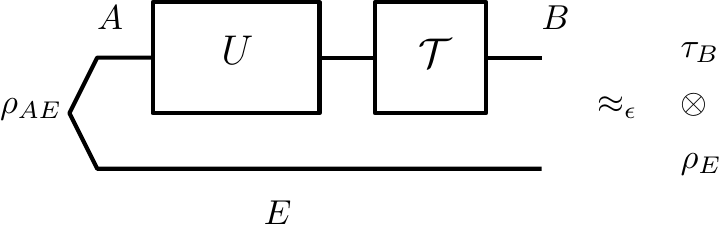}
\end{center}
\end{figure}

%\subsection{Gate complexity}

In this section, we prove the main result of this paper which is a result analogous to Theorem \ref{thm:decoupling} but where $U$ is a unitary defined by applying a random circuit with $t = O(n \log^2 n)$ gates. Before proving the theorem, we provide a brief overview of the proof. Consider for simplicity that $\cT$ is a partial trace map. We start by relating the trace distance of \eqref{eq:decoupling} to the purity $\tr[\cT(U\tilde{\rho}_{AE}U^{\dagger})^2]$ of the operator $\tilde{\rho}_{AE} = \rho_E^{-1/4} \rho_{AE} \rho_E^{-1/4}$. This step is standard and used in basically all decoupling theorems. Decomposing $\tilde{\rho}_{AE}$ using the Pauli basis on $A$, we can write
\begin{align}
\tilde{\rho}_{AE} = \frac{1}{2^n} \sum_{\nu \in \{0,1,2,3\}^{n} } \sigma_{\nu} \otimes \tr_A[\sigma_{\nu}\tilde{\rho}_{AE}] \quad \text{ and } \quad
\tr[\tilde{\rho}_{AE}^2] = \frac{1}{2^n} \sum_{\nu \in \{0,1,2,3\}^{n} } \tr\left[ \tr_A[\sigma_{\nu}\tilde{\rho}_{AE}]^2 \right].
\label{eq:initial-state}
\end{align}
Note that $\tr[\tilde{\rho}_{AE}^2]$ does not change when a unitary is applied on the system $A$. However, if we apply a unitary $U$ and then keep a subset $S$ of the qubits of $A$, the purity of the reduced state $\tr[ \tr_{S^c}[U\tilde{\rho}_{AE}U^{\dagger}]^2] = \frac{1}{2^{|S|}} \sum_{\nu \in \{0,1,2,3\}^{|S|} } \tr\left[ \tr_A[\sigma_{\nu}U\tilde{\rho}_{AE}U^{\dagger}]^2 \right]$ in general depends on $U$. Observe for example that we only have terms $\tr\left[ \tr_A[\sigma_{\nu}U\tilde{\rho}_{AE}U^{\dagger}]^2 \right]$ where the weight of $\nu$ is at most $|S|$. It then becomes clear that in order to prove that $\tr[ \tr_{S^c}[U\tilde{\rho}_{AE}U^{\dagger}]^2]$ is small when the subsystem $S$ is sufficiently small, we should obtain bounds on $\tr\left[ \tr_A[\sigma_{\nu}U\tilde{\rho}_{AE}U]^2 \right]$ when $\nu$ is small. In particular, if $U$ is a random quantum circuit with $t$ gates, $\ex{\tr\left[ \tr_A[\sigma_{\nu}U\tilde{\rho}_{AE}U]^2 \right]}$ can be written as a function of $Q^t( ., \nu)$ where $Q$ is the transition matrix of the Markov chain introduced in \eqref{eq:markov-chain-q} and using the decomposition of the initial state $\tilde{\rho}_{AE}$. The stationary distribution is given by the uniform distribution over all Pauli strings excluding the identity, $p_Q(\nu) = \frac{1}{4^n-1}$. The main technical result is then to prove that starting at a Pauli string, $\sigma_\mu$ of weight $\ell$, we have that $\sum_\nu |Q^t(\mu, \nu) - p(\nu)| \leq \frac{1}{3^{\ell} \binom{n}{\ell}}$ where $p(\nu) \lesssim p_Q(\nu)$ provided $t > c n \log^2 n$. Note that when computing a mixing time, the worst case over all $\mu$ is considered. Note that the claimed bound on the distance improves with the weight $\ell = |\mu|$. 
For the result we aim to prove, obtaining this explicit dependence on $\ell$ is crucial.
\begin{theorem}
\label{thm:decoupling-seq}
Let $\rho_{AE} \in \cS(AE)$ be an initial arbitrary mixed state and $U_t \rho_{AE} U^{\dagger}_{t}$ be the corresponding state after the application of $t$ random two-qubit gates on the $A$ system, which is composed of $n$ qubits. Let $\cT : \cS(A) \to \cS(B)$ be a completely positive trace preserving map.
%$\cT(.) = \frac{1}{n! 4^n} \sum_{\pi \in \cS_n, \mu \in \{0,1,2,3\}^n} \cT_0(\sigma_{\mu} \pi \rho \pi^{\dagger} \sigma_{\mu})$.
Define $\tau_{A'B} = \cT \otimes \id_{A'} (\Phi_{AA'})$, where $\ket{\Phi}_{AA'} = \frac{1}{2^{n/2}}\sum_{a \in \{0,1\}^n} \ket{a}_{A} \ket{a}_{A'}$.

Then we have for any $\delta > 0$, there exists a constant $c$ such that for all $n$ and all $t \geq c n \log^2 n$
\begin{equation}
\label{eq:l1-statement}
\exc{U_t}{ \left\| \cT(U_t\rho_{AE}U_t^{\dagger}) - \tau_B \otimes \rho_{E} \right\|_1} \leq \sqrt{\frac{1}{\poly(n)} + 4^{\delta n} \cdot 2^{-\entHtwo(A|B)_{\tau}} \cdot 2^{-\entHtwo(A|E)_{\rho}}},
\end{equation}
where the expectation is take over the choice of random circuit of size $t$.
\end{theorem}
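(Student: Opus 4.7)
The plan follows the standard one-shot decoupling strategy of \cite{DBWR10}, with the Haar average over $A$ replaced by the random-circuit average and with the Markov chain $Q$ of \eqref{eq:markov-chain-q} playing the role of the Haar $2$-design. Setting $\tilde\rho_{AE} = \rho_E^{-1/4}\rho_{AE}\rho_E^{-1/4}$ and $\tilde\tau_{A'B} = \tau_B^{-1/4}\tau_{A'B}\tau_B^{-1/4}$, the Cauchy--Schwarz bound used in \cite{DBWR10} gives, pointwise in $U_t$,
\[
\bigl\|\cT(U_t\rho_{AE}U_t^\dagger)-\tau_B\otimes\rho_E\bigr\|_1^2 \;\leq\; f(U_t),
\]
where $f(U_t)$ is a nonnegative quadratic functional in the entries of $U_t$ whose Haar average is exactly $2^{-\entHtwo(A|B)_\tau-\entHtwo(A|E)_\rho}$. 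Taking the circuit expectation and applying Jensen's inequality to the square root reduces \eqref{eq:l1-statement} to establishing the bound $\ex{f(U_t)} \leq 1/\poly(n) + 4^{\delta n}\cdot 2^{-\entHtwo(A|B)_\tau-\entHtwo(A|E)_\rho}$.

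Next, I would expand $\tilde\rho_{AE}$ in the $A$-Pauli basis as in \eqref{eq:initial-state} and use that the second-moment operator of a sequential $t$-gate random circuit is $\Mcirc^t$, whose matrix elements in the Pauli basis are precisely the entries $Q^t(\cdot,\cdot)$. A direct computation then writes
\[
\ex{f(U_t)} \;=\; \sum_{\mu,\nu\in\{0,1,2,3\}^n\setminus\{0\}} \alpha_\mu(\tilde\rho)\, Q^t(\mu,\nu)\, \beta_\nu(\tilde\tau),
\]
where $\alpha_\mu,\beta_\nu\geq 0$ are Pauli coefficients built from $\tilde\rho$ and from the Choi state of $\cT$ twisted by $\tilde\tau$; the $\mu=0$ and $\nu=0$ terms cancel against the subtracted product state, and the total masses satisfy $\sum_\mu \alpha_\mu \leq 2^n\cdot 2^{-\entHtwo(A|E)_\rho}$ and $\sum_\nu \beta_\nu \leq 2^n\cdot 2^{-\entHtwo(A|B)_\tau}$.

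Finally, I would split $Q^t(\mu,\nu) = p(\nu) + \bigl(Q^t(\mu,\nu)-p(\nu)\bigr)$ using the near-stationary law $p$ supplied by the main technical lemma of Section \ref{sec:mc-analysis}, which for $t \geq c(\delta) n\log^2 n$ satisfies both $p(\nu) \leq 4^{\delta n}/(4^n-1)$ and the weight-graded mixing estimate $\sum_\nu|Q^t(\mu,\nu)-p(\nu)| \leq 1/\bigl(3^{|\mu|}\binom{n}{|\mu|}\bigr)$. The first piece factorizes across $\mu$ and $\nu$ and, combined with the two mass bounds above, contributes at most $4^{\delta n}\cdot 2^{-\entHtwo(A|B)_\tau-\entHtwo(A|E)_\rho}$, matching the leading term inside the square root of \eqref{eq:l1-statement}. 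The second piece is bounded weight by weight: the $3^\ell\binom{n}{\ell}$ multiplicity of Pauli strings of weight $\ell$ exactly cancels the denominator in the mixing estimate, and a crude uniform bound on $\beta_\nu$ together with the geometric sum in $\ell$ collapses the contribution to the $1/\poly(n)$ remainder.

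The only genuine obstacle is the weight-graded mixing estimate itself. A spectral-gap argument of the type used in \cite{Znidaric2,HL09,BHH12} can at best yield total-variation mixing in time $\Omega(n^2)$ and, crucially, cannot see the $1/(3^\ell\binom{n}{\ell})$ factor needed to cancel the multiplicity of weight-$\ell$ Paulis; without that cancellation the deviation term would swamp the main term and the theorem would fail. The refined analysis of $Q^t$ -- which tracks the Pauli weight through the chain and controls the probability of large weight excursions in polylogarithmic time -- is therefore the technical heart of the paper and is where all the real work happens; it is deferred to Section \ref{sec:mc-analysis}.
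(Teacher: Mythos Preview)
Your overall strategy—reduce to the second moment, expand in the Pauli basis, and split $Q^t = p_\delta + (Q^t - p_\delta)$ using the Markov-chain lemma of Section~\ref{sec:mc-analysis}—is exactly the paper's, and your treatment of the leading $p_\delta$ term is correct. The gap is in the remainder term. You assert that ``the $3^\ell\binom{n}{\ell}$ multiplicity of Pauli strings of weight $\ell$ exactly cancels the denominator in the mixing estimate.'' But what multiplies $1/\bigl((3-\eta)^\ell\binom{n}{\ell}\bigr)$ after you pull out $\max_\nu \beta_\nu$ is not the \emph{number} of weight-$\ell$ Paulis; it is the weight-$\ell$ Pauli mass of the state,
\[
\sum_{|\mu|=\ell}\alpha_\mu \;\propto\; \sum_{|\mu|=\ell}\tr\bigl[\tr_A[\sigma_\mu\tilde\rho_{AE}]^2\bigr].
\]
There is no a priori reason this should be comparable to $3^\ell\binom{n}{\ell}$: the only constraint you have written down is the total $\sum_\mu\alpha_\mu\leq 2^n\cdot 2^{-\entHtwo(A|E)_\rho}$, and all of that mass could sit at a single small weight $\ell$, in which case the remainder is of order $2^{-\entHtwo(A|E)_\rho}$ rather than $1/\poly(n)$ and the theorem fails.

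What closes the argument in the paper is a separate, nontrivial ingredient you have omitted: a \emph{level bound} (Lemma~\ref{lem:level}) stating that, whenever $\entHtwo(A|E)_\rho \geq -(1-\epsilon)n$, one has
\[
\sum_{|\mu|=\ell}\tr\bigl[\tr_A[\sigma_\mu\tilde\rho_{AE}]^2\bigr] \;\leq\; 12n^4\,(3-\eta)^\ell\binom{n}{\ell}
\]
for some $\eta=\eta(\epsilon)>0$. This is not a multiplicity count; its proof goes through the fully quantum entropy-sampling theorem of \cite{DFW13}. Once this is available, the $(3-\eta)^\ell\binom{n}{\ell}$ indeed cancels the mixing denominator (and this is precisely why Theorem~\ref{thm:mc-Qconvergence} is stated with $(3-\eta)$ rather than $3$), the crude bound $\max_\nu \tr[\tilde\cT(\sigma_\nu)^2]\leq 4^n$ cancels the $1/4^n$ normalization, and the sum over $\ell$ contributes only a polynomial factor absorbed into $1/\poly(n)$. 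The complementary case $\entHtwo(A|E)_\rho < -(1-\epsilon)n$ is handled separately by observing that the right-hand side of \eqref{eq:l1-statement} then already exceeds $2$.
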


\begin{proof}
%\begin{equation}
%\label{eq:l1-statement-with-pauli}
%\exc{U_t}{ \left\| \cT_{ \gamma}(U_t\rho_{AE}U_t^{\dagger} ) - \tau_B \otimes %\rho_{E} \right\|_1},
%\end{equation}
%where $\cT_{\gamma}(X) = \cT(\sigma_{\gamma} X \sigma_{\gamma})$. The advantage of adding the average over Pauli strings $\sigma_{\nu}$ is so that later when we bound the expression in \eqref{eq:l1-statement-with-pauli}, the Pauli operators $\sigma_{1}, \sigma_{2}$ and $\sigma_{3}$ play the same role. The way to obtain a bound on \eqref{eq:l1-statement} using a bound on \eqref{eq:l1-statement-with-pauli} is explained at the end of this proof.
As in \cite{SDTR11}, we use the following H\"older-type inequality for operators $\| \alpha \beta \gamma \|_1 \leq \| |\alpha|^4 \|^{1/4}_1 \| |\beta|^2 \|^{1/2}_1 \| |\gamma|^4 \|^{1/4}_1$, see e.g., \cite[Corollary IV.2.6]{Bha97}.
\begin{align}
\left\| \cT(U_t\rho_{AE}U_t^{\dagger}) - \tau_B \otimes \rho_E \right\|^2_1
&\leq \| (\tau_B^{1/4} \otimes \rho_E^{1/4})^4 \|_1 \cdot \tr \left[ \left( \tau_B^{-1/4} \otimes \rho_E^{-1/4} \left( \cT(\rho_{AE}(t)) - \tau_B \otimes \rho_E \right) \tau_B^{-1/4} \otimes \rho_E^{-1/4} \right)^2 \right] \notag.
\end{align}
Taking the expectation, we have
\begin{align}
\ex{\left\| \cT(U_t\rho_{AE}U_{t}^{\dagger}) - \tau_A \otimes \rho_E \right\|^2_1} &\leq
 \ex{\tr[\tilde{\cT}(U_t \tilde{\rho}_{AE} U_t^{\dagger})^2]} - 2\ex{\tr[ \tilde{\cT}(U_t\tilde{\rho}_{AE}U_t^{\dagger}) \cdot \tilde{\tau}_B \otimes \tilde{\rho}_E]} + \tr[(\tilde{\tau}_B \otimes \tilde{\rho}_E)^2] \notag \\
&\leq \ex{\tr\left[\tilde{\cT}(U_t \tilde{\rho}_{AE} U_t^{\dagger})^2\right]} - \tr[\tilde{\tau}_B^2] \tr[ \tilde{\rho}^2_E ] + \frac{1}{\poly(n)}, \label{eq:l1-to-purity}
\end{align}
where we defined $\tilde{\rho}_{AE} = \rho_E^{-1/4} \rho_{AE} \rho_E^{-1/4}$ and $\tilde{\cT}(.) = \tau_B^{-1/4} \cT(.) \tau_B^{-1/4}$. If the map $\cT$ is such that $\cT(\id)$ is a multiple of the identity then the last line follows directly without using any properties of $U_t$. If this is not the case, we explicitly bound the expectation and obtain the additional $1/\poly(n)$ term, which captures the fact that $\{U_t\}$ form an approximate $1$-design; see Appendix \ref{sec:app-one-design} for a proof of this fact. We also use the fact that $\tr[\tilde{\tau}_B^2] = \tr[\tilde{\rho}_E^2] = 1$. To avoid complicating the expressions, we drop the $1/\poly(n)$ term in the remainder of the proof, as it is taken into account in the final desired statement.

%to the last line we assumed that $\ex{U_t\sigma_\mu U^\dagger_t} = 0$ for all $\mu$ which is equivalent to $U_t$ being selected from a 1-design. That this is the case with high probability will be shown in the appendix.

Note that by definition $\tr[\tilde{\rho}^2_{AE}] = 2^{-\entHtwo(A|E)_{\rho}}$. Moreover, since $\Phi_{AA'} = \frac{1}{4^n} \sum_{\nu} \sigma_{\nu} \otimes \sigma_{\nu}$, we have $2^{-\entHtwo(A|B)_{\tau}} = \frac{1}{8^n} \sum_{\nu} \tr[\tilde{\cT}(\sigma_{\nu})^2]$. To compute $\tr[ \tilde{\cT}(U_t\tilde{\rho}_{AE}U_t^{\dagger})^2 ]$, we decompose $U_t\tilde{\rho}_{AE}U_t^{\dagger}$ in the Pauli basis on $A$ as follows:
\begin{align}
\label{eq:decomposition-pauli}
U_t\tilde{\rho}_{AE}U_t^{\dagger} = \frac{1}{2^n} \sum_{\nu \in \{0,1,2,3\}^{n} } \sigma_{\nu} \otimes \tr_A[\sigma_{\nu}U_t\tilde{\rho}_{AE}U_t^{\dagger}].
\end{align}
Applying $\tilde{\cT}$, we get
\begin{align*}
\tilde{\cT}(U_t\tilde{\rho}_{AE}U_t^{\dagger}) &= \frac{1}{2^n}\sum_{\nu \in \{0,1,2,3\}^{n}} \tilde{\cT}(\sigma_{\nu}) \otimes \tr_A[\sigma_{\nu}U_t \tilde{\rho}_{AE}U_t^{\dagger}] \\
&= \frac{1}{4^n}\sum_{\nu, \xi \in \{0,1,2,3\}^{n}} \tr[\sigma_{\xi} \tilde{\cT}(\sigma_{\nu})] \sigma_{\xi} \otimes \tr_A[\sigma_{\nu}U_t \tilde{\rho}_{AE} U_t^{\dagger}].
\end{align*}
As a result, we have
\begin{align*}
\tr [\tilde{\cT}(U_t\tilde{\rho}_{AE}U_t^{\dagger})^2] &= \frac{1}{2^n} \sum_{\xi \in \{0,1,2,3\}^n} \tr\left[\left(\frac{1}{2^n} \sum_{\nu} \tr[ \sigma_{\xi} \tilde{\cT}(\sigma_{\nu}) ] \tr_A[\sigma_{\nu} U_t \tilde{\rho}_{AE} U_t^{\dagger} ]\right)^2 \right] \\
			&= \frac{1}{2^n} \sum_{\xi \in \{0,1,2,3\}^n} \frac{1}{4^n} \tr[\sigma_{\xi} \tilde{\cT}(\id_A)]^2 \tr[\tilde{\rho}^2_E] \\
			&+ \frac{1}{8^n} \sum_{\xi, \nu, \nu' \in \{0,1,2,3\}^n, \nu \text{ or } \nu' \neq 0} \tr[\sigma_{\xi} \tilde{\cT}(\sigma_{\nu})] \tr[\sigma_{\xi} \tilde{\cT}(\sigma_{\nu'})] \cdot \tr\left[ \tr_A[\sigma_{\nu} U_t\tilde{\rho}_{AE}U_t^{\dagger} ]  \tr_A[\sigma_{\nu'} U_t\tilde{\rho}_{AE}U_{t}^{\dagger} ] \right] \\
			&= \tr[\tilde{\tau}_B^2] \tr[\tilde{\rho}^2_E] + \frac{1}{8^n} \sum_{\nu, \nu' \in \{0,1,2,3\}^n, \nu \text{ or } \nu' \neq 0} T_{\nu, \nu'} \cdot \tr\left[ \tr_A[\sigma_{\nu} U_t \tilde{\rho}_{AE}U_t^{\dagger}] \tr_A[\sigma_{\nu'} U_t\tilde{\rho}_{AE}U_t^{\dagger} ] \right],
\end{align*}
where we defined $T_{\nu, \nu'} = \sum_{\xi} \tr[\sigma_{\xi} \tilde{\cT}(\sigma_{\nu})] \tr[\sigma_{\xi} \tilde{\cT}(\sigma_{\nu'})]$.
Getting back to equation \eqref{eq:l1-to-purity} and using the concavity of the square root function, we have
\begin{align}
\ex{\left\| \cT(U_t\rho_{AE}U_t^{\dagger}) - \tau_B \otimes \rho_E \right\|_1} \leq \sqrt{ \ex{
 \frac{1}{8^n} \sum_{\nu, \nu' \in \{0,1,2,3\}^n, \nu \text{ or } \nu' \neq 0} T_{\nu, \nu'} \cdot \tr\left[ \tr_A[\sigma_{\nu} U_t \tilde{\rho}_{AE} U_t^{\dagger} ] \right] \tr\left[ \tr_A[\sigma_{\nu'} U_t\tilde{\rho}_{AE}U_t^{\dagger} ] \right]
} }.
\label{eq:bound-l1-general}
\end{align}
Observe that this term is a quadratic function of $U_t$ and thus only depends on the second moment operator $\mathrm{M}$ of our distribution over unitary transformations on $A$. Recall that the second moment operator is a super operator acting on operators acting on two copies of $A$. For a random quantum circuit with $t$ gates, the second moment operator is $\mathrm{M}_{\circuit}^t$. We have for any $\nu, \nu'$,
\begin{align}
\ex{ \tr\left[ \tr_A[\sigma_{\nu} U_t \tilde{\rho}_{AE} U_t^{\dagger} ] \tr_A[\sigma_{\nu'} U_t \tilde{\rho}_{AE} U_t^{\dagger} ] \right] } &= \ex{\tr\left[ \tr_{A} [ \sigma_{\nu} U_t \tilde{\rho}_{AE} U_{t}^{\dagger} ] \otimes  \tr_{A'}[\sigma_{\nu'} U_t \tilde{\rho}_{A'E'} U_t^{\dagger} ]  F_{EE'} \right]} \notag \\
&= \tr\left[ \tr_{AA'}[ \sigma_{\nu} \otimes \sigma_{\nu'} (\mathrm{M}^t_{\circuit} \otimes \id_{EE'})[ \tilde{\rho}_{AE} \otimes \tilde{\rho}_{A'E'}] F_{EE'}\right], \label{eq:introduce-m}
\end{align}
where we used in the first equality the fact that $\tr[\omega_E \omega'_E] = \tr[\omega_E \otimes \omega'_{E'} F_{EE'}]$ with $F_{EE'}$ being the swap operator. By expanding the initial state $\tilde{\rho}_{AE}$ in the Pauli basis, we obtain
\begin{align*}
(\mathrm{M}^t_{\circuit} \otimes \id_{EE'})[\tilde{\rho}_{AE} \otimes \tilde{\rho}_{A'E'}]
&= \frac{1}{4^n} \sum_{\mu, \mu' \in \{0,1,2,3\}^n}(\mathrm{M}^t_{\circuit} \otimes \id_{EE'})\left[ \sigma_{\mu} \otimes \tr_{A}[ \sigma_\mu \tilde{\rho}_{AE} ] \otimes \sigma_{\mu'} \otimes \tr_{A'}[ \sigma_{\mu'} \tilde{\rho}_{A'E'} ] \right]\\
&= \frac{1}{4^n} \sum_{\mu, \mu' \in \{0,1,2,3\}^n} \mathrm{M}^t_{\circuit}[\sigma_{\mu} \otimes \sigma_{\mu'}] \otimes  \tr_{A}[ \sigma_\mu \tilde{\rho}_{AE} ] \otimes \tr_{A'}[ \sigma_{\mu'} \tilde{\rho}_{A'E'} ].
\end{align*}
Continuing, we get
\begin{align*}
\ex{\tr\left[ \tr_A[\sigma_{\nu} \tilde{\rho}_{AE}(t) ] \tr_A[\sigma_{\nu'} \tilde{\rho}_{AE}(t) ] \right]}
&= \frac{1}{4^n} \sum_{\mu, \mu' \in \{0,1,2,3\}^n} \tr\left[\sigma_{\nu} \otimes \sigma_{\nu'} \mathrm{M}^t_{\circuit}[\sigma_{\mu} \otimes \sigma_{\mu'}] \right] \otimes  \tr\left[\tr_{A}[ \sigma_\mu \tilde{\rho}_{AE} ] \tr_{A}[ \sigma_{\mu'} \tilde{\rho}_{AE} ] \right].
\end{align*}
Recall that $\frac{1}{4^n}\tr\left[\sigma_{\nu} \otimes \sigma_{\nu'} \mathrm{M}^t_{\circuit}[\sigma_{\mu} \otimes \sigma_{\mu'}] \right] = Q^t(\mu, \nu)$ if $\mu' = \mu$ and $\nu = \nu'$ and $0$ otherwise. The expectation in equation \eqref{eq:bound-l1-general} then becomes
\begin{align}
&\frac{1}{8^n} \sum_{\nu \in \{0,1,2,3\}^n, \nu \neq 0} T_{\nu, \nu} \sum_{\mu \in \{0,1,2,3\}^n} Q^t(\mu, \nu) \tr[\tr_{A}[ \sigma_\mu \tilde{\rho}_{AE} ]^2 ] \notag \\
&= \frac{1}{4^n} \sum_{\nu \in \{0,1,2,3\}^n, \nu \neq 0} \tr[ \tilde{\cT}(\sigma_{\nu})^2] \sum_{\mu \in \{0,1,2,3\}^n, \mu \neq 0} Q^t(\mu, \nu) \tr[\tr_{A}[ \sigma_\mu \tilde{\rho}_{AE} ]^2 ] \notag \\
&= \frac{1}{4^n} \sum_{\mu \in \{0,1,2,3\}^n, \mu \neq 0}  \tr[\tr_{A}[ \sigma_\mu \tilde{\rho}_{AE} ]^2 ] \sum_{\nu \in \{0,1,2,3\}^n, \nu \neq 0} \tr[ \tilde{\cT}(\sigma_{\nu})^2] Q^t(\mu, \nu).
\label{eq:string-to-weight}
\end{align}

The main technical result in this proof is in Theorem \ref{thm:mc-Qconvergence} (which we defer to Section \ref{sec:mc-analysis}), where we obtain a bound of
\begin{equation}
\label{eq:prob-bound-ell-k}
\sum_{\nu \in \{0,1,2,3\}^n, \nu \neq 0} \left| Q^t(\mu, \nu) - p_\delta(\nu) \right| \leq \frac{1}{(3-\eta)^{\ell} \binom{n}{\ell} \poly(n)},
\end{equation}
where $p_{\delta}(\nu) \le \frac{4^{\delta n}}{4^n-1}$ and $|\mu|=\ell$ and for any positive constants $\delta$ and $\eta$ and $t \geq c n \log^2 n$ for some constant $c$ depending on $\delta$ and $\eta$ and the desired polynomial. We have by plugging equation \eqref{eq:prob-bound-ell-k} into \eqref{eq:string-to-weight}, we obtain
\begin{align}
& \ex{\frac{1}{8^n} \sum_{\nu, \nu' \in \{0,1,2,3\}^n, \nu \text{ or } \nu' \neq 0} T_{\nu, \nu'} \cdot \tr\left[ \tr_A[\sigma_{\nu} U_t \tilde{\rho}_{AE}U_t^{\dagger} ] \right] \tr\left[ \tr_A[\sigma_{\nu'} U_t \tilde{\rho}_{AE} U_t^{\dagger} ] \right]} \notag \\
&= \frac{1}{4^n} \sum_{\ell = 1}^n \sum_{\mu: |\mu| = \ell} \tr[\tr_{A}[ \sigma_\mu \tilde{\rho}_{AE} ]^2 ] \sum_{\nu \in \{0,1,2,3\}^n, \nu \neq 0} \tr[ \tilde{\cT}(\sigma_{\nu})^2] \left( p_{\delta}(\nu) + Q^t(\mu, \nu) - p_{\delta}(\nu) \right) \notag \\
&\leq \frac{1}{4^n} \sum_{\mu \neq 0} \tr[\tr_{A}[ \sigma_\mu \tilde{\rho}_{AE} ]^2 ] \sum_{\nu \neq 0} \tr[ \tilde{\cT}(\sigma_{\nu})^2] \frac{4^{\delta n}}{4^n-1} \notag \\
&+ \frac{1}{4^n} \sum_{\ell = 1}^n \sum_{\mu: |\mu| = \ell} \tr[\tr_{A}[ \sigma_\mu \tilde{\rho}_{AE} ]^2 ]  \frac{1}{(3-\eta)^{\ell} \binom{n}{\ell} \poly(n)} \max_{\nu} \tr[\tilde{\cT}(\sigma_{\nu})^2]. \label{eq:prob-ell-k}
%&\leq \frac{1}{4^n} \sum_{k=1}^n \tr[ \tilde{\cT}(\sigma_{k})^2] 4^{\delta n} \frac{\binom{n}{k}3^k}{ 4^n - 1 } \sum_{\nu \neq 0} \tr[\tr_A[\sigma_{\nu} \tilde{\rho}]^2] + \frac{1}{4^n} \sum_{k=1}^{n} \tr[ \tilde{\cT}(\sigma_{k})^2] \sum_{\ell=1}^n \sum_{\nu: |\nu| = \ell} \tr[\tr_A[\sigma_{\nu} \tilde{\rho}]^2] \cdot \frac{1}{(3-\eta)^{\ell} \binom{n}{\ell} \poly(n)}. \label{eq:prob-ell-k}
\end{align}
Let us start by considering the first term. Recall that $\sum_{\mu} \tr[\tr_A[\sigma_{\mu} \tilde{\rho}_{AE}]^2] = 2^n \tr[\tilde{\rho}^2_{AE}]$ and $\frac{1}{8^n} \sum_{\nu} \tr[\tilde{\cT}(\sigma_{\nu})^2] = 2^{-\entHtwo(A|B)_{\tau}}$. As a result,
\begin{align*}
\frac{1}{4^n} \sum_{\mu \neq 0} \tr[\tr_{A}[ \sigma_\mu \tilde{\rho}_{AE} ]^2 ] \sum_{\nu \neq 0} \tr[ \tilde{\cT}(\sigma_{\nu})^2] \frac{4^{\delta n}}{4^n-1}
&= 4^{\delta n} \frac{1}{4^n} \sum_{\nu \neq 0} \tr[ \tilde{\cT}(\sigma_{\nu})^2] \frac{2^n \tr[\tilde{\rho}_{AE}^2] - \tr[\tilde{\rho}_E^2]}{4^n - 1} \\
&\leq 4^{\delta n} \frac{1}{8^n} \sum_{\nu} \tr[ \tilde{\cT}(\sigma_{\nu})^2] \frac{2^n\tr[\tilde{\rho}_{AE}^2] - \tr[\tilde{\rho}_E^2]}{2^n - 1} \\
&\leq 4^{\delta n} 2^{-\entHtwo(A|B)_{\tau}} 2^{-\entHtwo(A|E)_{\rho}}.
\end{align*}
To prove that the second term can be bounded by an inverse polynomial, we use Lemma \ref{lem:level} which is proven in the appendix. It states that
\begin{equation}
\label{eq:level}
\sum_{\nu : |\nu| = \ell} \tr\left[ \tr_A[ \sigma_{\nu} \tilde{\rho}_{AE}]^2 \right] \leq 12n^4 \cdot (3-\eta)^{\ell} \binom{n}{\ell}
\end{equation}
provided $\tr[\tilde{\rho}^2_{AE}] \leq 2^{(1-\delta)n}$. Also, we have for any $\nu \in \{0,1,2,3\}^n$,
\begin{align*}
\tr[ \tilde{\cT}(\sigma_{\nu})^2 ] &= \tr\left[ \cT(\id/2^n)^{-1/2} \cT(\sigma_{\nu}) \cT(\id/2^n)^{-1/2} \cT(\sigma_{\nu}) \right] \\
&\leq \tr[\id \sqrt{2^n} \sigma_{\nu} \id \sqrt{2^n} \sigma_{\nu}] \\
&= 4^n,
\end{align*}
using the monotonicity of the relative entropy of order 2; see e.g., \cite{DFW13}.
Plugging the value of $\eta$ from \eqref{eq:level} into the second term of \eqref{eq:prob-ell-k}, we obtain
\begin{align*}
\frac{1}{4^n} \sum_{\ell = 1}^n \sum_{\mu: |\mu| = \ell} \tr[\tr_{A}[ \sigma_\mu \tilde{\rho}_{AE} ]^2 ]  \frac{1}{(3-\eta)^{\ell} \binom{n}{\ell} \poly(n)} \max_{\nu} \tr[\tilde{\cT}(\sigma_{\nu})^2]
&\leq \frac{1}{4^n} \max_{\nu} \tr[\tilde{\cT}(\sigma_{\nu})^2] \cdot \frac{12 n^5}{\poly(n)} \\
&\leq \frac{1}{\poly(n)},
\end{align*}
by choosing a large enough $c$. Note that in the case where $\tr[\tilde{\rho}^2_{AE}] > 2^{(1-\delta)n}$, the theorem clearly holds because the upper bound is greater than $2$.
\end{proof}

An important example for the map $\cT$ is the partial trace map.
\begin{corollary}
\label{cor:decoupling-partial-trace}
Let $\rho_{AE}$ be an initial arbitrary mixed state on $n$ qubits and $U_t \rho_{AE} U^{\dagger}_{t}$ be the corresponding state after the application of $t$ random two-qubit gates on the $A$ system. Then let $S$ be a random subset of the qubits $\{1, \dots, n\}$ of size $s$.
%Let $\cT$ be that map that chooses a random system of $s$ qubits and erases the rest.
%
%Let $\cT : A \to B$ be a map $\tr[\cT(\sigma_{\nu})^2]$ only depends on the weight of $\nu$. Define $\tau_{AB} = \cT \otimes \id_A (\Phi)$. This can be achieved for example if we apply a random permutation and a random Pauli before applying $\cT$.

Then we have for any constant $\delta > 0$, there exists a constant $c$ such that $t \geq c n \log^2 n$, we have for subset $S$ of size $s$:
\begin{equation}
\label{eq:l1-statement-pt}
\exc{U_t}{ \left\| \tr_{A_{S^c}}\left[U_t\rho_{AE}U_t^{\dagger}\right] - \frac{\id_{A_S}}{2^s} \otimes \rho_{E} \right\|_1} \leq \sqrt{\frac{1}{\poly(n)} + 4^{\delta n} \cdot 2^{2s-n} \cdot 2^{-\entHtwo(A|E)_{\rho}}}.
\end{equation}

\end{corollary}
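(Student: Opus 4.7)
The plan is to view this corollary as a direct specialization of Theorem \ref{thm:decoupling-seq} with $\cT$ chosen as the partial trace over the complement of $S$, namely $\cT(\cdot)=\tr_{A_{S^c}}[\cdot]$, so that $B=A_S$ has dimension $2^s$. Since the right-hand side of Theorem \ref{thm:decoupling-seq} does not depend on $S$, it suffices to establish the bound for an arbitrary fixed subset $S$ of size $s$; the expectation over a random $S$ then follows trivially. The only real task is therefore to identify the two ingredients $\tau_B$ and $\entHtwo(A|B)_\tau$ that appear in \eqref{eq:l1-statement}.

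First I would compute $\tau_{A'B}=\cT\otimes\id_{A'}(\Phi_{AA'})$. Splitting the $A$ register as $A_S\otimes A_{S^c}$ and tracing out $A_{S^c}$ in $\Phi_{AA'}$ yields
\begin{equation*}
\tau_{A'B} \;=\; \Phi_{A'_S B}\;\otimes\;\frac{\id_{A'_{S^c}}}{2^{n-s}},
\end{equation*}
where $\Phi_{A'_S B}$ is the maximally entangled state between $A'_S$ and $B=A_S$. Taking the partial trace over $A'$ gives $\tau_B=\id_{A_S}/2^s$, which matches the expression $\id_{A_S}/2^s$ appearing in \eqref{eq:l1-statement-pt}.

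Next I would compute $\entHtwo(A|B)_\tau$ using its definition \eqref{eq:def-h2}. Because $\tau_B=\id_B/2^s$, we have $\tau_B^{-1/4}=2^{s/4}\id_B$ and hence $\tilde\tau_{A'B}=2^{s/2}\tau_{A'B}$. A direct calculation using $\tr[\Phi_{A'_S B}^2]=1$ and $\tr[(\id_{A'_{S^c}}/2^{n-s})^2]=2^{-(n-s)}$ gives
\begin{equation*}
2^{-\entHtwo(A|B)_\tau}\;=\;\tr[\tilde\tau_{A'B}^2]\;=\;2^{s}\cdot 1\cdot 2^{-(n-s)}\;=\;2^{2s-n}.
\end{equation*}

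Finally I would substitute $\tau_B=\id_{A_S}/2^s$ and $2^{-\entHtwo(A|B)_\tau}=2^{2s-n}$ into \eqref{eq:l1-statement}, which reproduces \eqref{eq:l1-statement-pt} verbatim. There is no substantive obstacle here: the corollary is essentially a bookkeeping exercise specializing the general decoupling theorem, and the only nontrivial computation is the evaluation of $\entHtwo(A|B)_\tau$ for a partial-trace channel, which factorizes cleanly thanks to the product structure of $\tau_{A'B}$.
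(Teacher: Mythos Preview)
Your proposal is correct and follows exactly the same approach as the paper: the paper's proof is a one-liner stating that it suffices to compute $2^{-\entHtwo(A|B)_{\tau}} = 2^{2s-n}$ for the partial-trace map, and you carry out precisely that computation with more explicit detail (including the identification of $\tau_B=\id_{A_S}/2^s$).
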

\begin{proof}
It suffices to compute the entropic quantity for $\cT$. If $\cT$ is the erasure map for all but $s$ qubits, we have
%$\sum_{\nu \in \{0,1,2,3\}^n } \tr[\tilde{\cT}(\sigma_{\nu})^2] = \sum_{\nu \in \{0,1,2,3\}^{S} } \tr[\tilde{\cT}(\sigma_{\nu})^2] = 4^s (2^{s/2} 2^{n-s})^2 2^s$. Thus, 
$2^{-\entHtwo(A|B)_{\tau}} = 2^{2s-n}$.
\end{proof}

%A remark on decoupling with approximate two-designs.
%It is worth noting here that a property of the form $\| \left((\secondmoment_{\circuit}^t - \secondmoment_{\haar}) \otimes \id_{EE'} \right)[\rho_{AE} \otimes \rho_{A'E'}]\|_1 \leq \e$ for a constant $\e$ is not good enough for proving optimal decoupling results, at least when using an argument based on the two-norm. In fact, looking back at \eqref{eq:bound-l1-general} and \eqref{eq:introduce-m}, the term $\frac{1}{8^n} \sum_{\nu} T_{\nu, \nu} = \frac{1}{4^n} \sum_{\nu} \tr[\tilde{\cT}(\sigma_{\nu})^2 ]$ can be exponentially large, for example if $\cT$ erases all but $s > n/2$ qubits. This means that the common definitions of approximate two-designs that only care about the worst case over all operators on the two-copy space might not be well adapted to prove optimal decoupling results. The difference in our analysis is that not all the Pauli strings are treated the same, the idea is to have more constraints on Pauli strings of larger weight.

\subsection{Depth}
\label{sec:parallelization}
We proved in the last section that decoupling can be accomplished using $O(n \log^2 n)$ gates. In this section, we study another complexity measure which is closely related to time: the depth. Gates acting on disjoint qubits are allowed to be executed in parallel. The depth of a circuit with $t$ gates is at most $t$ but it could be much smaller than $t$. In particular, for a random quantum circuit we expect many gates to act on disjoint qubits so that they can be implemented in a number of time steps that can be much smaller than $t$. As mentioned in the preliminaries, to construct the parallelized circuit, one keeps adding gates to the current level until there is a gate that shares a qubit with a previously added gate in that level. In this case, a new level is created and the process continues. In the following proposition, we prove that by parallelizing a random circuit on $n$ qubits having $t$ gates we obtain with high probability a circuit of depth $O(\frac{t}{n} \log n)$.
%A priori, the circuit studied in the previous section has a depth that is as large as the number of gates which is nearly linear. But in general in such a circuit there are many successive gates that are applied on disjoint qubits so they could be actually performed in parallel.
%More precisely, we look at the gates one by one in the order they are applied.
For the purpose of parallelization, the gates can simply be labelled by the two qubits the gate acts upon.
%To construct the parallelized circuit, one keeps adding gates to the current level until there is a gate that shares a qubit with a previously added gate in that level, in which case you create a new level and continue. In the following proposition, we prove that by parallelizing a random circuit on $n$ qubits having $t$ gates we obtain with high probability a circuit of depth $O(\frac{t}{n} \log n)$.

\begin{proposition}
\label{prop:parallelization}
Consider a random sequential circuit composed of $t$ gates where $t$ is a polynomial in $n$. Then parallelize the circuit as described above. Except with probability $\frac{1}{\poly(n)}$, the resulting circuit has depth at most $O\left(\frac{t}{n} \log n\right)$.
In other words, in the model \textsc{rqc}($c n \log^2 n$, $c' \log^3 n$), discarding a circuit only happens with probability $\frac{1}{\poly(n)}$ provided the constants $c$ and $c'$ are appropriately chosen.
\end{proposition}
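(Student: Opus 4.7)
The plan is to reduce the depth of the parallelized circuit to a combinatorial quantity on the random sequence of gate pairs and then bound it via a first-moment argument. Under the placement rule of Section~\ref{sec:prelim-rqc}, each gate is assigned to the earliest level where it is compatible with all previously placed gates, so gate $g_i$ ends up on level $1+\max\{\ell_j : j<i,\ g_j \text{ and } g_i \text{ share a qubit}\}$. Consequently the depth equals the length of the longest ``chain'' $g_{i_1},\ldots,g_{i_k}$ with $i_1<\cdots<i_k$ such that consecutive gates share at least one qubit, and the proof reduces to a tail bound on this maximum chain length in the random circuit.

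The key fact is that for any \emph{fixed} pair $q\in\binom{[n]}{2}$, the probability that a fresh uniform pair $q'$ intersects $q$ is
\[
p\ :=\ \frac{\binom{n}{2}-\binom{n-2}{2}}{\binom{n}{2}}\ =\ \frac{4n-6}{n(n-1)}\ \leq\ \frac{4}{n},
\]
and this value is the same for every $q$. Since the gate pairs in the circuit are i.i.d., iterating this pair-independent conditional probability shows that for any fixed position tuple $i_1<\cdots<i_k$ the probability those gates form a chain equals exactly $p^{k-1}$. A union bound over the $\binom{t}{k}$ tuples then gives
\[
\Pr\bigl[\mathrm{depth}\geq k\bigr]\ \leq\ \binom{t}{k}\,p^{k-1}\ \leq\ \frac{n}{4}\left(\frac{4et}{kn}\right)^{k},
\]
using the standard bound $\binom{t}{k}\leq(et/k)^{k}$.

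Setting $k = K(t/n)\log n$ for a sufficiently large constant $K$ (any $K\geq 8e$ works) makes the inner ratio $4et/(kn) = 4e/(K\log n)$ at most $1/(2\log n)$. Since $t$ is polynomial in $n$, one has $k=\Omega(\log n)$, so the right-hand side is super-polynomially small in $n$ and in particular at most $1/\poly(n)$. This establishes depth $O((t/n)\log n)$ with probability $1-1/\poly(n)$, which specializes to depth $O(\log^{3} n)$ when $t = cn\log^{2} n$, giving the stated guarantee for the model \textsc{rqc}$(cn\log^{2} n,\,c'\log^{3} n)$. The main obstacle I anticipate is carefully pinning down the equivalence ``depth $=$ longest chain'' under the parallelization rule: a naive reading in which each level is forced to be a contiguous run of gates would give level size $\Theta(\sqrt{n})$ and depth $\Theta(t/\sqrt{n})$, which is too large for the claim, so one must interpret the rule as placing each gate at the earliest compatible level. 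Beyond that subtlety, the probabilistic estimate is a routine first-moment calculation greatly simplified by the pair-independence of $p$.
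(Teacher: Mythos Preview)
Your approach is correct and follows the same first-moment strategy as the paper: both bound $\Pr[\text{depth}\ge k]$ by a union bound over increasing $k$-tuples that form a chain (consecutive gates sharing a qubit). The execution differs in one useful way. The paper bounds the chain probability for a fixed $k$-tuple by $(2/n)^{k-1}k!$, via an induction that at step $j$ only asks whether $G_{i_{j+1}}$ meets the \emph{union} $G_{i_1}\cup\cdots\cup G_{i_j}$ (at most $j{+}1$ qubits); because this estimate combined with $\binom{t}{k}$ blows up when $t\gg n$, the paper first chops the circuit into $4t/n$ contiguous blocks of $m=n/4$ gates, bounds the depth of each block by $O(\log n)$ separately, and sums. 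Your observation that each successive collision has conditional probability \emph{exactly} $p\le 4/n$, independent of the previous pair, yields the sharper $p^{k-1}$, which together with $\binom{t}{k}\le(et/k)^k$ is already small at $k=\Theta((t/n)\log n)$ --- no block decomposition needed. So your argument is a cleaner version of the same idea.

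Your caveat about the parallelization rule is well taken and not just a formality: under the literal contiguous-level rule described in the text, depth is \emph{not} the longest chain (e.g.\ the four gates $\{1,2\},\{1,3\},\{4,5\},\{4,6\}$ give depth $3$ but longest chain $2$), and a birthday argument indeed yields typical depth $\Theta(t/\sqrt n)$. The paper's own proof tacitly relies on the same chain characterization (through the identity $\Pr[\text{depth}\ge d]=\Pr[\exists\text{ length-}d\text{ subsequence of depth }d]$), so both arguments really require reading the rule as earliest-compatible-level, under which depth equals the longest chain by the standard greedy-scheduling argument you sketch.
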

In order to prove this lemma, we use the following calculation:
\begin{lemma}
Let $G_1, \dots, G_k$ be a sequence of independent and random gates $G_i \in \binom{n}{2}$, then the probability that $G_1, \dots, G_k$ form a circuit of depth $k$ is at most $\left(\frac{2}{n} \right)^{k-1} \cdot k !$
\end{lemma}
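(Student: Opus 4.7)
The plan is to first characterize the event ``$G_1,\ldots,G_k$ form a circuit of depth $k$'' as a purely local condition on consecutive gates. Under the parallelization rule recalled in Section~\ref{sec:prelim-rqc}, a new level is opened precisely when the gate being placed shares a qubit with some gate in the \emph{current} (most recently opened) level. For the overall depth to equal $k$ when exactly $k$ gates are placed, each gate must end up as the sole occupant of its own level, so when $G_i$ is placed for $i\geq 2$, the current level contains $G_{i-1}$ alone. Hence the depth-$k$ event is equivalent to
\[
G_i \cap G_{i-1} \neq \emptyset \quad \text{for every } i=2,\ldots,k.
\]
I would verify this equivalence formally by induction on $i$; the only potential subtlety is confirming that the rule really never inspects earlier levels, so that non-consecutive gates can be ignored.

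Having reduced to this local condition, independence of the $G_i$'s factorizes the probability into
\[
\pr{\text{depth}=k}\;=\;\prod_{i=2}^{k}\pr{G_i\cap G_{i-1}\neq\emptyset}.
\]
A fixed pair $G_{i-1}=\{a,b\}$ is met (i.e., shares a qubit) by exactly $2(n-1)-1=2n-3$ of the $\binom{n}{2}$ possible pairs, by inclusion--exclusion on $a$ and $b$. Hence each factor equals $(2n-3)/\binom{n}{2}=2(2n-3)/(n(n-1))$, which is bounded by $4/n$ via the routine check $(4n-6)n \leq 4(n^2-n)$. Taking the $(k-1)$-fold product yields $\pr{\text{depth}=k}\leq (4/n)^{k-1}$.

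Finally, to recover the form in the statement, I would write $(4/n)^{k-1}=2^{k-1}\,(2/n)^{k-1}$ and invoke the elementary inequality $2^{k-1}\leq k!$, valid for all $k\geq 1$ by a one-line induction, to conclude $\pr{\text{depth}=k}\leq k!\,(2/n)^{k-1}$. Nothing here is delicate; the one place where care is genuinely needed is the initial reduction, since it is the algorithm's ``current level only'' policy that collapses the depth condition to a nearest-neighbor condition on the sequence $G_1,\ldots,G_k$ rather than something more global involving all earlier levels.
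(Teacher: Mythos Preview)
Your argument is correct, and it proceeds along a genuinely different (and tighter) route than the paper's. The paper argues by induction on $k$: given that $G_1,\dots,G_k$ already form a depth-$k$ circuit, it relaxes the event ``$G_{k+1}$ opens a new level'' to the larger event ``$G_{k+1}$ intersects $G_1\cup\cdots\cup G_k$'', observes that this union contains at most $k+1$ qubits (since each $G_i$ adds at most one new qubit to the chain), and bounds the conditional probability by $2(k+1)/n$. Multiplying these factors produces the $k!$ directly. You instead exploit the parallelization rule to identify the depth-$k$ event \emph{exactly} with the chain condition $G_i\cap G_{i-1}\neq\emptyset$ for all $i\geq 2$, which yields the sharper bound $(4/n)^{k-1}$; you only introduce the slack $2^{k-1}\leq k!$ at the very end to match the stated form. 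So your approach is cleaner and actually gives more.

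One phrasing point worth tightening: the events $\{G_i\cap G_{i-1}\neq\emptyset\}$ are not jointly independent (consecutive ones share the variable $G_{i-1}$), so ``independence of the $G_i$'s factorizes the probability'' is slightly loose as stated. The factorization is justified via the chain rule: conditioning on $G_1,\dots,G_{i-1}$, the gate $G_i$ is fresh, and $\pr{G_i\cap\{a,b\}\neq\emptyset}=(2n-3)/\binom{n}{2}$ does not depend on the realized pair $\{a,b\}=G_{i-1}$, so each conditional factor equals the constant $p$ and the product is $p^{k-1}$. This is surely what you meant, and the conclusion is unaffected.
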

\begin{proof}
We prove this by induction on $k$. For $k = 2$, we may assume $G_1 = (1,2)$, in which case $\pr{G_2 \cap \{1,2\} \neq \emptyset} \leq 4/n$.
Now the probability that $G_1, \dots, G_{k+1}$ form a circuit of depth $k+1$ can be bounded by
\[
\pr{G_1, \dots, G_k \text{ form a circuit of depth $k$ } } \cdot \pr{G_{k+1} \cap \left( G_1 \cup \cdot \cup G_k \right) \neq \emptyset | G_1, \dots, G_k \text{ form a circuit of depth $k$ } }.
\]
Now it suffices to see that, conditioned on $\event{G_1, \dots, G_k \text{ form a circuit of depth $k$}}$, the number of nodes occupied by $G_1, \dots, G_k$ is at most $k+1$. Thus, using this fact and the induction hypothesis, we obtain
a bound of
\[
\left(\frac{2}{n}\right)^{k-1} k! \cdot 2 \cdot \frac{k+1}{n} = \left( \frac{2}{n} \right)^{k} (k+1)! \ ,
\]
which conclude the proof.
\end{proof}

\begin{proof}[of Proposition \ref{prop:parallelization}]
Suppose we apply $m$ gates for some $m$ to be chosen later.
\begin{align*}
\pr{G_1, \dots, G_m \text{ form a circuit of depth at least $d$ } }
&= \pr{ \exists (i_1, \dots, i_d) \in [m]^d : G_{i_1}, \cdots, G_{i_d} \text{ form a circuit of depth d} } \\
&\leq \binom{m}{d}  \left(\frac{2}{n} \right)^{d-1} \cdot d ! \\
&\leq m^d \cdot \left(\frac{2}{n} \right)^{d-1}.
\end{align*}
Now we can fix $m = n/4$ and $d = c \log n + 1$ for some constant $c$ to be chosen depending on the desired probability bound, then we have
\begin{align*}
\pr{G_1, \dots, G_m \text{ form a circuit of depth at least $d$ } }
&\leq m \cdot \left(\frac{2m}{n}\right)^{d-1} \leq n^{-c+1}.
\end{align*}
This proves that every set of $n/4$ gates generates a circuit of depth at most $c\log n + 1$ with probability at least $1-1/n^{-c+1}$, and so if we have $4t/n$ such sets, we get depth at most $4t/n(c \log n + 1)$ with probability at least $1-4t/n^{c}$.
\end{proof}

The next corollary follows directly from Theorem \ref{thm:decoupling-seq} and Proposition \ref{prop:parallelization}.
\begin{corollary}
In the setting of Theorem \ref{thm:decoupling-seq} and if $U_t$ is the unitary computed by a random quantum circuit chosen according to the model \textsc{rqc($cn \log^2 n$,$c' \log^3n$)}, then
\begin{equation}
\label{eq:decoupling-par}
\ex{ \left\| \cT(U_t\rho_{AE}U_t^{\dagger}) - \tau_B \otimes \rho_{E} \right\|_1} \leq \sqrt{\frac{1}{\poly(n)} + 4^{\delta n} \cdot 2^{-\entHtwo(A|B)_{\tau}} \cdot 2^{-\entHtwo(A|E)_{\rho}}}.
\end{equation}
\end{corollary}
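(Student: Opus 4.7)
The plan is to derive this corollary by combining Theorem \ref{thm:decoupling-seq}, which bounds the expected trace distance for the sequential model \textsc{rqc}$(cn\log^2 n)$, with Proposition \ref{prop:parallelization}, which shows that the rejection event in the parallelized model occurs with probability only $1/\poly(n)$. Since by construction \textsc{rqc}$(t,d)$ is exactly the sequential distribution conditioned on the parallelized depth being at most $d$, the corollary should follow from a short conditioning argument.

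More precisely, I would let $E$ denote the event that a sequential circuit drawn from \textsc{rqc}$(cn\log^2 n)$ parallelizes to depth at most $c'\log^3 n$. Proposition \ref{prop:parallelization} yields $\prw(E^c) \leq 1/\poly(n)$ for $c, c'$ chosen appropriately, and by definition the distribution \textsc{rqc}$(cn\log^2 n, c'\log^3 n)$ is the conditional law of \textsc{rqc}$(cn\log^2 n)$ given $E$. Writing $X = \|\cT(U_t\rho_{AE}U_t^\dagger) - \tau_B \otimes \rho_E\|_1$ and letting $B$ denote the right-hand side of \eqref{eq:l1-statement}, Theorem \ref{thm:decoupling-seq} gives $\exc{\text{seq}}{X} \leq B$. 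Decomposing the sequential expectation by total expectation,
\[
\exc{\text{seq}}{X} = \prw(E) \cdot \exc{\text{par}}{X} + \prw(E^c) \cdot \exc{\text{seq}}{X \mid E^c},
\]
and using $X \leq 2$ to bound the second term, I obtain $\exc{\text{par}}{X} \leq B/\prw(E) \leq B/(1 - 1/\poly(n)) \leq B + O(1/\poly(n))$.

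The remaining step is cosmetic: I need to fold the additive slack $O(1/\poly(n))$ back into the $1/\poly(n)$ term sitting inside the square root of \eqref{eq:decoupling-par}. If $B \geq 2$ the statement is trivial from $X \leq 2$; otherwise $B \leq 2$, and writing $B = \sqrt{a + b}$ with $a = 1/\poly(n)$ and $b = 4^{\delta n} \cdot 2^{-\entHtwo(A|B)_\tau} \cdot 2^{-\entHtwo(A|E)_\rho}$, the elementary inequality $\sqrt{a+b} + c \leq \sqrt{a' + b}$ whenever $a' \geq a + 2c\sqrt{a+b} + c^2$ lets me replace $a$ by a slightly larger $1/\poly(n)$ term when $c = 1/\poly(n)$, since $\sqrt{a+b} \leq 2$ makes the correction itself of order $1/\poly(n)$. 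Adjusting the hidden polynomial degree completes the proof.

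The main obstacle here is not mathematical — the argument above is essentially bookkeeping — but rather verifying that the constants $c$ and $c'$ and the various hidden polynomial degrees can be chosen consistently so that Theorem \ref{thm:decoupling-seq} and Proposition \ref{prop:parallelization} apply simultaneously, and so that the resulting $1/\poly(n)$ after absorbing the conditioning loss is still at least as strong as the target polynomial appearing in \eqref{eq:decoupling-par}.
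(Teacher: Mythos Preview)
Your proposal is correct and follows essentially the same route as the paper: both combine Theorem~\ref{thm:decoupling-seq} with Proposition~\ref{prop:parallelization} via a conditioning argument on the event that the parallelized depth is at most $c'\log^3 n$, picking up an additive $1/\poly(n)$ that is then absorbed into the square root. If anything, your version is slightly more careful than the paper's, which writes the conditional expectation as $\exc{\textsc{rqc}(t)}{\1_{\depth(U_t)\le d}\,X}$ without the normalizing factor $1/\prw(E)$ and does not spell out the step of pushing the additive error inside the radical.
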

\begin{proof}
We write $\depth(U_t)$ for the depth of the circuit obtained by parallelizing the circuit defining $U_t$. Let $t = cn \log^2 n$ and $d = c' \log^3n$. We have
\begin{align*}
\exc{\text{\textsc{rqc}}(t,d)}{\left\| \cT(U_t\rho_{AE}U_t^{\dagger}) - \tau_B \otimes \rho_{E} \right\|_1} &= \exc{\text{\textsc{rqc}}(t)}{\1_{\depth(U_t) \leq d}\left\| \cT(U_t\rho_{AE}U_t^{\dagger}) - \tau_B \otimes \rho_{E} \right\|_1} \\
&\leq \exc{\text{\textsc{rqc}}(t)}{\left\| \cT(U_t\rho_{AE}U_t^{\dagger}) - \tau_B \otimes \rho_{E} \right\|_1} + \ex{(1-\1_{\depth(U_t) \leq c' \log^3 n})} \\
&\leq \exc{\text{\textsc{rqc}}(t)}{\left\| \cT(U_t\rho_{AE}U_t^{\dagger}) - \tau_B \otimes \rho_{E} \right\|_1} + \frac{1}{\poly(n)}.
\end{align*}
\end{proof}

%
%
%
% Markov chain analysis
%
%
%

\section{Analysis of the random walk over Pauli operators}
\label{sec:mc-analysis}

This section is devoted to the analysis of the Markov chain $Q$ over strings $\{0,1,2,3\}^n$ introduced in \eqref{eq:markov-chain-q}. The property we study is similar to the mixing time but differing in two ways. First, instead of considering the distance between  the distribution $Q^t(\mu, .)$ obtained after $t$ steps of the Markov chain and the stationary distribution $p_Q$, we can replace $p_Q$ by any distribution that has the property $p \leq 2^{\delta n} p_Q$. In other words, we can compute the distance to any distribution $p$ that has a small max-entropy relative to $p_Q$, i.e., $\textrm{D}_{\max}(p, p_Q) \leq \delta n$. 
Second, the bound we obtain on the distance depends on the initial state $\mu$.

\begin{theorem}
\label{thm:mc-Qconvergence}
%Let $X_t(\ell)$ be the random variable representing the position of the random walk starting at $\ell$ after $t$ steps.
%Let $P$ be the Markov chain transition matrix defined in \eqref{eq:def-P}. For any constants $\delta \in (0,1/16), \eta \in (0,1)$, there exists a constant $c$ such that for $t \geq c n \log^2 n$ and all integers  $1 \leq \ell \leq n$ and $1 \leq k \leq n$, we have for large enough $n$
%\[
%P^t(\ell, k) \leq 4^{\delta n} \cdot \frac{\binom{n}{k} 3^k}{4^n - 1} + \frac{1}{(3-\eta)^{\ell} \binom{n}{\ell}}\frac{1}{\poly(n)}.
%\frac{2^{(f \log 3 + h(f))n}}{\binom{n}{n/2} 3^{n/2}} + \frac{1}{2^{\ell} \binom{n}{\ell}}\frac{1}{\poly(n)}.
%\]
Let $Q$ be the Markov chain over Pauli strings defined in \eqref{eq:markov-chain-q}. 
%with fixed point $\pi(\nu)=\frac{1}{4^n-1},$ and let $p_\delta$ be a distribution over Pauli strings satisfying,
%\[p_\delta(\nu)\le 4^{\delta n} \pi(\nu), ~\forall \nu.\]
For any constants $\delta \in (0,1/16), \eta \in (0,1)$, there exists a constant $c$ such that for $t \geq c n \log^2 n$,  and all Pauli strings $\sigma_\mu$ of weight $\ell$, and large enough $n$,  there exists a possible subnormalized distribution $p_{\delta}$ on strings $\{0,1,2,3\}^n$ such that for all $\nu$, 
\[
p_{\delta}(\nu) \leq \frac{16^{\delta n}}{4^n-1}
\] 
and 
\[
\sum_{\nu \in \{0,1,2,3\}^n, \nu \neq 0} \left| Q^t(\mu,\nu)- p_\delta(\nu) \right|  \le  \frac{1}{(3-\eta)^\ell \binom{n}{\ell}} \frac{1}{\poly(n)}.\]

\end{theorem}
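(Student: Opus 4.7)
The plan is to exploit the rich symmetry of $Q$. Because a Haar-random two-qubit gate maps any non-identity Pauli to the uniform mixture over all $15$ non-identity two-qubit Paulis, $Q$ commutes with the wreath-product group $\mfS_n\ltimes(\mfS_3)^n$ (qubit permutations together with independent relabellings of $\{\sigma_1,\sigma_2,\sigma_3\}$ at each qubit). Under this group, the orbit of any Pauli is labelled by its weight, so marginalising $Q^t(\mu,\cdot)$ over weight classes yields a well-defined birth-death chain $R$ on $\{1,\dots,n\}$ with rates $p_+(\ell)=\frac{6\ell(n-\ell)}{5n(n-1)}$, $p_-(\ell)=\frac{2\ell(\ell-1)}{5n(n-1)}$, and stationary measure $\pi_W(k)\propto 3^k\binom{n}{k}$. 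The argument splits into (i) bounding the weight distribution $R^t(\ell,\cdot)$, and (ii) showing adequate equidistribution within each weight class, so that $Q^t(\mu,\nu)$ is close to $R^t(\ell,|\nu|)/(3^{|\nu|}\binom{n}{|\nu|})$.

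For (i), I would combine a gambler's-ruin estimate (the appendix announces a suitable generalisation) with a logarithmic doubling argument. The ratio $p_-(j)/p_+(j-1)=\frac{j}{3(n-j+1)}$ controls the downward tail of $R$: telescoping these ratios shows that the probability of $R$ ever visiting weight $j$ from weight $\ell$ is $\lesssim\binom{n}{j}3^{j}/(\binom{n}{\ell}3^{\ell})$ in the drift-dominated regime $\ell,j<3n/4$. To make this effective within $t=cn\log^2 n$ steps I would partition $t$ into $\Theta(\log n)$ epochs of $\Theta(n\log n)$ steps; in each epoch the positive drift $\ex{\Delta\ell}\asymp\ell(3n-4\ell)/n^2$, together with a Chernoff bound on the ``useful'' gates (those acting on the current support, arriving at rate $\Theta(\ell/n)$ per step), forces the weight at least to double with failure probability $n^{-\Omega(1)}$, until the bulk near $3n/4$ is reached.

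For (ii), the conditional law of $\nu$ within a fixed weight class may genuinely vary with the joint orbit of $(\mu,\nu)$, so I would adapt the HL09 spectral-gap analysis block-by-block under the $\mfS_n\ltimes(\mfS_3)^n$ symmetry decomposition to show that in $O(n\log n)$ additional steps the chain equidistributes across each stratum up to a multiplicative factor $16^{\delta n}$ in $\ell^\infty$. The $\delta n$-slack is precisely what lets one bypass the $O(n^2)$ full max-mixing barrier. Choosing $p_\delta(\nu):=\min\{Q^t(\mu,\nu),\,16^{\delta n}/(4^n-1)\}$, the L1 residual reduces to $\sum_k\bigl(R^t(\ell,k)-16^{\delta n}\pi_W(k)\bigr)_+$ plus a subpolynomial intra-class error. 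In the bulk $3^k\binom{n}{k}\asymp 4^n/\sqrt{n}$, so the cap dominates and no L1 excess survives there; the remaining contribution comes entirely from the low-weight tail bounded in (i).

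The hardest step is extracting the sharp $\ell$-dependence $(3-\eta)^{-\ell}\binom{n}{\ell}^{-1}$. A naive spectral-gap $\chi^2$-to-$\ell^1$ transfer loses a factor $\sqrt{4^n-1}$ and is far too weak for a polylog-depth target, so the downward tail has to be controlled by a direct trajectory argument via the gambler's-ruin product, cleanly separating the $\binom{n}{\ell}^{-1}$ factor (from the combinatorial volume of weight-$\ell$ configurations) from the geometric $(3-\eta)^{-\ell}$ factor (from the drift). The $\eta$-slack absorbs the finite-time corrections in evaluating the gambler's-ruin product over $\Theta(n)$ steps as well as the subpolynomial loss from the intra-weight-class equidistribution step.
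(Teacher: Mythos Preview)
Your two-phase architecture (weight chain, then intra-weight-class equidistribution) and your identification of the gambler's-ruin product as the source of the sharp $\ell$-dependence both match the paper. The specific mechanisms you propose in each phase, however, differ from the paper's, and in phase (ii) there is a real obstruction.

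For (i), your epoch-doubling sketch with per-epoch failure $n^{-\Omega(1)}$ is not strong enough: for $\ell=\Theta(n)$ the target $(3-\eta)^{-\ell}\binom{n}{\ell}^{-1}$ is \emph{exponentially} small in $n$, so a merely polynomial failure probability cannot close the bound. The paper does not argue by doubling. It passes to the accelerated (non-lazy) chain $\{Y_i\}$ with hitting time $S$, writes $T_{r_-}=S+W_1+\cdots+W_S$ with geometric waiting times $W_i$, and then \emph{conditions on the minimum} $M=\min_{i\le S}Y_i$. The gambler's-ruin lemma, iterated, gives $\pr{M\le m}\le (3(1-8\eta))^{-\ell}\binom{n}{\ell}^{-1}(3n)^m$; an explicit moment-generating-function bound on $\sum W_i$ given $M=m$ contributes a factor $e^{-tm/(5n)}$ times terms polynomial in $n^m$. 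Summing over $m$ produces a geometric series that yields the stated $(3-\eta)^{-\ell}\binom{n}{\ell}^{-1}/\poly(n)$. Your gambler's-ruin telescoping is precisely this $\pr{M\le m}$ step, but the combination with a waiting-time bound conditioned on $M$ is what replaces (and is needed instead of) the doubling argument.

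For (ii), ``adapt HL09 spectral gap block-by-block'' will not reach $O(n\log n)$: the gap is $\Theta(1/n)$, and any $\ell^\infty$ transfer pays $1/\sqrt{\pi_{\min}}\asymp 2^{\Theta(n)}$, so even with the $16^{\delta n}$ slack you need $t=\Theta(n^2)$. The paper bypasses spectral methods entirely via a structural decomposition: it writes $Q=\tfrac{2}{5}\tilde R+\tfrac{3}{5}\tilde Q$ with $\tilde R$ weight-preserving (random transposition plus local $\{1,2,3\}$-relabelling), proves the commutation $\tilde R\tilde Q=\tilde Q\tilde R$, and hence realises $Q^t$ as first running $\tilde Q$ for $T_1\sim\mathrm{Bin}(t,3/5)$ steps (weight-chain phase) and then $\tilde R$ for $t-T_1$ steps. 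The $\tilde R$-phase is analysed not spectrally but through the \emph{intersection size} $I_s=|\supp(Z_s)\cap\supp(Z^w)|$, itself a birth--death chain on $\{0,\dots,k\}$ with stationary law $\binom{k}{k'}\binom{n-k}{k-k'}/\binom{n}{k}$; the same ``divide by $\pi(r)$'' stationarity trick used for the weight chain, together with the $\mfS_n\ltimes\mfS_3^n$ invariance counting, then yields the required $\ell^\infty$ bound with only the $16^{\delta n}$ loss. Finally, the paper takes $p_\delta(\nu)=\pr{Z_t(\mu)=\nu,\ T_{r_-}<t}$ rather than your $\min\{Q^t(\mu,\nu),16^{\delta n}/(4^n-1)\}$; this makes the $\ell^1$ bound immediate from $\pr{T_{r_-}\ge t}$ and concentrates all remaining work in the $\ell^\infty$ estimate.
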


We first prove a similar result for a  Markov chain which acts only on the weights of the Pauli strings. More precisely, we define $P(\ell, k) = \sum_{\nu: |\nu| = k} Q(\mu, \nu)$ where $\mu$ is an arbitrary string with weight $\ell$. Note that this definition is independent of the choice of $\mu$. This follows from the fact that $Q(\pi(\mu), \pi(\nu)) = Q(\mu, \nu)$ for any permutation $\pi \in \mfS_n$ of the $n$ qubits, and also $Q(\gamma(\mu), \gamma(\nu)) = Q(\mu, \nu)$, where $\gamma \in \mfS_3^{\times n}$ is a relabeling of the operators $\{1,2,3\}$. We have
\begin{equation}
\label{eq:def-P}
P(\ell,k) = \left\{
\begin{array}{cc}
1- \frac{2\ell(3n-2\ell-1)}{5n(n-1)} & \text{ if } k=\ell \\
\frac{2\ell(\ell-1)}{5n(n-1)} & \text{ if } k = \ell - 1\\
\frac{6\ell(n-\ell)}{5n(n-1)} & \text{ if } k = \ell + 1\\
0 & \text{ otherwise.}
\end{array} \right.
\end{equation}
We refer the reader to \cite{HL09} for more details on how to derive the parameters of this Markov chain. In fact, \cite{HL09} study the mixing time of this Markov chain. Here, we need to analyze a slightly different property: starting at some point $\ell$, what is the probability that after $t$ steps the random walk ends up in a point $k$? One can obtain bounds on this probability using the mixing time but these bounds only give something useful for our setting if $t = \Omega(n^2)$. So we need to improve the analysis of \cite{HL09} and compute the desired probability directly.

\begin{theorem}
\label{thm:mc-Pconvergence}
Let $P$ be the Markov chain transition matrix defined in \eqref{eq:def-P}. For any constants $\delta \in (0,1/16), \eta \in (0,1)$, there exists a constant $c$ such that for $t \geq c n \log^2 n$ and all integers  $1 \leq \ell \leq n$ and $1 \leq k \leq n$, we have for large enough $n$
\[
P^t(\ell, k) \leq 4^{\delta n} \cdot \frac{\binom{n}{k} 3^k}{4^n - 1} + \frac{1}{(3-\eta)^{\ell} \binom{n}{\ell}}\frac{1}{\poly(n)}.
\]
\end{theorem}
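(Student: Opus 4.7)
The chain $P$ is a reversible birth-death chain on $\{1,\ldots,n\}$ with stationary distribution $\pi(k) = \binom{n}{k} 3^k/(4^n-1)$, concentrated near $k \approx 3n/4$. The ratio $P(k,k+1)/P(k,k-1) = 3(n-k)/(k-1)$ shows a strong upward drift at low weights (where the total moving rate $P(k,k+1)+P(k,k-1)$ is $\asymp k/n$) and $\Theta(1)$ transitions in the bulk. My plan is to fix a threshold $L=\alpha n$ for small $\alpha\in(0,3/4)$, let $\tau_L=\inf\{s:X_s\geq L\}$, and decompose via the strong Markov property:
\[
P^t(\ell,k) \;=\; \Pr_\ell[\tau_L>t,\,X_t=k] \;+\; \sum_{s\leq t}\sum_{m\geq L}\Pr_\ell[\tau_L=s,\,X_s=m]\,P^{t-s}(m,k).
\]
The second (bulk) term will be bounded by $4^{\delta n}\pi(k)$ via a mixing argument, and the first (escape-failure) term will account for the error $\epsilon(\ell,n):=1/((3-\eta)^\ell\binom{n}{\ell}\poly(n))$.

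For the escape phase, I would show $\Pr_\ell[\tau_L>t_1]\leq \epsilon(\ell,n)$ for $t_1 = C_1 n \log n$. Writing $\tau_L = \sum_{k=\ell}^{L-1} \tau_{k\to k+1}$, each increment is stochastically dominated by a Geometric random variable of rate $p_k\asymp k/n$ (since the downward probability is negligible at low weights). A Chernoff-type tail bound on this sum contributes the $1/\poly(n)$ factor in $\epsilon$. The sharper $\ell$-dependent factor $(3-\eta)^{-\ell}\binom{n}{\ell}^{-1}$ comes from bounding the probability of a downward excursion during the escape: by a gambler's-ruin estimate (presumably supplied by the generalized gambler's-ruin lemma advertised in the appendix), the probability that the walk started at $\ell$ ever descends to weight $\ell-r$ before reaching $L$ is at most $\prod_{j=\ell-r+1}^{\ell} \frac{j-1}{3(n-j)} \approx \frac{(\ell-1)!/(\ell-r-1)!}{(3n)^r}$, which combines against $\binom{n}{\ell}$ to produce exactly the claimed form after absorbing a small error into $\eta$.

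For the bulk-mixing phase, starting from any $m\geq L$, I need $P^{t_2}(m,k)\leq 4^{\delta n}\pi(k)$ with $t_2\geq cn\log^2 n$. The key observation is that the $4^{\delta n}$ slack is enormous: I only need a very coarse upper bound (not true total-variation mixing), ruling out any single weight class being over-weighted by more than an exponential factor relative to $\pi$. In the bulk, transition rates are $\Theta(1)$, so the walk diffuses across $\Omega(\sqrt{n})$ weights per $O(1)$ time, and iterating an ``interval-doubling'' spreading argument $O(\log n)$ times distributes the mass across the whole bulk. The main obstacle is precisely this phase: off-the-shelf spectral gap bounds give mixing times $\Theta(n\log(1/\pi_{\min})) = \Theta(n^2)$, which is useless here. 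The plan is therefore to adapt the finer analysis of \cite{HL09}, chunking the walk into $O(\log n)$ blocks of $O(n\log n)$ steps, proving a per-block ``flattening'' lemma that reduces pointwise overshoot of $\pi$ by a constant factor, and absorbing the accumulated multiplicative errors into the generous $4^{\delta n}$ slack afforded by the theorem's statement.
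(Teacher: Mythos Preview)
Your decomposition via a hitting time is the right skeleton, and your escape-phase sketch (gambler's ruin for downward excursions giving the $(3-\eta)^{-\ell}\binom{n}{\ell}^{-1}$ factor, plus tail bounds on waiting times) is on track. But there is a genuine gap in the bulk phase, and it stems from your choice of threshold.

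You place $L=\alpha n$ with $\alpha$ small, which forces you to prove $P^{t_2}(m,k)\le 4^{\delta n}\pi(k)$ starting from a state $m$ with $\pi(m)$ exponentially tiny. You correctly note that spectral-gap methods give only $O(n^2)$ here, and then propose an ad hoc ``flattening'' argument in $O(\log n)$ blocks. This part is not justified and would be hard to make work. The paper bypasses it entirely with a one-line observation you are missing: since $\pi$ is stationary and every term is nonnegative,
\[
\pi(r)\,P^t(r,k)\;\le\;\sum_{\ell}\pi(\ell)P^t(\ell,k)\;=\;\pi(k),
\qquad\text{so}\qquad
P^t(r,k)\;\le\;\frac{\pi(k)}{\pi(r)}.
\]
This gives $P^t(r,k)\le 4^{\delta n}\pi(k)$ \emph{for every $t\ge 1$} as soon as $\pi(r)\ge 4^{-\delta n}$, which forces the threshold to sit near the mode: $r_-=(3/4-\delta)n$ (and symmetrically $r_+=(3/4+\delta)n$). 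No mixing analysis is needed at all for the bulk.

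With the threshold at $(3/4-\delta)n$, all the work moves into the escape phase, and this is where the $cn\log^2 n$ actually enters (not the bulk, as you guessed). Your estimate $t_1=C_1 n\log n$ is too optimistic. The paper passes to the accelerated walk $\{Y_i\}$, bounds the number of jumps $S$ by a Chernoff argument, and then controls the total waiting time $W_1+\cdots+W_S$ under a good event $\eventfont{H}=\{\text{state }x\text{ visited at most }\gamma x/\mu\text{ times}\}$. Showing $\Pr[\eventfont{H}^c]$ is small enough to preserve the $(3-\eta)^{-\ell}\binom{n}{\ell}^{-1}$ factor requires taking $\gamma\asymp\log n$; the MGF of the waiting times then picks up a factor $e^{\Theta(\gamma m\log n)}=e^{\Theta(m\log^2 n)}$, and killing it with $e^{-tm/(5n)}$ is exactly what forces $t\gtrsim n\log^2 n$.
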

\begin{proof}
It is convenient for the proof to define variables  $X_0, X_1, \dots, X_t, \dots $ for the Markov chain with transition probabilities $P$.  We write $X_t(\ell)$ for a chain with $X_0 = \ell$. With this notation, $P^t(\ell, k) = \pr{X_t(\ell) = k}$. The stationary distribution of $P$ is given by $\pi(k) = \frac{3^k \binom{n}{k} }{4^n - 1}$ (see \cite[Lemma 5.3]{HL09}). As a result, we have for any $t \geq 1$,
\begin{align}
\frac{1}{4^n-1} \sum_{\ell=1}^n 3^{\ell} \binom{n}{\ell} \pr{X_t(\ell) = k}
&= \frac{3^{k} \binom{n}{k}}{4^n-1}.
\label{eq:stat-distribution}
\end{align}

%\textbf{Proof overview.}
The general strategy of the proof is as follows.
%First, we pick a reference point $r = (3/4-\delta)n$ for which we can bound $\pr{X_t(r) = k}$ easily. Then we will prove that for any $\ell$, starting at $\ell$, we will reach $r$ within $t$ steps with high probability.
First we choose two reference points $r_{-}$ and $r_{+}$ with $r_{-} \leq 3n/4 \leq r_+$. The states $r_{-}$ and $r_+$ are chosen for two properties: they should have a significant probability in the stationary distribution of $P$ and moreover they should be bounded away from $3n/4$ so that the probability of reaching $r_-$ starting below can be bounded and similarly for the probability of reaching $r_+$ starting above it. This divides the state space of the chain into three parts: $[1, r_-)$, $[r_-, r_+]$ and $(r_+, n]$. When $\ell \in [r_-, r_+]$, it is simple to prove the desired result. Whenever the starting point of the chain $\ell \in [1, r_-)$ or $\ell \in (r_+, n]$, we prove that the interval $[r_-, r_+]$ is reached with high probability (that depends on $\ell$) if the chain is run for sufficiently long. We then conclude by using the first case. We note that most of the difficulty is in handling the case $\ell \in [1, r_-)$.

We start by picking specifically $r_-$ and $r_+$. We choose $r_{-} = (3/4-\delta)n$ and $r_+ = (3/4+\delta) n$. They satisfy the following properties. The first one is
\begin{align}
\binom{n}{r_-} 3^{r_-} \geq 4^{(1-\delta)n} \qquad \text{ and } \qquad \binom{n}{r_+} 3^{r_+} \geq 4^{(1-\delta)n},
\label{eq:r-mass}
\end{align}
for sufficiently large $n$. To see the second inequality, write
\begin{align*}
\binom{n}{r_+} = \frac{n (n-1) \cdots (n/4+1) \cdot n/4 \cdots \left((1/4-\delta) n + 1\right)}{(3/4n)! \cdot (3/4n + 1) \cdots (3/4+\delta)n} \geq \binom{n}{3n/4} \left(\frac{1-4\delta}{3}\right)^{\delta n}.
\end{align*}
The second property is that for all $x < r_-$ and $y > r_{+}$,
\begin{align}
\frac{P(x, x+1)}{P(x, x-1)} = 3\cdot \frac{n-x}{x-1} \geq 1+2\delta \qquad \text{ and } \qquad \frac{P(y, y-1)}{P(y, y+1)} \geq 1+2\delta.
\label{eq:r-drift}
\end{align}

We now start with the case $\ell \in [r_-, r_+]$. For this, we simply use \eqref{eq:stat-distribution}. For any $t \geq 1$ and any $r \in [r_{-}, r_+]$,
\begin{align}
\pr{X_t(r) = k} &= \frac{4^n-1}{\binom{n}{r} 3^{r}} \cdot \frac{\binom{n}{r} 3^{r}}{4^n-1} \pr{X_t(r) = k} \notag \\
&\leq \frac{4^n-1}{\binom{n}{r} 3^{r}} \cdot
\frac{1}{4^n-1} \sum_{\ell=1}^n 3^{\ell} \binom{n}{\ell} \pr{X_t(\ell) = k} \notag \\
&\leq \frac{4^n-1}{\binom{n}{r} 3^{r}} \cdot \frac{\binom{n}{k} 3^k}{4^n - 1} \notag \\
&\leq 4^{\delta n} \cdot \frac{\binom{n}{k} 3^k}{4^n - 1}. \label{eq:returnfn-from-middle}
\end{align}
In the last line, we used the inequalities in \eqref{eq:r-mass}. This proves the case $\ell \in [r_-, r_+]$, and in fact for any $t \geq 1$.
%the fact that $r = (3/4-\delta)n$ and thus $\binom{n}{r} 3^{r} \geq \binom{n}{3n/4} 3^{(3/4-\delta)n} \geq 3^{-\delta n}\frac{4^n}{n} \geq 4^{(1-\delta)n}$ for large enough $n$.
%Moreover, we have for all $\ell \geq r$, we have for $k \leq r$,
%\begin{align*}
%\pr{X_t(\ell) = k}
%&\leq \max_{1 \leq s \leq t} \pr{X_s(r) = k} \\
%&\leq 4^{\delta} \cdot \frac{\binom{n}{k} 3^k}{4^n - 1}.
%\frac{2^{-n}}{\poly(n)}.
%\end{align*}
%On the other hand, if $\ell$ and $k$ are both at least $r$, we simply bound the
%\[
%\pr{X_t(\ell) = k} \leq 1.
%\]

%The remaining case is $\ell < r$, and the objective is to prove that
%\[
%\pr{X_t(\ell) = k} \leq \frac{4^n-1}{\binom{n}{r} 3^{r}} \cdot \frac{\binom{n}{k} 3^k}{4^n - 1} + \frac{1}{2^{\ell} \binom{n}{\ell}} \frac{1}{\poly(n)}.
%\]

We now handle the case $\ell \in [1, r_{-})$. Introduce $T_{r_-}(\ell) = \min \{t \geq 1 : X_t(\ell) \geq r_{-} \}$. Note that we have for any $t$
\begin{align}
\pr{X_t(\ell) = k} &\leq \pr{T_{r_-}(\ell) < t, X_t(\ell) = k} + \pr{T_{r_-}(\ell) \geq t} \notag \\
				&= \pr{T_{r_-}(\ell) < t, X_{t-T}(r) = k} + \pr{T_{r_-}(\ell) \geq t} \notag \\
			&\leq  \max_{1 \leq s \leq t} \pr{X_s(r) = k} + \pr{T_{r_-}(\ell) \geq t}. \label{eq:event-ep}
\end{align}
Using \eqref{eq:returnfn-from-middle}, we can bound the first term. The objective of the remainder of the proof is to bound the probability $\pr{T_{r_-} \geq t}$ when $t = c n \log^2 n$. This is done in Lemma \ref{lem:time-reach-middle} below and it concludes the case $\ell \in [1, r_-)$.

The case $\ell \in (r_+, n]$ is analogous, except that we use Lemma \ref{lem:time-reach-middle-+} instead, which has a similar proof but is significantly simpler. We note that in this case, it is possible to obtain a better probability bound without the dependence on the starting point $\ell$.
\end{proof}

\begin{lemma}
\label{lem:time-reach-middle}
Let $\delta \in (0,1/16)$ and $\eta \in (0,1)$ be constants and $r_{-}$ satisfying condition \eqref{eq:r-drift}. Then for a large enough constant $c$ (depending on $\delta$ and $\eta$) and large enough $n$, we have for all $\ell \leq r_-$,
\[
\pr{T_{r_-}(\ell) > c n \log^2 n} \leq 2^{-2n} + \frac{1}{\left(3-\eta\right)^{\ell} \binom{n}{\ell} } \cdot \frac{1}{\poly(n)}.
\]
\end{lemma}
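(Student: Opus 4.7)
The plan is to prove the hitting time bound by a Lyapunov/martingale analysis exploiting the position-dependent speed of the chain. I would first introduce the function $\phi(x) = 1/((3-\eta)^x \binom{n}{x})$, which is positive and decreasing on $[1, r_-]$ provided $\eta$ is small relative to $\delta$. A direct calculation using \eqref{eq:def-P} and the drift condition \eqref{eq:r-drift} shows that, for $\eta \lesssim \delta$, one has
\[
\ex{\phi(X_{t+1}) - \phi(X_t) \mid X_t = x} \leq -c\eta \cdot \tfrac{x}{n} \cdot \phi(x)
\]
for $1 \le x < r_-$ and some constant $c > 0$. The key feature of this drift is that it is proportional to $x/n$, which matches the total move probability of the chain at position $x$. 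Note in particular that $\phi(\ell)$ is precisely the key factor $1/((3-\eta)^\ell \binom{n}{\ell})$ appearing in the target bound.

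Next I would exploit the speed structure explicitly. Since the total move probability at $x$ is $\Theta(x/n)$, the chain is very lazy at small $x$ and makes $\Theta(1)$ moves per step for $x = \Theta(n)$. Passing to the embedded ``jump chain'' (conditioned on moving), the condition \eqref{eq:r-drift} makes it a biased random walk with upward drift at least $(1+2\delta)$. By a standard biased random walk argument (combined with the generalized gambler's ruin lemma referenced in the paper's organization), the number of jumps needed to reach $r_-$ from $\ell$ is $O(n)$ with exponentially concentrated tails. Converting jumps back to real time using the position-dependent waiting times $\Theta(n/x)$, the total expected hitting time is $\ex{T_{r_-}(\ell)} = O(n \log(r_-/\ell))= O(n \log n)$, and Bernstein concentration for sums of the independent geometric waiting times yields that $T_{r_-}(\ell)$ exceeds $cn \log^2 n$ with probability at most $2^{-2n}$ for $c$ large enough. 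This supplies the first term of the bound.

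To extract the sharper $\ell$-dependent tail $\tfrac{1}{(3-\eta)^\ell \binom{n}{\ell} \poly(n)}$, I would return to the supermartingale $\phi$ and construct an exponential martingale of the form $\phi(X_t) \exp(\lambda \sum_{s < t} X_s/n)$ with $\lambda = \Theta(\log n)$ chosen so that the process is still a supermartingale under \eqref{eq:r-drift}. Applying optional stopping at $T_{r_-}(\ell) \wedge t$ and using that on the event $\{T_{r_-}(\ell) > t\}$ the cumulative drift time $\sum_{s < t} X_s/n$ is $\Omega(\log^2 n)$ with very high probability (by concentration of the chain's occupation times coming from the speed-up argument above), rearrangement gives the desired bound $\pr{T_{r_-}(\ell) > t} \leq \phi(\ell) \cdot \tfrac{1}{\poly(n)} = \tfrac{1}{(3-\eta)^\ell \binom{n}{\ell} \poly(n)}$.

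The principal obstacle is bridging the Lyapunov time scale and the target time scale: a naive Doob-inequality application to $\phi$ alone requires $t = \Omega(n^2)$ to give anything useful, because the ratio $\phi(\ell)/\phi(r_-)$ is as large as $\sim 4^n$. Reducing this to $t = O(n \log^2 n)$ forces one to leverage the position-dependent factor of $x/n$ in the drift: intuitively, the chain's ``clock'' runs slower at small weight but so does the drift, and these two effects must be tracked together so that the exponential martingale captures the right effective time. Making this simultaneous control precise — in particular, showing that the accumulated drift $\sum X_s/n$ is large enough with probability at least $1 - 2^{-2n}$ to overcome the $4^n$ factor — is the main technical content of the proof.
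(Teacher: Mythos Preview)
Your proposal has a genuine gap at the very first step. The function $\phi(x) = 1/((3-\eta)^x\binom{n}{x})$ is \emph{not} a supermartingale for the chain $P$ on all of $[1,r_-]$. A direct computation with the transition probabilities \eqref{eq:def-P} gives, for $\eta = 0$,
\[
\ex{\phi(X_{t+1})/\phi(X_t) \mid X_t = x} - 1 \;=\; \frac{16x - 6(n+1)}{5n(n-1)},
\]
which is \emph{positive} for $x > 3(n+1)/8$. For any fixed $\eta \in (0,1)$ the correction is of lower order near $x = r_- = (3/4-\delta)n$, and one checks that the drift stays positive there (for $\delta < 1/16$ it is $\approx (6-16\delta)/(5n) > 0$). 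The reason is that although the chain drifts toward the mode $3n/4$ where $\phi$ is smallest, $\phi \propto 1/\pi$ is convex, and Jensen's inequality pushes the expectation up; this is the same phenomenon that makes $1/\binom{n}{x}$ a strict submartingale for the Ehrenfest chain. Consequently the claimed inequality $\ex{\phi(X_{t+1}) - \phi(X_t)\mid X_t=x} \le -c\eta\,(x/n)\,\phi(x)$ fails for $x \in (3n/8,\,r_-)$, and with it the exponential-martingale construction in your third paragraph (which in any case could only sustain $\lambda = O(1)$, not $\lambda = \Theta(\log n)$, even where $\phi$ \emph{is} a supermartingale).

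There is a second issue in paragraph~2: the assertion that Bernstein concentration on the waiting times gives $\pr{T_{r_-}(\ell) > cn\log^2 n} \le 2^{-2n}$ uniformly in $\ell$ is false for small $\ell$. If $\ell = O(1)$ the chain spends time at positions $x = O(1)$ where a \emph{single} waiting time is geometric with parameter $\Theta(1/n)$, and the probability that one such variable exceeds $cn\log^2 n$ is $\approx e^{-c\log^2 n}$, which is super-polynomially small but nowhere near $2^{-2n}$. This is why the statement of the lemma has the $\ell$-dependent term: for small $\ell$ that term dominates and the bound is only $1/\poly(n)$.

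The paper's proof takes a quite different route. It passes to the jump chain $\{Y_i\}$ and writes $T = S + W_1 + \cdots + W_S$. The bound $\pr{S > s} \le e^{-\Omega(s)}$ for $s \gtrsim n/\delta$ supplies the $2^{-2n}$ term. The $\ell$-dependent term comes from a careful analysis of $W_1 + \cdots + W_S$: one conditions on the minimum $M = \min_i Y_i$ and uses the generalized gambler's ruin lemma to show $\pr{M \le m} \lesssim (3n)^m / ((3(1-8\eta))^\ell \binom{n}{\ell})$; given $M = m$ and a ``good'' occupation-time event, the sum of geometric waiting times is then controlled by an MGF bound. The factor $1/((3-\eta)^\ell\binom{n}{\ell})$ thus arises combinatorially from hitting probabilities of the jump chain, not from a Lyapunov drift of the lazy chain.
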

\begin{proof}
As this proof does not involve $r_+$, we write $r$ instead of $r_-$ to make the notation lighter.
To prove this result, we start by defining an accelerated walk $\{Y_i\}$ as in \cite{HL09} and the corresponding stopping time $S = \min \{s : Y_s \geq r\}$. More formally, let $N_0 = 0$ and $N_{i+1} = \min \{ k \geq N_i : X_k \neq X_{N_i}\}$ and then $Y_{i} = X_{N_i}$. It is not hard to see that $\{Y_i\}$ is a Markov chain and the transition probabilities are given by the transition probabilities for $\{X_k\}$ conditioned on moving.

We also define the waiting time $W_i = N_{i+1} - N_{i} - 1$ to be the number of steps it takes the walk to change states. Conditioned on $Y_i$, $W_i$ has a geometric distribution with parameter $\frac{2Y_i(3n-2Y_i-1)}{5 n (n-1)}$. Notice that this distribution is stochastically dominated by a geometric distribution with parameter $\frac{2Y_i}{5n}$, which we sometimes use instead (we are only interested in upper bounds on the waiting times).

Getting back to $T_{r}$ denoted simply $T$ in the following, notice that $T = S+W_1 + W_2 + \dots + W_S$. So we have for all $s$
\begin{equation}
\label{eq:t-s}
\pr{T > t + s} \leq \pr{S > s} + \pr{S \leq s, W_1 + \dots + W_S > t}.
\end{equation}
We will choose $s$ later so that both terms are small. We start by bounding the first term, which can be done using a simple application of a Chernoff-type bound.
\begin{lemma}
\label{lem:bound-prob-s}
If $s > \frac{n}{3\delta}$, we have
\[
\pr{S > s} \leq \exp{-\frac{\delta^2}{18} \cdot s}.
\]
\end{lemma}
\begin{proof}
For this we just use a concentration bound on the position of a random walk relative to its expectation. Recall that the probability of moving forward when $Y_i = r$ is $\frac{6 r (n-r)}{6 r (n-r) + 2 r (r +1)}$. Then, using the property \eqref{eq:r-drift} the probability of moving forward is at most $1/2+\delta/3$ for $Y_i$ provided $Y_i \leq r$. Define a random walk $Y'_i$ with $Y'_0 = 0$ and it moves to the right with probability $1/2 + \delta/3$ and to the left with probability $1/2 - \delta/3$.  For $i \leq S$, we can assume that $Y'_i \leq Y_i$. In other words, we have $S' \geq S$ where $S' = \min \{i : Y'_i \geq r\}$.
Thus,
\begin{align*}
\pr{S > s} &\leq \pr{S' > s} \\
			&\leq \pr{Y'_{s} < r} \\
			&= \pr{Y'_{s} - \ell < 2 \cdot \delta/3 \cdot s - (2\delta/3 \cdot s+\ell-r)} \\
			&\leq \exp{-\frac{(2\delta/3 s+\ell-r)^2}{2s} }
\end{align*}
where we used the fact that $\ex{Y'_s} = \ell + 2\delta/3 s$ and a Chernoff-type bound, see for example \cite[Lemma A.4]{HL09}.
\end{proof}

We now move to the second step of the proof where we analyze the waiting times $W_1 + \dots + W_S$. Recall this is the total waiting time before the node $r = (3/4-\delta)n$ is reached.
\begin{lemma}
\label{lem:bound-waiting-time}
We have
\[
\pr{S \leq s, W_1 + \dots + W_S > c n \log^2 n} \leq \frac{1}{(3(1-8\eta))^{\ell}\binom{n}{\ell}} \cdot\frac{1}{\poly(n)}
\]
\end{lemma}
\begin{proof}
The techniques we use are similar to the techniques in \cite{HL09}, but we need to improve the analysis in several places. We try to use notation of \cite{HL09} as much as possible.

As in the proof of \cite[Lemma A.11]{HL09}, we start by defining the good event
\[
\eventfont{H} = \bigcap_{x=1}^n \event{\sum_{k=1}^S \1(Y_k \leq x) \leq \gamma x/\mu},
\]
where $\mu = 2c'\delta$.\footnote{We use this notation to apply \cite[Lemma A.5]{HL09} later.  $\mu$ corresponds to the probability of going forward minus the probability of going backward for a simplified walk that moves forward at most as fast as $Y_k$. In our case, we have $\mu > 2\delta/3$ because we stop after reaching state $r=(3/4-\delta)n$, and the probability of moving forward at $r$ is at least $1/2+\delta/3$.} The parameter $\gamma$ is going to be chosen later. This event is saying that states with small labels are not visited too many times. Later in the proof, we show that the $\pr{\eventfont{H}^c}$ is small.

Define the random variable $M = \min_{1 \leq i \leq S} Y_i$. We have
\begin{align}
\pr{W_1 + \dots + W_S > t, S \leq s, \eventfont{H}} &= \sum_{m=1}^{\ell} \pr{M=m, S \leq s, W_1 + \dots + W_S > t, \eventfont{H}} \notag \\
				&= \sum_{m=1}^{\ell} \pr{M=m} \pr{S \leq s, W_1 + \dots + W_S > t, \eventfont{H} | M=m} \notag \\
				%&\leq \sum_{m=1}^{\ell} \pr{M=m} \pr{S \leq s, W_1 + \dots W_S > t ,\eventfont{H}_m | M=m} + \pr{\eventfont{H}_m|M=m} \right) \label{eq:introduce-hm}.
				&\leq \sum_{m=1}^{\ell} \pr{M \leq m}\max_{ \{y_i\} \text{ satisfying } M=m \text{ and } \eventfont{H} \text{ and } S \leq s} \pr{ W(y_1) + \dots + W(y_s) \geq t }, \label{eq:decomp-min}
\end{align}
where the maximum is taken over all sequences $y_1, \dots, y_s$ of possible walks and $W(y)$ is the waiting time at state $y$.

We bound $\pr{M \leq m}$ using Lemma \ref{lem:hitting-prob}. Recall that the random walk we are considering has transition probabilities that depend on the state we are in. More precisely, the probabilities of going from state $\ell$ to state $\ell+1$ is a decreasing function of $\ell$ for $\ell \leq r$. This makes it difficult to obtain a useful bound on $\pr{M \leq m}$ and so we are going to consider simplified walks for which $\pr{M \leq m}$ can only be greater. Note that at state $r$, the probability of moving to $r+1$ is $p_+(r) \geq 1/2+\delta/3$ (see \eqref{eq:r-drift}).

Define $q \eqdef \ceil{\frac{\log(n/\eta)}{\log(1+\delta)}}$. We handle the cases $\ell < q/\eta + 1$ and $\ell \geq q/\eta + 1$ separately. We start with $\ell \geq q/\eta + 1$.

We consider the following chain: the probabilities of moving forward between $\ell+1$ and $\ell+q$ are all set to $p_{+}(\ell+q)$, the value of this probability at state $\ell + q$. Moreover, for all states larger than $\ell+q$, we assign an equal probability of moving forward and backward. This defines a new walk to which we can apply Lemma \ref{lem:hitting-prob}. Assume for now that $\ell + q < r$. Using the same notation as in Lemma \ref{lem:hitting-prob}, we write $\alpha_{-} = \frac{p_{-}}{1-p_{-}} = 3 \cdot \frac{n-\ell}{\ell - 1}$, and $\alpha_q = \alpha_+(\ell+q) = \frac{p_{+}(\ell+q)}{1-p_{+}(\ell+q)}$, we obtain
\begin{align*}
\pr{M \leq \ell-1} &\leq \frac{1}{1+\alpha_{-} \frac{\alpha_q^q}{1+ \alpha_q^q + \dots + \alpha_q + 1 + \dots +1}} \\
&= \frac{1}{1+\alpha_- \frac{\alpha_q^q}{ \frac{\alpha_q^{q+1} - 1}{\alpha_q - 1} + (r-\ell-q-1)}}.
\end{align*}
We focus on the term involving $\alpha_q$:
\begin{align*}
\frac{\alpha_q^q}{ \frac{\alpha_q^{q+1} - 1}{\alpha_q - 1} + (r-\ell-q-1)} &= \frac{\alpha_q^q (\alpha_q - 1)}{ \alpha_q^{q+1} - 1 + (\alpha_q-1)(r-\ell-q-1)} \\
&\geq \frac{\alpha_q - 1}{\alpha_q} \cdot \frac{1}{1 + (\alpha_q -1) \frac{r - \ell - q - 1}{\alpha_q^{q+1}}}.
\end{align*}
We know that $\alpha_q \geq \alpha_+(r) \geq 1+2\delta$ using property \eqref{eq:r-drift} and as a result the previous expression is lower bounded by $(1-1/\alpha_q) (1-\eta)$. Continuing, we get
\begin{align*}
\pr{M \leq \ell -1} &\leq \frac{1}{1+ (1-\eta) \cdot \alpha_- \cdot  (1 - \frac{1}{\alpha_q})} \\
&= \frac{1}{1+ (1-\eta) \alpha_- - (1-\eta) \frac{\alpha_-}{\alpha_q}}.
\end{align*}
We now bound the quotient $\alpha_-/\alpha_q$.
\begin{align*}
\frac{\alpha_-}{\alpha_q} &= \frac{n-\ell}{\ell - 1} \frac{\ell+q-1}{n - (\ell+q)} \\
&= \left(1 + \frac{q}{\ell - 1}\right) \left(1 + \frac{q}{n-\ell-q}\right).
\end{align*}
We have $\frac{q}{n-\ell-q} \leq \frac{q}{n/4 - q} \leq \eta$ for large enough $n$. Moreover, by the assumption that $\ell \geq q/\eta + 1$, we have
\begin{align*}
\pr{M \leq \ell -1} &\leq \frac{1}{1 + (1-\eta) \alpha_- - (1-\eta) (1-\eta)^2} \\
		&\leq \frac{1}{(1- 8 \eta) \alpha_-}.
\end{align*}
This means that provided $q/\eta + 1 \leq \ell < r - q$, we have
\begin{align*}
\pr{M \leq \ell -1} &\leq \frac{1}{(1- 8 \eta)} \cdot \frac{1}{3} \frac{\ell - 1}{n - \ell}.
\end{align*}
Observe that if we have $\ell + q \geq r$, then we simply replace in the previous calculation $\ell+q$ with $r$ and the previous bound still holds in this case. To obtain a bound on $\pr{M \leq m}$ for $m < \ell - 1$, note that reaching $\ell-2$ before $r$ means reaching $\ell-1$ before $r$ starting at $\ell$ and reaching $\ell-2$ before $r$ starting at $\ell-1$, and these parts of the walk are independent. As a result, by induction, we have for $m \geq q/\eta + 1$,
\begin{align}
\pr{M \leq m} &\leq \frac{1}{(1-8\eta)^{\ell-m} 3^{\ell - m}} \cdot \frac{(\ell-1) (\ell - 2) \cdots m}{(n-\ell) (n-\ell+1) \cdots (n-m-1)} \notag \\
&\leq  \frac{1}{((1-8\eta)3)^{\ell}} \cdot \frac{\ell !}{n (n-1) \cdots (n-\ell+1)} \cdot \frac{3^m}{\ell (n-\ell)} \cdot \frac{n (n-1) \cdot (n-m)}{(m-1)!} \notag \\
&\leq \frac{1}{\left(3(1-8\eta)\right)^{\ell} \binom{n}{\ell}} \cdot (3 n)^m. \label{eq:bound-prob-m-2}
\end{align}
Note that whenever $m \leq q/\eta+1$, we can use the bound
\[
\pr{M \leq m} \leq \pr{M \leq q/\eta+1} \leq  \frac{1}{\left(3(1-8\eta)\right)^{\ell} \binom{n}{\ell}} \cdot (3 n)^{q/\eta+1}.
\]

%We have $\pr{M \leq m} \leq \frac{p_+}{2p_+ - 1} \cdot \left(\frac{1-p_{-}}{p_{-}}\right)^{\ell-m}$ with $1 - p_{-} = \frac{2 \ell (\ell -1)}{6 \ell (n-\ell) + 2 \ell(\ell-1)} \leq \ell/(2n)$ as $\ell \leq n/2$ and $p_+ = \frac{6 n^2/4 }{6 n^2/4 + 2 n/2 (n/2-1)} \geq 3/4$. This shows that $\pr{M \leq m} \leq \left(\frac{\ell}{n}\right)^{\ell-m}$.

We now look at the term $\max_{ \{y_i\} \text{ satisfying } M=m \text{ and } \eventfont{H} \text{ and } S \leq s} \pr{ W(y_1) + \dots + W(y_s) \geq t }$. As argued in the proof of \cite[Lemma A.11]{HL09}, the maximum is achieved when we make the walk visit as many times as possible the states with smaller labels. This means state $m$ is visited $\gamma m/\mu$ times, and all $i > m$ are visited $\gamma/\mu$ times. To avoid making the notation heavy, we assume that $\gamma/\mu$ is an integer. So we can write
\[
W(y_1) + \dots + W(y_s) \leq \sum_{i=1}^{\gamma m/\mu} G_{m,i} + \sum_{i=1}^{\gamma/\mu} \sum_{k=m+1}^{r} G_{k, i},
\]
where $G_{k,i}$ has a geometric distribution with parameter $2k/5n$ and the random variables $\{G_{k,i}\}$ are independent.
We are going to give upper tail bounds on the right hand side by computing the moment generating function. For any $\lambda \geq 0$, we have, using the moment generating function of a geometric distribution and the independence of the random variables:
\begin{align*}
\ex{\exp{\lambda\left(\sum_{i=1}^{\gamma m/\mu} G_{m, i} + \sum_{i=1}^{\gamma/\mu} \sum_{k=m+1}^{r} G_{k, i}\right)}} &= \left( \frac{ 2m/5n }{ e^{-\lambda} - 1 + 2m/5n} \right)^{\gamma m / 2} \prod^{r}_{k=m+1} \left( \frac{2k/5n}{e^{-\lambda} - 1 +2k/5n} \right)^{\gamma/\mu}.
\end{align*}
Now take $\lambda$ so that $e^{\lambda} = \frac{1}{1-m/(5n)}$. This leads to
\begin{align*}
\ex{\exp{\lambda\left(\sum_{i=1}^{\gamma m/\mu} G_{m, i} + \sum_{i=1}^{\gamma/\mu} \sum_{k=m+1}^{r} G_{k, i}\right)}}
&= \left(\frac{2m}{2m - m}\right)^{\gamma m/\mu} \cdot \prod^{r}_{k=m+1} \left( \frac{2k}{2k - m} \right)^{\gamma/\mu} \\
&\leq 2^{\gamma m/\mu} \left(\prod_{k=m+1}^{r} e^{\frac{m/2}{k-m/2}} \right)^{\gamma/\mu} \\
&\leq 2^{\gamma m/\mu} \left(e^{m/2 \cdot \ln n} \right)^{\gamma/\mu}.
\end{align*}
As a result, using Markov's inequality, we obtain
\begin{align*}
\pr{\sum_{i=1}^{\gamma m/\mu} G_{m,i} + \sum_{i=1}^{\gamma/\mu} \sum_{k=m+1}^{r} G_{k, i} > t}
&= \pr{\exp{\lambda\left( \sum_{i=1}^{\gamma m/\mu} G_{m,i} + \sum_{i=1}^{\gamma/\mu} \sum_{k=m+1}^{r} G_{k, i} \right) } > e^{\lambda t}}  \\
&\leq 2^{\gamma m/\mu} e^{\gamma m/(2\mu) \cdot \ln n} \cdot (1-m/(5n))^{t} \\
&\leq 2^{\gamma m/\mu} e^{\gamma m/(2\mu) \cdot \ln n} \cdot e^{-tm/(5n)}.
\end{align*}
Getting back to equation \eqref{eq:decomp-min}, we have using \eqref{eq:bound-prob-m-2}:
\begin{align*}
\pr{W_1 + \dots W_S > t, S \leq s, \eventfont{H}}
&\leq \sum_{m=1}^{\ell} \pr{M \leq m}  2^{\gamma m/\mu} e^{\gamma m/(2\mu) \cdot \ln n} \cdot e^{-tm/(5n)}  \\
&\leq \frac{1}{(1-8\eta)^\ell 3^{\ell} \binom{n}{\ell}} \cdot (3n)^{q/\eta+1} \sum_{m=1}^{q/\eta}   2^{\gamma m/\mu} e^{\gamma m/(2\mu) \cdot \ln n} \cdot e^{-tm/(5n)}  \\
& + \frac{1}{(1-8\eta)^\ell 3^{\ell} \binom{n}{\ell}} \cdot \sum_{m=q/\eta + 1}^{\ell}  (3n)^m  2^{\gamma m/\mu} e^{\gamma m/(2\mu) \cdot \ln n} \cdot e^{-tm/(5n)}  \\
&\leq \frac{1}{(1-8\eta)^\ell 3^{\ell} \binom{n}{\ell}} \cdot (1+(3n)^{q/\eta}) \cdot \sum_{m=1}^{\ell} \left( 3n 2^{\gamma/\mu} e^{\gamma/(2\mu) \cdot \ln n} \cdot e^{-t/(5n)} \right)^m
\end{align*}
Recall that $q = O(\log n)$ and thus if $t > c n \log^2 n$ with sufficiently large $c$, this probability is bounded by $O\left(\frac{1}{((1-8\eta)3)^{\ell}\binom{n}{\ell}} \cdot \frac{1}{\poly(n)}\right)$.

\comment{Actually, we could even do $2^{c \log^2 n}$ instead of $\poly(n)$.}
%\comment{ If it were only for this part, we would just take $\gamma$ to be a constant, but the problem is that what we are interested in is.}

It now remains to bound $\pr{\eventfont{H}^c, S \leq s}$. Fix $x \in \{1, \dots, n\}$, we have
\begin{align*}
\pr{\sum_{k=1}^S \1(Y_k \leq x) > \gamma x/\mu, S \leq s}
&\leq \sum_{j=1}^s \pr{Y_j = x, \event{\forall i < j, Y_i > x}, j < S, \sum_{k=1}^S \1(Y_k \leq x) > \gamma x/\mu  } \\
&\leq \sum_{j=1}^s \pr{Y_j = x, \event{\forall i < j, Y_i > x}, j < S, \sum_{k=j+1}^{S_j} \1(Y_k \leq x) \geq \gamma x/\mu  } \\
&\leq \sum_{j=1}^s \pr{M \leq x} \cdot \pr{\sum_{k=j+1}^{S_j} \1(Y_k \leq x) \geq \gamma x/\mu  |Y_j = x, j < S},
\end{align*}
where we defined $S_j = \min\{ s \geq j + 1 : Y_s \geq r\}$.
To obtain the last inequality, we simply used the fact that $\event{Y_j = x, j < S} \subseteq \event{M \leq x}$. Moreover, $\event{j < S}$ can be determined by looking at $Y_1, \dots, Y_j$ and thus conditioned on $\event{Y_j=x}$, $Y_k$ for $k \geq j+1$ and also $S_j$ are independent of $\event{j < S}$. This means that we can drop $\event{j < S}$ from the conditioning.

To bound $\pr{M \leq x}$, we use \eqref{eq:bound-prob-m-2}. We can also bound $Y_k$ by a simpler random walk $Y'_k$ that moves forward with probability $1/2+\delta/3$, as we did in the proof of Lemma \ref{lem:bound-prob-s}. Thus, we obtain
\begin{align*}
\pr{\sum_{k=1}^S \1(Y_k \leq x) > \gamma x/\mu, S \leq s}
&\leq \frac{1}{((1-8\eta)3)^{\ell} \binom{n}{\ell}} \left((3n)^x + (3n)^{q/\eta} \right) \cdot s \cdot \pr{\sum_{k=1}^{\infty} \1(Y'_k \leq x) \geq \gamma x/\mu | Y'_0 = 0} \\
&\leq \frac{1}{((1-8\eta)3)^{\ell} \binom{n}{\ell}} \left((3n)^x + (3n)^{q/\eta} \right) \cdot s \cdot 2 \exp{-\frac{\mu (\gamma - 2) x}{2}},
\end{align*}
where we used \cite[Lemma A.5]{HL09}.
As a result, by a union bound,
\begin{align*}
\pr{\eventfont{H}^c, S \leq s} &\leq \frac{1}{((1-8\eta)3)^{\ell} \binom{n}{\ell}} \cdot 2s \cdot\left( \sum_{x=1}^n \exp{x\left(\log(3n) - \frac{\mu (\gamma - 2)}{2}\right)} + 3n^{q/\eta} \sum_{x=1}^n \exp{ - \frac{\mu (\gamma - 2)x}{2}}\right) \\
&\leq \frac{1}{((1-8\eta)3)^{\ell} \binom{n}{\ell}} \cdot \frac{1}{\poly(n)},
\end{align*}
where to get the last inequality, we choose $\gamma = c' \log n$ for large enough $c'$ and use the fact that  $s$ will be chosen linear in $n$.
Continuing, we reach
\begin{align*}
\pr{W_1 + \dots W_S > t, S \leq s}
&\leq \pr{W_1 + \dots W_S > t, S \leq s, \eventfont{H}} + \pr{\eventfont{H}^c, S \leq s} \\
&\leq \frac{1}{((1-8\eta)3)^{\ell} \binom{n}{\ell}} \frac{1}{\poly(n)}.
\end{align*}

We proved the desired bound when $\ell \geq q/\eta + 1$. It remains to deal with the case $\ell < q/\eta + 1$. We need to bound $\pr{M \leq \ell - 1}$ in a different way. For this we simply consider a walk that is even simpler than the one considered to obtain the bound in \eqref{eq:bound-prob-m-2}: let the probabilities of moving forward for all states above $\ell$ be equal to $p_+(r)$ which we know is at least $1/2+\delta/3$. Applying Lemma \ref{lem:hitting-prob}, we obtain
\[
\pr{M \leq \ell - 1} \leq \frac{1}{3 \delta} \cdot \frac{\ell - 1}{n - \ell},
\]
and then using the same argument as before
\[
\pr{M \leq m} \leq \frac{1}{(3\delta)^{\ell} \binom{n}{\ell}} \cdot (3\delta n)^m.
\]
Then we apply the exact same argument to obtain a bound
\begin{align*}
\pr{W_1 + \dots W_S > t, S \leq s} &\leq \frac{1}{(3\delta)^{\ell} \binom{n}{\ell}} \frac{1}{\poly(n)}.
\end{align*}
Now recall that $\ell < q/\eta + 1 = O(\log n)$ and thus for large enough $c$, we can make the term $1/\poly(n)$ be small enough to obtain the desired bound.
\end{proof}
To complete the proof of Lemma \ref{lem:time-reach-middle}, we just plug the bounds obtained from Lemma \ref{lem:bound-prob-s} with $s > 12n/\delta$ and from Lemma \ref{lem:bound-waiting-time} into equation \eqref{eq:t-s}.
\end{proof}

\begin{lemma}
\label{lem:time-reach-middle-+}
Let $\delta \in (0,1/16)$ and $\eta \in (0,1)$ be constants and $r_{+}$ satisfying condition \eqref{eq:r-drift}. Then for a large enough constant $c$ (depending on $\delta$ and $\eta$) and large enough $n$, we have  for any $\ell \geq r_{+}$
\[
\pr{T_{r_+}(\ell) > c n \log^2 n} \leq 2^{-2n}
\]
\end{lemma}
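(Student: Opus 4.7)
The plan is to mirror the accelerated-walk decomposition used in Lemma \ref{lem:time-reach-middle}, but to observe that both ingredients (the unaccelerated hitting time $S$ and the cumulative waiting time $W_1+\cdots+W_S$) are vastly easier to control above $r_+$ than below $r_-$. Define $Y_i = X_{N_i}$ with $N_0=0$ and $N_{i+1}=\min\{k>N_i : X_k\neq X_{N_i}\}$, and waiting times $W_i = N_{i+1}-N_i - 1$. Set $S = \min\{i : Y_i \leq r_+\}$, so $T_{r_+}(\ell) = S + W_1 + \cdots + W_S$. As in \eqref{eq:t-s}, it will suffice to pick $s$ and $t$ with $s+t \leq cn\log^2 n$ and bound both $\pr{S>s}$ and $\pr{S\leq s, W_1+\cdots+W_S > t}$ by $2^{-2n-1}$.

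The key point for the first term is that, by \eqref{eq:r-drift}, whenever $Y_i > r_+$ the one-step probability of moving downward in the accelerated chain is at least $(1+2\delta)/(2+2\delta) \geq 1/2+\delta/3$. Let $Y'_i$ be a biased walk on $\ZZ$ starting at $\ell$ and moving down with probability $1/2+\delta/3$; then for $i \leq S$, $Y_i$ is stochastically dominated from above by $Y'_i$ (started at $\ell$), so $S \leq S' := \min\{i : Y'_i \leq r_+\}$. Since $\ell - r_+ \leq n/4$, the same Chernoff-type argument as in Lemma \ref{lem:bound-prob-s} yields $\pr{S > s} \leq \exp(-\Omega(s))$ whenever $s \geq C n$, so choosing $s = C n$ for a sufficiently large constant $C = C(\delta)$ gives $\pr{S>s}\leq 2^{-3n}$.

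For the waiting-time contribution, note that for every $Y_i \in [r_+, n]$ the parameter of the geometric waiting time is $\frac{2Y_i(3n-2Y_i-1)}{5n(n-1)}$, and a direct calculation shows the minimum of this expression on $[r_+, n]$ is attained at $Y_i = n$ with value $2/5$. Hence, conditional on any realization of $Y_0,Y_1,\ldots$, the $W_i$ for $i < S$ are independent and each stochastically dominated by a geometric random variable with parameter $2/5$ (mean $3/2$). A standard Chernoff bound for sums of independent geometric random variables gives $\pr{W_1+\cdots+W_s > t \mid S\leq s} \leq \exp(-\Omega(t))$ once $t \geq 5 s$; choosing $t = C'n$ for large enough $C'$ makes this at most $2^{-3n}$. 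Combining the two estimates with $s+t = O(n) \leq cn\log^2 n$ yields the claimed $2^{-2n}$ bound.

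I do not anticipate any substantive obstacle: the only things to verify carefully are the uniform lower bound on the waiting-time parameter throughout $[r_+, n]$ and the constant downward drift guaranteed by \eqref{eq:r-drift}. Unlike the case of Lemma \ref{lem:time-reach-middle}, there is no low-weight trap, no need for the event $\eventfont{H}$, and no dependence of the bound on the starting point $\ell$.
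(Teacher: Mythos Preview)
Your proposal is correct and matches the paper's own argument essentially line for line: the paper's proof is just the observation that for $x>r_+$ one has $P(x,x)\leq 4/5$, so the waiting times are dominated by geometrics with a constant parameter, and the rest is the same accelerated-walk plus Chernoff argument you describe. Your computation of the minimal moving probability $2/5$ at $x=n$ is in fact sharper than the paper's stated bound, but the approach is identical.
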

\begin{proof}
The proof is analogous to Lemma \ref{lem:time-reach-middle}, except that it is much easier to bound the waiting time. In fact, when $x > r_{+}$ we have $P(x,x) \leq 4/5$. This means that the waiting times $W_1, \dots, W_{S}$ can be assumed to have a geometric distribution with parameter $4/5$ and then proving a version of Lemma \ref{lem:bound-waiting-time} becomes a simple application of a Chernoff-type bound, and in fact one can obtain a better bound that is independent of $\ell$.
\end{proof}

\begin{proof}[of Theorem \ref{thm:mc-Qconvergence}]
Theorem \ref{thm:mc-Pconvergence} tells us that for $|\mu| = \ell$ and all $k \in \{1, \dots, n\}$,
\begin{equation}
\label{eq:Q-sum-k}
\sum_{|\nu| = k} Q^t(\mu,\nu) \leq 4^{\delta n} \frac{\binom{n}{k} 3^k }{4^n-1} + \frac{1}{(3-\eta)^{\ell} \binom{n}{\ell}} \frac{1}{\poly(n)}.
\end{equation}
Recall that there are exactly $\binom{n}{k}3^k$ distinct strings $\nu \in \{0,1,2,3\}^n$ such that $|\nu| = k$. We want to show that all these strings $\nu$ have basically the same value of $Q^t(\mu, \nu)$. 
For this, we view the chain $Q$ as a mixture of a part $\tilde{R}$ that can only mix the sites of the string without increasing its weight and a part $\tilde{Q}$ that can change the weight of the string. We then use invariance properties of these chains with respect to permuting the qubits and relabeling of nonzero elements $\{1,2,3\}$ to get the desired conclusion.

More precisely, Let $Z_t(\mu) \in \{0,1,2,3\}^n$ denote the state of the chain defined by $Q$ at step $t$ when started in the state $\mu$. From inequality \eqref{eq:Q-sum-k}, we can find an event $\eventfont{E}_P$ (in the notation of the proof of  of Theorem \ref{thm:mc-Pconvergence}, $\eventfont{E}_P = \event{T_{r_-} < t}$, see equation \eqref{eq:event-ep}) such that 
\[
\pr{\eventfont{E}^c_P} \leq \frac{1}{(3-\eta)^{\ell} \binom{n}{\ell} \poly(n)} \qquad \text{ and }
\qquad \pr{|Z_t(\mu)| = k, \eventfont{E}_P} \leq \frac{4^{\delta n} 3^k \binom{n}{k} }{4^n - 1},
\]
where $\eventfont{E}^c$ denotes the complement of the event $\eventfont{E}$.
This gives a natural candidate for the desired $p_{\delta}$, namely $p_{\delta}(\nu) = \pr{Z_t(\mu) = \nu, \eventfont{E}_P}$. The distance condition on $p_{\delta}$ is clearly satisfied: 
\[
\sum_{\nu \in \{0,1,2,3\}^n - \{0\}} \pr{Z_t(\mu) = \nu} - \pr{Z_t(\mu) = \nu, \eventfont{E}_P} = \pr{\eventfont{E}_P^c} \leq  \frac{1}{(3-\eta)^{\ell} \binom{n}{\ell} \poly(n)}.
\]
The objective of the remainder of the proof is to show that we have $p_{\delta}(\nu) \leq \frac{4^{2\delta n}}{4^n - 1}$.
%It is important to note that the event $\eventfont{E}_P$ is only a function of the sequence of weights 

Define $Q_{ij}$ to be the transition matrix of the Markov chain conditioned on having the gate act on qubits $i,j$. More precisely, $Q_{ij}(\mu,\nu) = 1$ if $\mu_i = \mu_j = 0$ and $\nu_i = \nu_j = 0$ and $Q_{ij}(\mu,\nu) = 1/15$ if $\mu_i \mu_j \neq 00$ and $\nu_i \nu_j \neq 00$, and all other entries of $Q_{ij}$ are zero. Thus, $Q(\mu, \nu) = \frac{1}{n(n-1)} \sum_{i \neq j} Q_{ij}(\mu, \nu)$. 
We now define
\begin{align*}
R_{ij}(\mu, \nu) = 
\left\{ \begin{array}{lc}
1 & \text{if } | \mu_i \mu_j |  = | \nu_i \nu_j | = 0 \\
1/3 & \text{if } | \mu_i \mu_j | = | \nu_i \nu_j | =  1, \mu_i = \nu_i = 0 \\
1/3 & \text{if } | \mu_i \mu_j | = | \nu_i \nu_j | =  1, \mu_j = \nu_j = 0 \\
1/9 & \text{if } | \mu_i \mu_j | = | \nu_i \nu_j | =  2.
\end{array} \right.
\end{align*}
and $\tilde{R}_{ij} = \frac{1}{2} R_{ij} + \frac{1}{2} \Pi_{ij} R_{ij}$ where $\Pi_{ij}$ simply swaps the coefficients at position $i$ and $j$. Also define
\begin{align*}
\tilde{Q}_{ij}(\mu, \nu) = 
\left\{ \begin{array}{lc}
1 & \text{if } | \mu_i \mu_j |  = | \nu_i \nu_j | = 0 \\
1/9 & \text{if } | \mu_i \mu_j | = 1 \text{ and } | \nu_i \nu_j | =  2 \\
2/3 \cdot 1/6 & \text{if } | \mu_i \mu_j | = 2 \text{ and } | \nu_i \nu_j | =  1 \\
1/3 \cdot 1/9 & \text{if } | \mu_i \mu_j | = 2 \text{ and } | \nu_i \nu_j | =  2.
\end{array} \right.
\end{align*}
It is simple to see that $Q_{ij} = \frac{2}{5} \tilde{R}_{ij} + \frac{3}{5} \tilde{Q}_{ij}$. We can then define $\tilde{R} = \frac{1}{n(n-1)} \sum_{i \neq j} \tilde{R}_{ij}$ and $\tilde{Q} = \frac{1}{n(n-1)} \sum_{i \neq j} \tilde{Q}_{ij}$ so that
\[
Q = \frac{2}{5} \tilde{R} + \frac{3}{5} \tilde{Q}.
\] 
Note that $\tilde{R}$ does not change the weight of any strings, but only performs swaps and locally randomizes $1$, $2$ and $3$.  
An important observation that will allow us to study $\tilde{R}$ and $\tilde{Q}$ independently is that $\tilde{R}\tilde{Q} = \tilde{Q}\tilde{R}$. In order to see this, observe first that $\tilde{R}_{ij} \tilde{Q}_{ij} = \tilde{Q}_{ij} = \tilde{Q}_{ij} \tilde{R}_{ij}$.  Also $\tilde{R}_{ij}$ and $\tilde{Q}_{i'j'}$ clearly commute if $\{i,j\} \cap \{i',j'\} = \emptyset$. Now for $j \neq j'$, we have $R_{ij} \tilde{Q}_{ij'} = \tilde{Q}_{ij'} R_{ij}$. However, $\Pi_{ij} R_{ij}$ does not commute with $\tilde{Q}_{ij'}$. But we can still write $\Pi_{ij} R_{ij} \tilde{Q}_{ij'} = R_{ij} \Pi_{ij} \tilde{Q}_{ij'} = R_{ij} \tilde{Q}_{jj'} \Pi_{ij} = \tilde{Q}_{jj'} \Pi_{ij} R_{ij}$. As a result,
\begin{align*}
\tilde{R} \tilde{Q} &= \frac{1}{n^2 (n-1)^2} \sum_{i \neq j, i' \neq j'}  \tilde{R}_{ij} \tilde{Q}_{i'j'} \\
&= \frac{1}{n^2 (n-1)^2} \left( \sum_{i \neq j, i' \neq j', |\{i,j\} \cap \{i',j'\}| \in \{0,2\}}  \tilde{Q}_{i'j'} \tilde{R}_{ij}  + 4 \sum_{i \neq j, i' \neq j', j \neq j'} \frac{1}{2} R_{ij} \tilde{Q}_{ij'} + \frac{1}{2} \Pi_{ij} R_{ij} \tilde{Q}_{ij'} \right) \\
&= \frac{1}{n^2 (n-1)^2} \left( \sum_{i \neq j, i' \neq j', |\{i,j\} \cap \{i',j'\}| \in \{0,2\}}  \tilde{Q}_{ij} \tilde{R}_{i'j'}  + 4 \sum_{i \neq j, i \neq j', j \neq j'} \frac{1}{2} \tilde{Q}_{ij'} R_{ij}  + \frac{1}{2} \tilde{Q}_{jj'} \Pi_{ij} R_{ij} \right) \\
&= \tilde{Q} \tilde{R}.
\end{align*}
The factor $4$ in the second line is to take into account the four possibilities $i=i'$, $i = j'$, $j = i'$ and $j = j'$.
As a result, for any $t \geq 1$, we can write $Q^t$ as
%Observe that the terms $\tilde{Q}_{ij}$ and $\tilde{R}_{ij}$ do not necessarily commute, but when averaging over different $i,j$
%$$m_{ij} = \frac{1}{5}m_i\otimes m_j+ \frac{1}{5}\pi_{ij} m_i\otimes m_j + \frac{3}{5} \tilde{m}_{ij},$$ where $m_i$ is the second moment operator on a single qubit.  Now we define,
%$$R=\frac{1}{n(n-1)}\sum_{i<j} m_i\otimes m_j(1+\pi_{ij})$$ and
% $$\tilde{Q}=\frac{2}{n(n-1)}\sum_{i<j}\tilde{m}_{ij},$$ so that, 
%$$Q=\frac{2}{5}R + \frac{3}{5} \tilde{Q}.$$  Since $R\tilde{Q}=\tilde{Q}R$, and $Q=\frac{2}{5}R + \frac{3}{5} \tilde{Q}$ we have, 
\begin{align}
\label{eq:decomposition-Q}
Q^t = \sum_{t_1+t_2=t} \left(\frac{3}{5}\right)^{t_1} \left(\frac{2}{5}\right)^{t_2}  \binom{t}{t_1} \tilde{R}^{t_2}\tilde{Q}^{t_1}.
\end{align}
Using equation \eqref{eq:decomposition-Q}, we see that $Z_t(\mu)$ can be generated as follows. Choose $T_1$ according to a binomial distribution with parameters $t$ and $3/5$ and run the chain $\tilde{Q}$ on $\mu$ for $T_1$ steps. Let $Z^{w}(\mu) \in \{0,1,2,3\}^n$ denote the state obtained at this time.  
Then, in the second phase, run the chain $\tilde{R}$ for $t-T_1$ steps obtaining the state $Z_t(\mu)$. Note that we have $|Z^{w}(\mu)| = |Z_{t}(\mu)|$.

We start with the case $k \leq \delta_0 n$ for some $\delta_0$ to be chosen later. Using \eqref{eq:B2-1} and \eqref{eq:binent-sqrt}, we have $\binom{n}{k} \leq 2^{n \binent(k/n)} \leq 2^{2\sqrt{\delta_0} n}$ and thus
\begin{align*}
\pr{Z_t(\mu) = \nu, \eventfont{E}_P} \leq \pr{|Z_t(\mu)| = |\nu|, \eventfont{E}_P} &\leq \frac{4^{\delta n} 3^k \binom{n}{k}}{4^n-1} \\
&\leq   \frac{4^{\delta n} 3^{\delta_0 n} 2^{2 \sqrt{\delta_0} n}}{4^n-1}.
\end{align*}
By choosing $\delta_0$ appropriately small, we obtain the desired result.

Now we assume that $\delta_0 < k < (1-\delta_0) n$. We deal with the case $k \geq (1-\delta_0) n$ at the end of the proof.
Note first that we have
\begin{align*}
\sum_{|\nu| = k} \pr{Z_t(\mu) = \nu, \eventfont{E}_P} \leq \frac{4^{\delta n} 3^k \binom{n}{k}}{4^n-1}.
\end{align*}

Our objective is to show that this total probability is basically evenly spread among all the $\nu$'s of weight $k$. For this, we condition on the value of $Z^{w}(\mu)$.
\begin{align}
\pr{Z_t(\mu) = \nu, \eventfont{E}_P} = \sum_{|\nu^{w}| = k}  \pr{Z^{w}(\mu) = \nu^w, \eventfont{E}_{P}} \cdot \pr{Z_t(\mu) = \nu | Z^{w}(\mu) = \nu^{w},  \eventfont{E}_{P}}. \label{eq:condition-zw}
\end{align}
Note that the event $\eventfont{E}_P$ only depends on the set of weights visited by the chain. As a result, by the Markov property for the second phase, the random variable $Z_t(\mu)$ is independent of $\eventfont{E}_P$ conditioned on $Z^w(\mu)$. In other words, $\pr{Z_t(\mu) = \nu | Z^{w}(\mu) = \nu^{w},  \eventfont{E}_{P}} = \pr{Z_t(\mu) = \nu | Z^{w}(\mu) = \nu^{w}}$. In order to evaluate this term, we study Markov chain for the second phase which is governed by the matrix $\tilde{R}$.

More precisely, we study the evolution of the support of $Z_s(\mu)$ for $s \geq T_1$ relative to the support of $Z^w(\mu)$. Define $I_s = |\supp(Z_s(\mu)) \cap \supp(Z^w(\mu))|$ for $s \geq T_1$. Recall that we have $|Z_s(\mu)|=|Z^{w}(\mu)| = k$ and thus the expected size for $\supp(Z_s(\mu)) \cap \supp(Z^w(\mu))$ if $\supp(Z_s(\mu))$ were completely random is $k^2/n$. 

It is simple to compute the transition probabilities of the chain $\{I_s\}_s$:
\begin{align*}
\pr{I_{s+1} = I_s +1} &= \frac{(k-I_s)^2}{n (n-1)} \\
\pr{I_{s+1} = I_s - 1} &= \frac{I_s(n - 2k + I_s)}{n (n-1)} \\
\pr{I_{s+1} = I_s} &= 1 - \pr{I_{s+1} = I_s +1} - \pr{I_{s+1} = I_s -1}.
\end{align*}
We can verify (for example by writing detailed balance equations) that the stationary distribution for this chain is given by $p_I(k') = \frac{\binom{k}{k'} \binom{n-k}{k-k'}}{\binom{n}{k}}$ for $k' \in \{0, \dots, k\}$. This allows us to bound the probability of reaching the state $k'$ when starting in a state $r'$, as was done for the chain $\{X_t\}$ in \eqref{eq:returnfn-from-middle}. This bound gets closer to the stationary probability $p_I(k')$ as $r'$ gets closer to $\frac{k^2}{n}$. More precisely, if $I_s(r')$ denotes the size of the intersection of the supports at step $s$ given that the starting state has an intersection size of $r'$, we have
\begin{align*}
\pr{I_s(r') = k'} \leq \frac{\binom{n}{k}}{\binom{k}{r'} \binom{n-k}{k-r'}} \cdot \frac{\binom{k}{k'} \binom{n-k}{k-k'}}{\binom{n}{k}}.
\end{align*}
We introduce the ``good'' event that for some $s \in [T_1, t]$, the walk $I_s$ gets close to the state $k^2/n$: $\eventfont{E}_I = \event{\exists s \in [T_1, t] : \frac{k^2}{n} - \delta_2 n \leq I_s \leq \frac{k^2}{n} + \delta_2 n}$.
Note that if $|r' - k^2/n| \leq \delta_2 n$, then using \eqref{eq:B2-1} and 
\begin{align*}
\binom{k}{r'} \binom{n-k}{k-r'} &\geq \frac{1}{n^2} 2^{k \cdot \binent(\frac{r'}{k}) + (n-k) \cdot \binent(\frac{k-r'}{n-k})} \\
					&\geq \frac{1}{n^2} 2^{n \cdot \binent(\frac{k}{n}) - n \binent(\frac{\delta_1}{\delta_0})} \\
					&\geq \frac{2^{-n \binent(\frac{\delta_1}{\delta_0})}}{n^2} \binom{n}{k}.
\end{align*}
For the second line, we used inequality \eqref{eq:binent-delta} which implies that $\binent(\frac{r'}{k}) \geq \binent(\frac{k}{n}) - \binent(\frac{\delta_2 n}{k}) \geq \binent(\frac{k}{n}) - \binent(\frac{\delta_2}{\delta_0})$, and similarly $\binent(\frac{k-r'}{n-k}) \geq  \binent(\frac{k}{n}) - \binent(\frac{\delta_2 n}{n-k}) \geq \binent(\frac{k}{n}) - \binent(\frac{\delta_2}{\delta_0})$. This means that we have
\begin{equation}
\label{eq:prob-r'-k'}
\pr{I_s(r') = k'} \leq n^2 2^{n \binent(\frac{\delta_2}{\delta_0})} \cdot \frac{\binom{k}{k'} \binom{n-k}{k-k'}}{\binom{n}{k}}
\end{equation}
whenever $|r' - k^2/n| \leq \delta_2 n$. Getting back to equation \eqref{eq:condition-zw}, we can write
\begin{align*}
\pr{Z_t(\mu) = \nu | Z^{w}(\mu) = \nu^{w}} &= \pr{\eventfont{E}_I^c} + \pr{Z_t(\mu) = \nu, \eventfont{E}_I | Z^{w}(\mu) = \nu^{w}}.
\end{align*}
We start by bounding $\pr{\eventfont{E}_I^c}$. For this, observe that for $\pr{I_{s+1} = I_s + 1} - \pr{I_{s+1} = I_s - 1} = \frac{k^2 - I_s n}{n(n-1)}$. This means that if $I_s \geq \frac{k^2}{n} + \delta_2 n$, there is a $\delta_2$ negative drift, and similarly there is a constant positive drift if $I_s \leq \frac{k^2}{n} - \delta_2 n$. Using standard methods, one can conclude that $\pr{\eventfont{E}^c_I | t- T_1 \geq n \log n} \leq 2^{-10n}$. In addition for large enough $t$, $\pr{t - T_1 \geq n \log n} \geq 1 - 2^{-10n}$. \footnote{Note that having $T_1 \geq c' n$ for some large enough constant $c'$ depending on $\delta$ would be good enough; we choose $n \log n$ simply to avoid introducing additional constants.} Then one can directly conclude $\pr{\eventfont{E}^c_I} \leq \pr{t-T_1 < n \log n} +  \pr{\eventfont{E}^c_I | t- T_1 \geq n \log n} \leq 2 \cdot 2^{-10n}$. 

We can write
\begin{align}
\sum_{|\nu'| = k : |\supp(\nu') \cap \supp(\nu^{w})| = k'} \pr{Z_t(\mu) = \nu', \eventfont{E}_I | Z^{w}(\mu) = \nu^{w}} &\leq \max_{|r'-k^2/n| \leq \delta_2 n} \max_{s} \pr{I_s(r') = k'} \\
&\leq n^2 2^{n \binent(\frac{\delta_2}{\delta_0})} \cdot \frac{\binom{k}{k'} \binom{n-k}{k-k'}}{\binom{n}{k}}.
\label{eq:sum-k-k'}
\end{align}
%\begin{align*}
%\sum_{\nu' : \supp(\nu') = \supp(\nu)} \pr{Z_t(\mu) = \nu', Z^{w}(\mu) = \nu^{w}, \eventfont{E}_{P}, \eventfont{E}_I} \leq \frac{4^{\delta n}}{\binom{n}{k}} \pr{Z^{w}(\mu) = \nu^w, \eventfont{E}_P}.
%\end{align*}
Now it remains to say that many of the terms in this sum are actually the same. For this, we use invariance properties of $\tilde{R}$.
%Suppose the circuit was such that all the qubits are touched. Then this would mean that all the $3^k$ strings that have the same support all have the same probability. Now if we have that almost all are touched, then we are still in good shape. The part that is outside the support of $\nu^w$ is definitely random. The issue is with the part inside $\nu^w$, then we want to say that a good fraction of these are touched.
%Then we introduce the event that most sites are touched.

%we use the following invariance property of the chain described by $R$. 
Under all permutations $\pi \in \mfS_n$ of $\{1, \dots, n\}$, and all functions $\gamma \in (\mfS_3)^{\times n}$ that permute the Pauli operators $\{1,2,3\}$ on each qubit, we have
\begin{equation}
\label{eq:invariance}
\tilde{R}((\pi \circ \gamma )(\mu), (\pi \circ \gamma )(\nu))=\tilde{R}(\mu,\nu).
\end{equation}
%Invariance under $\pi$ holds because the pair of qubits to which a gate is applied is chosen uniformly at random, while invariance under $\gamma$ results from the fact that the Haar measure on a single qubit is invariant under all single-qubit unitary transformations. 
It follows that $\tilde{R}(\mu, (\pi_0 \circ \gamma_0)(\nu)) = \tilde{R}((\pi_0 \circ \gamma_0)(\mu), (\pi_0 \circ \gamma_0)(\nu)) = \tilde{R}(\mu, \nu)$ for any $\pi_0 \in \mfS_n$ and $\gamma_0 \in (\mfS_3)^{n}$ such that $\pi_0 \circ \gamma_0 (\mu) = \mu$, e.g., if $\pi_0$ and $\gamma_0$ act outside the support of $\mu$.

As a result, we have that $\pr{Z_t(\mu) = \nu | Z^{w}(\mu) = \nu^w} = \pr{Z_t(\mu) = \nu' | Z^{w}(\mu) = \nu^w}$ if $\nu'$ can be obtained from $\nu$ by a permutation and relabeling of the Pauli operators that act outside the support of $\nu^w$. If $|\supp(\nu) \cap \supp(\nu^w) | = k'$, then there are $3^{k-k'} \binom{n-k}{k-k'}$ distinct $\nu'$ that can be obtained in this way. 

Invariance of the transition probabilities under maps that act on the support of $\nu^w$ is slightly more complicated. For any permutation $\pi$ of the support of $\nu^w$, and any relabeling $\gamma_{\pi}$ that satisfies $\gamma_{\pi}(\nu) = \pi^{-1}(\nu)$, $\pi \circ \gamma_{\pi}$ keeps $\nu^w$ unchanged. Note that for any $\pi$ there is at least one such $\gamma_{\pi}$. This means that also $\nu' = \pi \circ \gamma_{\pi}(\nu)$ obtained in this way satisfy $\pr{Z_t(\mu) = \nu'| Z^{w}(\mu) = \nu^w} = \pr{Z_t(\mu) = \nu | Z^{w}(\mu) = \nu^w}$. By combining with invariants outside the support of $\nu^w$, we obtain a total of $3^{k-k'} \binom{n-k}{k-k'} \cdot \binom{k}{k'}$ distinct $\nu'$ for which $\pr{Z_t(\mu) = \nu' | Z^{w}(\mu) = \nu^w} = \pr{Z_t(\mu) = \nu | Z^{w}(\mu) = \nu^w}$. 

The total number of $\nu'$ such that $|\nu'| = k$ and $|\supp(\nu') \cap \supp(\nu^w)| = k'$ is $3^{k} \binom{n-k}{k-k'} \cdot \binom{k}{k'}$, so our objective is to prove that there are roughly $3^{k'}$ additional relabelings that keep the transition probability invariant. In particular, we want to show that relabelings acting on the support of $\nu^w$ keep this probability unchanged. For this we argue as in Appendix \ref{sec:app-one-design}, that with high probability, most of the sites are acted upon at least once in the second phase. More precisely introduce the event $\eventfont{E}_A$ that between times $T_1$ and $t$, a $(1-\delta_1)$ fraction of the sites $\{1, \dots, n\}$ are acted upon in at least one step. First, let us see that this event happens with high probability. In fact, by applying a union bound on all the subsets of size $\delta_1 n$, we directly get that for sufficiently large $n$, $\pr{\eventfont{E}_A^c | t - T_1 \geq n \log n} \leq 2^{-10n}$ and thus $\pr{\eventfont{E}_A^c} \leq 2 \cdot 2^{-10 n}$.

As argued in Appendix \ref{sec:app-one-design}, we can condition on the set of all sites that are acted upon in some step between $T_1$ and $t$. Then any string that is obtained from $\nu$ by applying a relabeling $\gamma$ that acts on these sites has the same probability as $\nu$. If this set of sites has size at least $(1-\delta_1)n$, i.e., the event $\eventfont{E}_A$ holds, there are at least $k'-\delta_1 n$ such sites that are in $\supp(\nu) \cap \supp(\nu^w)$. This means that under the event $\eventfont{E}_A$, there are at least $3^{k'-\delta_1 n}$ strings $\nu'$ obtained from $\nu$ by applying a relabeling on some sites of $\supp(\nu) \cap \supp(\nu^w)$. As a result, using \eqref{eq:sum-k-k'},
\begin{align*}
&\pr{Z_t(\mu) = \nu | Z^{w}(\mu) = \nu^w} \\
&\leq \frac{1}{3^{k'-\delta_1 n} 3^{k-k'} \binom{n-k}{k-k'} \cdot \binom{k}{k'}} \cdot \sum_{|\nu'| = k : |\supp(\nu') \cap \supp(\nu^w)| = k'} \pr{Z_t(\mu) = \nu', \eventfont{E}_A | Z^{w}(\mu) = \nu^{w}} + \pr{\eventfont{E}_A^c} \\
&\leq n^2 2^{n \binent(\frac{\delta_2}{\delta_0})} 3^{\delta_1 n} \cdot \frac{1}{3^k \binom{n}{k}} + 2 \cdot 2^{-10n}.
\end{align*}
Going back to \eqref{eq:condition-zw}, we obtain
\begin{align*}
\pr{Z_t(\mu) = \nu, \eventfont{E}_P} &\leq \frac{2 n^2 2^{n \binent(\frac{\delta_2}{\delta_0})} 3^{\delta_1 n}}{3^k \binom{n}{k}} \sum_{|\nu^w| = k} \pr{Z^w = \nu^w, \eventfont{E}_P} \\
&= \frac{2 n^2 2^{n \binent(\frac{\delta_2}{\delta_0})} 3^{\delta_1 n}}{3^k \binom{n}{k}} \pr{|Z_t(\mu)| = k, \eventfont{E}_P} \\
&\leq \frac{ 2 n^2 2^{n \binent(\frac{\delta_2}{\delta_0})} 3^{\delta_1 n} \cdot 4^{\delta n}}{4^n - 1} \\
&\leq \frac{16^{\delta n}}{4^n-1},
\end{align*}
for large enough $n$ and where in the last step we choose $\delta_1 > 0$ and $\delta_2 > 0$ small enough constants.

Now it only remains to handle the case $k \geq (1-\delta_0) n$. In this case, the size of the intersection $k' = |\supp(\nu) \cap \supp(\nu^w)| \geq 2k - n$. We then observe that on the event $\eventfont{E}_{A}$, we can obtain at least $3^{k'-\delta_1 n}$ distinct $\nu'$ such that $\pr{Z_t(\mu) = \nu' | Z^w(\mu) = \nu^w} = \pr{Z_t(\mu) = \nu | Z^w(\mu) = \nu^w}$. As a result
\begin{align*}
\pr{Z_t(\mu) = \nu | Z^{w}(\mu) = \nu^w}
&\leq \frac{1}{3^{k'-\delta_1 n}} \sum_{|\nu'| = k : |\supp(\nu') \cap \supp(\nu^w)| = k'} \pr{Z_t(\mu) = \nu' | Z^{w}(\mu) = \nu^{w}} \\
&\leq \frac{1}{3^{2k - n - \delta_1 n}} \sum_{|\nu'| = k} \pr{Z_t(\mu) = \nu' | Z^{w}(\mu) = \nu^{w}} \\
&\leq \frac{1}{3^{k -\delta_0 n - \delta_1 n}} \frac{4^{\delta n} 3^k \binom{n}{k}}{4^n - 1} \\
&\leq \frac{ 3^{(\delta_1 + \delta_0)n} 2^{h(\delta_0) n} 4^{\delta n}}{4^n - 1}.
\end{align*}
For small enough $\delta_0$ and $\delta_1$, this leads to the desired result.
\end{proof}

\section{Conclusion}

We proved that decoupling can be achieved using a number of two-qubit gates that is almost linear in the system size. This implies that information processing tasks that can be achieved via decoupling can be implemented with a circuit of almost linear size and polylogarithmic depth.
%\footnote{For some tasks, there might be components other than decoupling involved such as Schumacher compression in the case where the relevant states has non-uniform marginals.}  

Our result also show that  a class of random time dependent Hamiltonians self-thermalize at a speed that is close to the signaling bound.  It is an interesting question if a similar result applies to the decoupling time for broader classes of two-body Hamiltonians on the complete graph, and whether decoupling can occur at a time scale close to the signaling bound of  $O(n^{1/d}) $ for interactions on $d$-dimensional lattices.

As far as optimality is concerned, it would be interesting to improve the depth to $O(\log n)$. For that, one would probably need to study directly a parallel random circuit model, as the parallelization step involves an additional $O(\log n)$ factor.% Another interesting question is to consider circuits on a $d$-dimensional lattice, for which we conjecture that depth $O(n^{1/d} \polylog(n))$ would be sufficient.

%The results in this section involve random quantum circuits of depth $O(\log^3 n)$. It would be interesting to improve these scrambling times to $O(\log n)$ instead. Our second set of results presented in the following section proves a weaker notion of scrambling in depth $O(\log n)$. This notion of scrambling is particularly relevant in the study of the black hole information paradox question. In addition, we also consider this notion of scrambling when the interaction graph is a $d$-dimensional lattice.

\section*{Acknowledgements}

We would like to thank Fernando Brandao, Patrick Hayden, David Poulin, Renato Renner, Lidia del Rio, Marco Tomamichel and Stephanie Wehner for helpful discussions and Aram Harrow for his comments. The research of WB is supported by the Centre de Recherches Math\'ematiques at the University of Montreal, Mprime, and the Lockheed Martin Corporation. The research of OF is supported by the European Research Council grant No. 258932.

\appendix

\section{A generalisation of the gambler's ruin lemma}
Consider a random walk on a line indexed from $-1$ to $a$. At positions $i > 0$, the probability of moving forward is $p_{+}(i)$ (depending on $i$) and for points $i \leq 0$, the probability of moving forward is $p_{-}$. The following lemma gives a bound on the probability of hitting the node $-1$ before hitting $a$ when starting at position $0$. In our setting, we are interested in the case where $p_{-}$ and $p_{+}$ are (significantly) larger than $1/2$ so that the probability of hitting $-1$ before $a$ is small.

\begin{lemma}
\label{lem:hitting-prob}
Assume $p_+(i), p_{-} > 1/2$. Then the probability of hitting $-1$ before $a$ is exactly
\[
\frac{1}{1+ \alpha_{-} \cdot \frac{\prod_{j=1}^{a-1} \alpha_+(j)}{1 + \sum_{i=1}^{a-1} \prod_{j=i}^{a-1} \alpha_+(j)} } \ ,
\]
where $\alpha_+(i) = \frac{p_{+}(i)}{1-p_+(i)}$ and $\alpha_- = \frac{p_-}{1-p_-}$. In particular, if $\alpha_+(i) = \alpha_+$ for all $i$, this probability becomes
\[
\frac{1}{1+ \alpha_{-} \cdot \frac{\alpha_+^{a} - \alpha_+^{a-1}}{\alpha_+^{a} - 1}  } \leq \frac{1}{1+ \alpha_{-} \cdot (1-1/\alpha_+)  } \ .
\]

%\[
%\frac{1}{1 + \frac{2p_+ - 1}{p_+} \frac{p_{-}}{2p_{-} - 1} \left( \left(\frac{p_-}{1-p_-}\right)^b - 1 \right)}.
%\]
%\frac{p_+}{2p_+ - 1} \cdot \left(\frac{1-p_{-}}{p_{-}} \right)^b$
%$\frac{1}{\left(1 + \frac{\alpha_+ - 1}{\alpha_+}  \cdot (\alpha_-^b - 1) \right)}$.
\end{lemma}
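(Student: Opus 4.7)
The plan is to apply the standard first-step (one-step) analysis used in the classical gambler's ruin problem, generalized to allow position-dependent step probabilities. Let $h_i$ denote the probability of hitting $-1$ before $a$ starting from position $i$, so $h_{-1} = 1$ and $h_a = 0$. Conditioning on the first move yields the linear recurrence
\begin{equation*}
h_i = p_+(i) h_{i+1} + (1 - p_+(i)) h_{i-1} \quad \text{for } 0 < i < a,
\qquad h_0 = p_- h_1 + (1 - p_-) h_{-1}.
\end{equation*}
The goal is to solve for $h_0$ in closed form.

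Following the usual trick, I would introduce the successive differences $d_i \vcentcolon= h_i - h_{i+1}$ for $-1 \leq i \leq a-1$. Rewriting the recurrence shows that $(1 - p_+(i)) d_{i-1} = p_+(i) d_i$ for $i > 0$, i.e., $d_{i-1} = \alpha_+(i) d_i$, and similarly the boundary relation at $i = 0$ gives $d_{-1} = \alpha_- d_0$. Telescoping these ratios produces $d_i = \left(\prod_{j=i+1}^{a-1} \alpha_+(j)\right) d_{a-1}$ for $0 \leq i \leq a-1$ (empty product $= 1$) and $d_{-1} = \alpha_- \prod_{j=1}^{a-1} \alpha_+(j) \cdot d_{a-1}$. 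Since the full telescoping sum satisfies $\sum_{i=-1}^{a-1} d_i = h_{-1} - h_a = 1$, a single constant $d_{a-1}$ is pinned down.

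Plugging this in, $h_0 = h_{-1} - d_{-1} = 1 - \alpha_- \prod_{j=1}^{a-1}\alpha_+(j) \cdot d_{a-1}$, and substituting the value of $d_{a-1}$ obtained from the normalization gives
\begin{equation*}
h_0 \;=\; \frac{1 + \sum_{i=1}^{a-1} \prod_{j=i}^{a-1} \alpha_+(j)}{\alpha_- \prod_{j=1}^{a-1} \alpha_+(j) + 1 + \sum_{i=1}^{a-1} \prod_{j=i}^{a-1} \alpha_+(j)},
\end{equation*}
which, after dividing numerator and denominator by the numerator, is exactly the stated expression. The constant-$\alpha_+$ specialization then just uses the geometric sum identity $1 + \alpha_+ + \cdots + \alpha_+^{a-1} = (\alpha_+^a - 1)/(\alpha_+ - 1)$, and the final inequality follows by replacing $\alpha_+^a - 1$ in the denominator by the larger quantity $\alpha_+^a$.

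There is no real obstacle here: the whole argument is a careful bookkeeping of telescoping differences. The only thing to watch is indexing conventions (empty products, and whether the product in $d_i$ runs from $i+1$ or $i$) so that the final expression matches the form stated in the lemma; a small amount of care is needed to verify that $\sum_{i=0}^{a-1} \prod_{j=i+1}^{a-1} \alpha_+(j)$ equals $1 + \sum_{i=1}^{a-1} \prod_{j=i}^{a-1} \alpha_+(j)$ by reindexing. The hypothesis $p_+(i), p_- > 1/2$ is not actually used for the identity itself but guarantees $\alpha_\pm > 1$, which is what makes the bound useful in the applications in Section \ref{sec:mc-analysis}.
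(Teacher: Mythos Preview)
Your proposal is correct and follows essentially the same approach as the paper: both set up the one-step recurrence for the hitting probabilities, pass to successive differences $h_i - h_{i+1}$, telescope using the ratios $\alpha_+(i)$, and then use the two boundary conditions $h_{-1}=1$, $h_a=0$ to solve for $h_0$. The paper's write-up is organized slightly differently (it expresses everything in terms of $P_{a-1}$ rather than $d_{a-1}$), but the argument is the same.
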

\begin{proof}
Let $P_i$ be the probability of first reaching $-1$ when starting at position $i$. We can write for any %$i \in [-b+1, 0]$,
%$P_{i} = p_{-} P_{i+1} + (1-p_{-}) P_{i-1}$, which can be re-written as
%\[
%\frac{p_{-}}{1-p_{-}}  \left( P_{i} - P_{i+1} \right) = \left(P_{i-1} - P_i \right).
%\]
for $i \in [1, a-1]$, $P_{i} = p_{+}(i) P_{i+1} + (1-p_{+}(i)) P_{i-1}$, which can be re-written as
\[
\frac{p_{+}(i)}{1-p_{+}(i)} \left(P_{i} - P_{i+1} \right) =  \left(P_{i-1} - P_i \right).
\]
We now use the boundary condition at node $a$: $P_a = 0$. Thus, $ \left(P_{a-2} - P_{a-1} \right) =  \frac{p_{+}(a-1)}{1-p_{+}(a-1)} P_{a-1}$. Moreover, we see by induction that for any $i \geq 1$, $P_{i-1} - P_i = \left( \prod_{j=i}^{a-1}\frac{p_+(j)}{1-p_+(j)}\right) P_{a-1}$. We can now write a telescoping sum
\[
P_{0} - P_{a-1} = \sum_{i=1}^{a-1} P_{i-1} - P_{i} = \sum_{i=1}^{a-1} \prod_{j=i}^{a-1}\alpha_+(j) \cdot P_{a-1}.
\]
As a result, %if we write $\alpha_+ = \frac{p_+}{1-p_+}$,
\[
P_{0} = P_{a-1} \left( 1 + \sum_{i=1}^{a-1} \prod_{j=i}^{a-1}\alpha_+(j). \right).
\]
We can then write $P_{-1} - P_{0} = \frac{p_-}{1-p_-} \left( P_{0} - P_{1} \right) = P_{a-1} \cdot \prod_{j=1}^{a-1}\alpha_+(j) \cdot \frac{p_-}{1-p_-}$.

%More generally, for any $i \in [-b+1, 0]$, we have
%\[
%P_{i-1} - P_i = P_{a-1} \cdot \left(\frac{p_{+}}{1-p_{+}}\right)^{a-1} \cdot  \left( \frac{p_-}{1-p_-} \right)^{-i+1}.
%\]
%By writing a telescoping sum, we have
%\[
%P_{-b} - P_0 = \sum_{i=-b+1}^{0} P_{i-1} - P_i = P_{a-1} \left(\frac{p_{+}}{1-p_{+}}\right)^{a-1} \sum_{i=-b+1}^{0} \left( \frac{p_-}{1-p_-} \right)^{-i+1} = P_{a-1} \alpha_+^{a-1} \alpha_{-} \cdot \frac{\alpha^{b}_{-} - 1}{\alpha_{-} -1}.
%\]
Now, we use our second boundary condition $P_{-1} = 1$. We have
\begin{align*}
1 = P_{-1} &= P_0 +  P_{a-1} \cdot \alpha_{-} \prod_{j=1}^{a-1}\alpha_+(j) \\
		&= P_0 \left(1 + \alpha_{-} \frac{\prod_{j=1}^{a-1}\alpha_+(j)}{\sum_{i=1}^{a-1} \prod_{j=i}^{a-1}\alpha_+(j)} \right),		%&= P_0 \left(1 + \frac{\alpha^a_+ - \alpha^{a-1}_+}{\alpha_+^{a} - 1}  \cdot \frac{\alpha^{b + 1}_{-} - \alpha_-}{\alpha_{-} -1} \right).
\end{align*}
which leads to the desired result.
%This means that
%\begin{align*}
%P_0 &= \frac{1}{\left(1 + \frac{\alpha^a_+ - \alpha^{a-1}_+}{\alpha_+^{a} - 1}  \cdot \frac{\alpha^{b + 1}_{-} - \alpha_-}{\alpha_{-} -1} \right)} \\
%&\leq \frac{1}{\left(1 + \frac{\alpha_+ - 1}{\alpha_+}  \cdot \frac{\alpha_{-}}{\alpha_{-} - 1} \left( \alpha_-^b - 1 \right) \right)} \\
%&= \frac{1}{1 + \frac{(2p_+-1) p_{-}}{p_{+} (2 p_{-} - 1)} \left(\left( \frac{p_-}{1-p_-} \right)^b - 1\right)}.
%\end{align*}
%This means that for constant $\alpha_+ > 1$, $P_0 = O(\alpha_-^{-b}) = O\left( \left(\frac{1-p_{-}}{p_-} \right)^b \right)$.
\end{proof}

\section{Sequential random quantum circuits are approximate $1$-designs}
\label{sec:app-one-design}
The objective of this section is to show that we have for $t > c n \log n$,
\begin{equation}
\label{eq:one-design}
\exc{U_t}{\tr[\tilde{\cT}(U_t \rho_{AE} U^{\dagger}_t) \tilde{\tau}_B \otimes \tilde{\rho}_{E}]} \geq \left(1-\frac{1}{\poly(n)}\right) \tr[\tilde{\tau}_B^2] \tr[\tilde{\rho}_E^2].
\end{equation}

Let us generate the circuit $U_t$ by first choosing the pair of qubits $S = \{(i_1, j_1), \dots, (i_t, j_t)\}$ on which each of the $t$ gates act and then choosing the two-qubit unitaries $V_1, \dots, V_t$ that are applied in each time step. We then write $U_t = V_t(i_t, j_t) \cdots V_1(i_1,j_1)$. Let $\eventfont{G}$ be the event that $\{i_1, j_1, i_2, j_2, \dots, i_t, j_t\} = [n]$. It then follows that if we fix such an $S$ and take the expectation over the choice of $V_1, \dots, V_t$, we have for any $S$ that satisfies $\eventfont{G}$,
\[
\exc{V_1, \dots, V_t}{U_t \sigma_{\mu} U_t^{\dagger}} = 0,
\]
for all $\mu \neq 0$. As a result we have
\[
\exc{V_1, \dots, V_t}{\tr[\tilde{\cT}(U_t \rho_{AE} U^{\dagger}_t) \tilde{\tau}_B \otimes \tilde{\rho}_{E}]} = \tr\left[\tilde{\cT}\left(\frac{\id}{2^n}\right) \otimes \tilde{\rho}_E \tilde{\tau}_B \otimes \tilde{\rho}_E \right] =  \tr[\tilde{\tau}_B^2] \tr[\tilde{\rho}_E^2],
\]
for any fixed $S$ that satisfies $\eventfont{G}$.
%Since the gates $V_t(i, j)$ are selected from the Haar measure on two qubits it follows that $\ex{V_t \sigma_\mu V_t^\dagger}=0$ for any Pauli string with support on $(i,j)$. This holds for all Pauli strings given the event $\eventfont{C}(S)$ .
Now it only remains to bound the probability of the event $\eventfont{G}^c$, which is the complement of $\eventfont{G}$. The probability that qubit $1$ is not affected by any gate is $(1-2/n)^t$. Then, by a union bound, we have $\pr{\eventfont{G}^c} \leq n (1-2/n)^t \leq n e^{2t/n} \leq \frac{1}{\poly(n)}$.

\section{Bounding the total mass of coefficients at a certain weight}

\begin{lemma}
\label{lem:level}
Let $\rho_{AE}$ be such that $\entHtwo(A|E)_{\rho} \geq -(1-\e)n$ with $\e > 0$, i.e.,
\[\tr[\tilde{\rho}_{AE}^2] \leq 2^{(1-\e)n},
\]
where $\tilde{\rho}_{AE} = \rho_E^{-1/4} \rho_{AE} \rho_E^{-1/4}$. Then, there exists $\eta > 0$ (depending only on $\e$) such that for all $\ell$,
\[
\sum_{\nu : |\nu| = \ell} \tr\left[ \tr_A[ \sigma_{\nu} \tilde{\rho}_{AE}]^2 \right] \leq 12n^4 \cdot (3-\eta)^{\ell} \binom{n}{\ell}
\]
\end{lemma}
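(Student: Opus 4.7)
The plan is to control $A_\ell := \sum_{|\nu|=\ell}\tr[\tr_A[\sigma_\nu\tilde\rho_{AE}]^2]$ by combining partial-trace purity bounds with the structural constraint $\tr_A\tilde\rho_{AE}=\rho_E^{1/2}$ coming from the definition of $\tilde\rho_{AE}$. First I would record the Pauli-basis identity
\[
\sum_{\nu:\,\supp(\nu)\subseteq V} a_\nu \;=\; 2^{|V|}\tr\!\bigl[(\tr_{A_{V^c}}\tilde\rho_{AE})^2\bigr],
\]
and combine it with $\tr[(\tr_B X)^2]\le d_B\tr[X^2]$ applied to $X=\tilde\rho_{AE}$ to obtain $\sum_{\nu:\supp(\nu)\subseteq V}a_\nu\le 2^{(2-\e)n}$ for every $V\subseteq[n]$. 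Averaging this inequality over a uniformly random $V$ of size $n-s$ and extracting the weight-$\ell$ contribution yields the family of bounds
\[
A_\ell \;\le\; \frac{\binom{n}{\ell}}{\binom{n-s}{\ell}}\cdot 2^{(2-\e)n}\qquad\text{for any }0\le s\le n-\ell .
\]

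Next I would observe that the target $\binom{n}{\ell}(3-\eta)^\ell$ is maximised at $\ell\approx(3-\eta)n/(4-\eta)$ with peak value $\approx (4-\eta)^n/\sqrt{n}$. Choosing $\eta$ small enough (depending on $\e$) so that $4-\eta>2^{2-\e}$---which is possible precisely because $\e>0$, e.g.\ $\eta\le c\,\e$ for a small absolute constant $c$---makes the peak exceed $2^{(2-\e)n}$ by a polynomial factor. Consequently, for $\ell$ in a ``bulk'' window around the peak, the $s=0$ form of the averaged bound already gives $A_\ell\le 12n^4(3-\eta)^\ell\binom{n}{\ell}$. For $\ell$ moderately below the peak I would pick $s=s(\ell)$ so that $n-s\approx\ell(4-\eta)/(3-\eta)$, which places the binomial's peak at our target $\ell$ and again produces the required inequality.

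The hard part is small $\ell$ and $\ell$ close to $n$, where the averaged inequality collapses to the trivial $A_\ell\le 2^{(2-\e)n}$, which is much larger than the target for $\e$ small. To handle these cases I would exploit the PSD structure of $\tilde\rho_{AE}$. Specifically, testing $\tilde\rho_{AE}$ against $|\psi\rangle\langle\psi|_A\otimes I_E$ and $(I-|\psi\rangle\langle\psi|)_A\otimes I_E$ yields $-c_0\preceq c_\nu\preceq c_0$, where $c_0=\rho_E^{1/2}$; this gives the uniform operator-norm bound $\|c_\nu\|_\infty\le \|\rho_E^{1/2}\|_\infty$. Combined with the fact that $a_0=\tr[\rho_E]=1$ and with the sharper inclusion--exclusion decomposition
\[
\Pi_\ell(\tilde\rho_{AE})=\sum_{U\subseteq[n]}(-1)^{|U|}\!\!\prod_{i\in U}\!\mathcal D_i\otimes\!\prod_{i\notin U}\!\id_i(\tilde\rho_{AE})\quad\text{(appropriately for weight }\ell\text{)},
\]
one can bound $\tr[\Pi_\ell(\tilde\rho)^2]$ by summing partial-trace purities controlled by Step~1.

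The principal obstacle is the regime $\ell\approx n$: even though the total mass constraint says $A_n\le 2^{(2-\e)n}$, obtaining the strictly sharper $A_n\le\poly(n)(3-\eta)^n$ genuinely requires the PSD hypothesis, because a counting heuristic shows the ``average'' $a_\nu$ at weight $\ell$ is $\approx 2^{-\e n}$, giving the desired $A_n\lesssim 3^n\cdot 2^{-\e n}\le(3-\eta)^n$ only after one uses that no single $a_\nu$ can exceed $\|\rho_E^{1/2}\|_\infty^2\cdot d_E$ and that the couplings among $\{c_\nu\}_\nu$ imposed by $\tilde\rho_{AE}\succeq 0$ prevent simultaneous saturation. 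Formalizing this cancellation---to turn the heuristic $A_\ell\lesssim 3^\ell\binom{n}{\ell}\cdot 2^{-\e n}$ into the rigorous $(3-\eta)^\ell\binom{n}{\ell}$ bound with $\eta=\Theta(\e)$---is the technical core of the argument.
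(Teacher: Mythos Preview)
Your averaging scheme is set up correctly, but the input bound is too weak, and the patch you propose for extreme $\ell$ is not a proof. The elementary inequality $\tr[(\tr_{B} X)^2]\le d_B\tr[X^2]$ only yields $\sum_{\supp(\nu)\subseteq V}a_\nu\le 2^{(2-\e)n}$, a quantity that does \emph{not} shrink with $|V|$. After the hypergeometric extraction this gives $A_\ell\le\frac{\binom{n}{\ell}}{\binom{n-s}{\ell}}\,2^{(2-\e)n}$; even with your choice $n-s\approx \ell(4-\eta)/(3-\eta)$ the bound matches the target only when $(4-\eta)^{n-s}\gtrsim 2^{(2-\e)n}$, i.e.\ only for $\ell$ within $O(\e n)$ of $3n/4$. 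For $\ell=o(n)$ (or $\ell$ near $n$) your bound is exponentially too large---by a factor $\approx 2^{(2-\e)n}/\poly(n)$ at $\ell=1$ and $\approx (4/3)^{n}$ at $\ell=n$---and neither the operator comparison $-c_0\preceq c_\nu\preceq c_0$ nor the inclusion--exclusion you sketch produces the missing factor $2^{-\Theta(\e)\ell}$. You yourself flag this as ``the technical core'' and leave it open; as written it is a genuine gap.

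What the paper uses in place of your elementary partial-trace bound is the fully quantum entropy sampling theorem (Theorem~\ref{thm:min-entropy-sampling} in the appendix): under the hypothesis $\tr[\tilde\rho_{AE}^2]\le 2^{(1-\e)n}$ there exists $\delta=\delta(\e)>0$ with
\[
\exc{|S|=m}{\tr[\tilde\rho_{A_S E}^2]}\le (n^2+1)\,2^{(1-\delta)m}.
\]
The crucial point is that the right-hand side scales with the subset size $m$, not with $n$. Feeding this into exactly your extraction argument, with $m=\lceil 4\ell/3\rceil$, gives
\[
A_\ell\le (n^2+1)\,\binom{n}{\ell}\,\frac{4^m}{\binom{m}{\ell}}\cdot 2^{-\delta m}\le 12\,n^4\binom{n}{\ell}(3-\eta)^\ell
\]
uniformly in $\ell$, since $\binom{m}{\ell}3^\ell\ge 4^m/(m(m+1))$ at this choice of $m$. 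So the ``PSD cancellation'' you are after is supplied not at the level of individual Pauli coefficients but by this sampling result for random subsystems; it is the one nontrivial external input, and without it (or an equivalent) the argument does not close.
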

\begin{proof}
Fix $m = \ceil{4\ell/3}$ and apply Theorem \ref{thm:min-entropy-sampling}, we obtain
\[
\exc{|S|=m}{\tr[\tilde{\rho}_{A_SE}^2]} \leq (n^2+1) \cdot 2^{(1-\delta)m}.
\]
But we know that
\begin{align*}
\sum_{S : |S| = m} \tr[\tilde{\rho}_{A_SE}^2] &= \sum_{S : |S| = m} \frac{1}{2^{m}} \sum_{\nu \in \{0,1,2,3\}^S} \tr[ \tr_A[\sigma_{\nu}\tilde{\rho}_{AE}]^2] \\
&\geq \frac{1}{2^m} \sum_{\nu \in \{0,1,2,3\}^n : |\nu| = \ell} \binom{n-\ell}{m-\ell}\tr[ \tr_A[\sigma_{\nu}\tilde{\rho}_{AE}]^2],
\end{align*}
by simply forgetting the terms $\tr[ \tr_A[\sigma_{\nu}\tilde{\rho}_{AE}]^2]$ for which $|\nu| \neq \ell$. Note that $\binom{n-\ell}{m-\ell}$ is the number of sets $S$ of size $m$ in which the support of $\nu$ is included. As a result, we have
\begin{align*}
\sum_{\nu: |\nu| = \ell} \tr[ \tr_A[\sigma_{\nu} \tilde{\rho}_{AE}]^2 ] &\leq \frac{2^m}{\binom{n-\ell}{m-\ell}} \cdot \binom{n}{m} (n^2 + 1) 2^{(1-\delta)m}\\
				&= (n^2+1) \binom{n}{\ell}\frac{4^m}{\binom{m}{\ell}} 2^{-\delta m}.
\end{align*}
To conclude, we note that $3^{\ell} \binom{m}{\ell} \geq 3^{\ell} \frac{2^{m \binent(3/4)}}{m (m+1)} \geq \frac{3^{3/4m-1} 2^{m \binent(3/4)}}{n(n+1)}$,
where $\binent$ is the binary entropy function.
We conclude that
\begin{align*}
\sum_{\nu: |\nu| = \ell} \tr[ \tr_A[\sigma_{\nu} \tilde{\rho}_{AE}]^2 ]				
  &= 3 (n^2+1) 3 n (n+1) \binom{n}{\ell} 3^{\ell} 2^{-\delta m} \\
  &\leq 12 n^4 \binom{n}{\ell} (3-\eta)^{\ell}
  \end{align*}
  for an appropriate choice of constant $\eta > 0$.
\end{proof}

\begin{theorem}[Fully quantum entropy sampling \cite{DFW13}]
\label{thm:min-entropy-sampling}
Let $\rho_{AE}$ be such that $\entHtwo(A|E)_{\rho} \geq -(1-\e)n$ with $\e > 0$, i.e.,
\[\tr[\tilde{\rho}_{AE}^2] \leq 2^{(1-\e)n},
\]
where $\tilde{\rho}_{AE} = \rho_E^{-1/4} \rho_{AE} \rho_E^{-1/4}$. Then, there exists $\delta > 0$ (depending only on $\e$) such that for all $m$, when taking the average over all subsets $S$ of size $m$,
\[
\exc{|S|=m}{\tr[\tilde{\rho}_{A_SE}^2]} \leq (n^2+1) \cdot 2^{(1-\delta)m}.
\]
\end{theorem}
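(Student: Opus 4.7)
My plan is to work in the Pauli basis on $A$, where partial traces become projections onto Paulis supported on the retained subsystem. I would first expand $\tilde{\rho}_{AE} = 2^{-n}\sum_\nu \sigma_\nu \otimes M_\nu$ with $M_\nu := \tr_A[\sigma_\nu \tilde{\rho}_{AE}]$ and set $X_\nu := \tr[M_\nu^2]\geq 0$. Since $\tr_{A_{S^c}}$ kills every Pauli whose support is not in $S$, one has $\tr[\tilde{\rho}_{A_SE}^2] = 2^{-|S|}\sum_{\supp \nu \subseteq S}X_\nu$. The hypothesis $\tr[\tilde{\rho}_{AE}^2]\leq 2^{(1-\e)n}$ becomes the single global constraint $\sum_\nu X_\nu\leq 2^{(2-\e)n}$, and averaging over $|S|=m$ yields
\[
\exc{|S|=m}{\tr[\tilde{\rho}_{A_SE}^2]} = \frac{1}{2^m}\sum_{\ell=0}^m \frac{\binom{m}{\ell}}{\binom{n}{\ell}}a_\ell, \qquad a_\ell := \sum_{|\nu|=\ell}X_\nu,
\]
so the problem is reduced to bounding a weighted sum indexed by Pauli weight.

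To control this sum, I would next reformulate it as an operator identity. Using $\tr[R^2] = \tr[(R\otimes R)F]$ one rewrites the expected purity as $\tr[\tilde M_{AA'}\otimes F_{EE'}\cdot \tilde{\rho}_{AE}\otimes\tilde{\rho}_{A'E'}]$ where $\tilde M_{AA'} := \exc{|S|=m}{F_{A_SA'_S}\otimes I_{A_{S^c}A'_{S^c}}}$. The operator $\tilde M_{AA'}$ is permutation-invariant across the $n$ qubit pairs and diagonal in the per-qubit symmetric/antisymmetric basis; its eigenvalue on the block with exactly $a$ antisymmetric pairs is the Krawtchouk polynomial $K_m(a)/\binom{n}{m}$. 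Standard Krawtchouk bounds give $|K_m(a)/\binom{n}{m}|\leq 2^{-\Omega(m)}$ as soon as $a/n$ is bounded away from $0$, which provides the required exponential attenuation for contributions coming from high-weight components of $\tilde{\rho}_{AE}$.

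I would finish with a threshold argument. For Pauli weights $\ell\leq\ell^\star=\Theta(n)$, use the pointwise bound $\binom{m}{\ell}/\binom{n}{\ell}\leq (m/n)^\ell$, which identifies the partial sum with the Hilbert--Schmidt norm squared of $\cD_{\sqrt{m/n}}^{\otimes n}(\tilde{\rho}_{AE})$, where $\cD_p$ is the single-qubit depolarising channel. For $\ell>\ell^\star$, the Krawtchouk decay combined with $\sum_\ell a_\ell\leq 2^{(2-\e)n}$ handles the tail. Combining the two pieces delivers a bound of the form $\poly(n)\cdot 2^{(1-\delta)m}$, and the polynomial prefactor can be tightened to $n^2+1$ by careful bookkeeping. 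The hardest step will be converting the single global constraint $\sum_\ell a_\ell\leq 2^{(2-\e)n}$ into simultaneous per-level envelopes of the form $a_\ell\leq \binom{n}{\ell}3^\ell\cdot 2^{-\e n}\cdot\poly(n)$; a priori the mass $a_\ell$ could concentrate at a single level, which would sabotage the argument. The approach in \cite{DFW13} gets around this by applying the sampling identity recursively to smaller subsystems and tracking how the entropy slack evolves, arranging for the final constant $\delta$ to depend only on $\e$ and not on the subset size $m$.
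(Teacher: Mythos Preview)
The paper does \emph{not} prove this theorem: it is quoted from \cite{DFW13} and used as a black box in the proof of Lemma~\ref{lem:level}. So there is no ``paper's own proof'' to compare against; what can be said is whether your sketch would stand on its own.

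Your Pauli-basis reduction is correct and is exactly the computation the paper performs (in the opposite direction) in Lemma~\ref{lem:level}: the identity
\[
\exc{|S|=m}{\tr[\tilde\rho_{A_SE}^2]}=\frac{1}{2^m}\sum_{\ell=0}^m\frac{\binom{m}{\ell}}{\binom{n}{\ell}}\,a_\ell,\qquad a_\ell=\sum_{|\nu|=\ell}\tr\bigl[\tr_A[\sigma_\nu\tilde\rho_{AE}]^2\bigr],
\]
is right. The Krawtchouk reformulation via $\tilde M_{AA'}$ is also correct algebra, but it does not buy anything beyond the Pauli identity you already have, since the two are the same statement in different bases; the ``Krawtchouk decay for $a$ bounded away from $0$'' does not translate into the per-level control you need on the $a_\ell$.

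The genuine gap is the one you yourself flag: with only the global constraint $\sum_\ell a_\ell\le 2^{(2-\e)n}$ the claim is simply false for arbitrary nonnegative weights $a_\ell$ (put all the mass at $\ell\approx m/2$ and the left-hand side is of order $2^m$). What prevents this is the \emph{positivity} of $\tilde\rho_{AE}$, which your sketch never invokes. Positivity gives, for every subset $S$, the dimension bound $\tr[\tilde\rho_{A_SE}^2]\le 2^{|S|}$ (equivalently $\entHtwo(A_S|E)\ge -|S|$), and it is the interplay between this family of constraints at all scales and the global hypothesis that produces the per-level envelope. Your depolarising-channel identification $\sum_\ell (m/n)^\ell a_\ell = 2^n\tr[(\cD_{\sqrt{m/n}}^{\otimes n}\tilde\rho_{AE})^2]$ is correct, but applying data processing to it only reproduces $\sum_\ell p^{2\ell}a_\ell\le 2^{(2-\e)n}$, which is weaker than what you started with and does not close the gap. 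In short, the proposal is a plausible outline that correctly identifies the hard step and then defers it to \cite{DFW13}; as written it is not a proof, and the missing ingredient is an explicit use of positivity (via monotonicity of $\entHtwo$ on subsystems) together with the recursive/inductive argument you allude to.
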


\section{Properties of binomials}

We use $\binent$ to denote the binary entropy function $\binent(\alpha) = -\alpha \log(\alpha) - (1-\alpha) \log(1-\alpha)$. We use the following simple estimates for binomial coefficients (see~\cite[Lemma 9.2]{MU05}).
Let $\alpha \in [0,1]$ such that $\alpha n$ is an integer. Then
\begin{equation}
\label{eq:B2-1}
\sum_{k=0}^{\alpha n} \binom{n}{k} \leq 2^{n \binent(\alpha)},
\end{equation}
and
\begin{equation}
\label{eq:B2-2}
\frac{2^{n \binent(\alpha)}}{n+1} \leq \binom{n}{\alpha n}.
\end{equation}
We also use
\begin{equation}
\label{eq:binent-delta}
|\binent(\alpha + \delta) - \binent(\alpha)| \leq h(\delta),
\end{equation}
for all $\alpha, \delta \geq 0$ with $\alpha+\delta \leq 1$. To prove this, we observe that $f: \alpha \mapsto \binent(\alpha+\delta) - \binent(\alpha)$ is a decreasing function of $\alpha \in [0, 1-\delta]$ and thus $|\binent(\alpha+\delta) - \binent(\alpha) | \leq \max(f(0), f(1-\delta)) = h(\delta)$.
Moreover,
\begin{equation}
\label{eq:binent-sqrt}
\binent(\alpha) \leq 2\sqrt{\alpha(1-\alpha)}.
\end{equation}

\bibliographystyle{alphaplus}
\bibliography{big}

\newcommand{\etalchar}[1]{$^{#1}$}
\begin{thebibliography}{dRHRW13}

\bibitem[ADHW09]{ADHW09}
A.~Abeyesinghe, I.~Devetak, P.~Hayden, and A.~Winter.
\newblock The mother of all protocols: Restructuring quantum information's
  family tree.
\newblock {\em P. Roy. Soc. A - Math. Phy.}, 465:2537, 2009.
\newblock \href{http://arxiv.org/abs/quant-ph/0606225}{arXiv:quant-ph/0606225}.

\bibitem[Ber09]{Ber09}
M.~Berta.
\newblock Single-shot quantum state merging, 2009.
\newblock \href{http://arxiv.org/abs/0912.4495}{arXiv:0912.4495}.

\bibitem[BF13]{BF13isit}
W.~Brown and O.~Fawzi.
\newblock Short random circuits define good quantum error correcting codes.
\newblock In {\em Proc. IEEE ISIT}, 2013.

\bibitem[BFW12]{BFW12}
M.~Berta, O.~Fawzi, and S.~Wehner.
\newblock Quantum to classical randomness extractors.
\newblock In {\em Proc. CRYPTO}, volume 7417 of {\em LNCS}, pages 776--793,
  2012.
\newblock \href{http://arxiv.org/abs/1111.2026}{arXiv:1111.2026}.

\bibitem[Bha97]{Bha97}
R.~Bhatia.
\newblock {\em Matrix Analysis}.
\newblock Springer, 1997.

\bibitem[BHH12]{BHH12}
F.G.S.L Brandao, A.W. Harrow, and M.~Horodecki.
\newblock Local random quantum circuits are approximate polynomial-designs.
\newblock 2012.
\newblock \href{http://arxiv.org/abs/1208.0692}{arXiv:1208.0692}.

\bibitem[BV10]{BVPRL}
W.~Brown and L.~Viola.
\newblock Convergence rates for arbitrary statistical moments of random quantum
  circuits.
\newblock {\em Phys. Rev. Lett.}, 104:250501, 2010.
\newblock \href{http://arxiv.org/abs/0910.0913}{arXiv:0910.0913}.

\bibitem[BV13]{BV13}
W.~Brown and L.~Viola.
\newblock Convergence rate of moments of random quantum circuits on arbitrary
  graphs.
\newblock {\em in preparation}, 2013.

\bibitem[DBWR10]{DBWR10}
F.~Dupuis, M.~Berta, J.~Wullschleger, and R.~Renner.
\newblock One-shot decoupling.
\newblock 2010.
\newblock \href{http://arxiv.org/abs/1012.6044}{arXiv:1012.6044}.

\bibitem[DCEL09]{DCEL09}
C.~Dankert, R.~Cleve, J.~Emerson, and E.~Livine.
\newblock {Exact and approximate unitary 2-designs and their application to
  fidelity estimation}.
\newblock {\em Phys. Rev. A}, 80(1):12304, 2009.
\newblock \href{http://arxiv.org/abs/quant-ph/0606161}{arXiv:quant-ph/0606161}.

\bibitem[DFW13]{DFW13}
F.~Dupuis, O.~Fawzi, and S.~Wehner.
\newblock Entanglement sampling and applications.
\newblock 2013.
\newblock \href{http://arxiv.org/abs/1305.1316}{arXiv:1305.1316}.

\bibitem[dR{\AA}R{\etalchar{+}}11]{dRARDV11}
L.~del Rio, J.~{\AA}berg, R.~Renner, O.~Dahlsten, and V.~Vedral.
\newblock The thermodynamic meaning of negative entropy.
\newblock {\em Nature}, 474(7349):61--63, 2011.

\bibitem[dRHRW13]{lidia:inprep}
L.~del Rio, A.~Hutter, R.~Renner, and S.~Wehner.
\newblock Relative thermalization.
\newblock In preparation, 2013.

\bibitem[Dup10]{Dup09}
F.~Dupuis.
\newblock {\em {The decoupling approach to quantum information theory}}.
\newblock PhD thesis, {Universit\'e de Montreal}, 2010.
\newblock \href{http://arxiv.org/abs/1004.1641}{arXiv:1004.1641}.

\bibitem[ELL05]{ELL05}
J.~Emerson, E.~Livine, and S.~Lloyd.
\newblock Convergence conditions for random quantum circuits.
\newblock {\em Phys. Rev. A}, 72(6):060302, 2005.
\newblock \href{http://arxiv.org/abs/quant-ph/0503210}{arXiv:quant-ph/0503210}.

\bibitem[EWS{\etalchar{+}}03]{RM}
J.~Emerson, Y.S. Weinstein, M.~Saraceno, S.~Lloyd, and D.G. Cory.
\newblock Pseudo-random unitary operators for quantum information processing.
\newblock {\em Science}, 302(5653):2098--2100, 2003.

\bibitem[HL09]{HL09}
A.~Harrow and R.~Low.
\newblock Random quantum circuits are approximate 2-designs.
\newblock {\em Comm. Math. Phys.}, 291:257--302, 2009.
\newblock \href{http://arxiv.org/abs/0802.1919}{arXiv:0802.1919}.

\bibitem[HLW06]{HLW04}
P.~Hayden, D.~W. Leung, and A.~Winter.
\newblock {Aspects of generic entanglement}.
\newblock {\em Comm. Math. Phys.}, 265(1):95--117, 2006.
\newblock \href{http://arxiv.org/abs/quant-ph/0407049}{arXiv:quant-ph/0407049}.

\bibitem[HOW05]{HOW05}
M.~Horodecki, J.~Oppenheim, and A.~Winter.
\newblock Partial quantum information.
\newblock {\em Nature}, 436:673--676, 2005.
\newblock \href{http://arXiv.org/abs/quant-ph/0505062}{arXiv:quant-ph/0505062}.

\bibitem[HOW06]{HOW06}
M.~Horodecki, J.~Oppenheim, and A.~Winter.
\newblock Quantum state merging and negative information.
\newblock {\em Comm. Math. Phys.}, 269:107, 2006.
\newblock \href{http://arXiv.org/abs/quant-ph/0512247}{arXiv:quant-ph/0512247}.

\bibitem[HP07]{HP07}
P.~Hayden and J.~Preskill.
\newblock Black holes as mirrors: quantum information in random subsystems.
\newblock {\em J. High Energy Phys.}, page 120, 2007.
\newblock \href{http://arxiv.org/abs/0708.4025}{arXiv:0708.4025}.

\bibitem[HSZ12]{HSZ12}
A.~Hamma, S.~Santra, and P.~Zanardi.
\newblock Quantum entanglement in random physical states.
\newblock {\em Phys. Rev. Lett.}, 109:040502, 2012.
\newblock \href{http://arxiv.org/abs/1109.4391}{arXiv:1109.4391}.

\bibitem[Hut11]{adrian:thesis}
A.~Hutter.
\newblock
  \href{http://www.quantumlah.org/media/thesis/NCQT\_AdrianHutter\_MSc2011.pdf}{Understanding
  Equipartition and Thermalization from Decoupling}, 2011.

\bibitem[HW13]{HW13}
A.~Hutter and S.~Wehner.
\newblock Dependence of a quantum-mechanical system on its own initial state
  and the initial state of the environment it interacts with.
\newblock {\em Phys. Rev. A}, 87:012121, Jan 2013.
\newblock \href{http://arxiv.org/abs/arXiv:1111.3080}{arXiv:1111.3080}.

\bibitem[Low10]{Low10}
R.~Low.
\newblock {\em Pseudo-randomness and Learning in Quantum Computation}.
\newblock PhD thesis, Bristol, 2010.
\newblock \href{http://arxiv.org/abs/1006.5227}{arXiv:1006.5227}.

\bibitem[LSH{\etalchar{+}}13]{LSHOH11}
N.~Lashkari, D.~Stanford, M.~Hastings, T.~Osborne, and P.~Hayden.
\newblock Towards the fast scrambling conjecture.
\newblock {\em J. High Energy Phys.}, 2013(4):1--33, 2013.
\newblock \href{http://arxiv.org/abs/arXiv:1111.6580}{arXiv:1111.6580}.

\bibitem[MU05]{MU05}
M.~Mitzenmacher and E.~Upfal.
\newblock {\em Probability and computing: Randomized algorithms and
  probabilistic analysis}.
\newblock Cambridge Univ Press, 2005.

\bibitem[ODP07]{ODP07}
R.~Oliveira, O.C.O. Dahlsten, and M.B. Plenio.
\newblock Generic entanglement can be generated efficiently.
\newblock {\em Phys. Rev. Lett.}, 98(13):130502, 2007.
\newblock \href{http://arxiv.org/abs/quant-ph/0605126}{arXiv:quant-ph/0605126}.

\bibitem[RDR12]{RDR12}
J.~Renes, F.~Dupuis, and R.~Renner.
\newblock Efficient polar coding of quantum information.
\newblock {\em Phys. Rev. Lett.}, 109:050504, Aug 2012.
\newblock \href{http://arxiv.org/abs/1109.3195}{arXiv:1109.3195}.

\bibitem[SDTR13]{SDTR11}
O.~Szehr, F.~Dupuis, M.~Tomamichel, and R.~Renner.
\newblock Decoupling with unitary approximate two-designs.
\newblock {\em New J. Phys.}, 15(5):053022, 2013.
\newblock \href{http://arxiv.org/abs/1109.4348}{arXiv:1109.4348}.

\bibitem[SRDR13]{SRDR13}
D.~Sutter, J.~Renes, F.~Dupuis, and R.~Renner.
\newblock Efficient quantum channel coding scheme requiring no preshared
  entanglement.
\newblock In {\em Proc. IEEE ISIT}, 2013.

\bibitem[SS08]{SS08}
Y.~Sekino and L.~Susskind.
\newblock Fast scramblers.
\newblock {\em J. High Energy Phys.}, 2008(10):065, 2008.
\newblock \href{http://arxiv.org/abs/0808.2096}{arXiv:0808.2096}.

\bibitem[TGR07]{TG07}
G.~T{\'o}th and J.J. Garc{\'\i}a-Ripoll.
\newblock Efficient algorithm for multiqudit twirling for ensemble quantum
  computation.
\newblock {\em Phys. Rev. A}, 75(4):042311, 2007.
\newblock \href{http://arxiv.org/abs/quant-ph/0609052}{arXiv:quant-ph/0609052}.

\bibitem[Vad]{vadhan:survey}
S.~Vadhan.
\newblock
  \href{http://people.seas.harvard.edu/~salil/pseudorandomness/}{Pseudorandomness}.

\bibitem[WR12]{WR12}
M.~Wilde and J.~Renes.
\newblock Quantum polar codes for arbitrary channels.
\newblock In {\em Proc. IEEE ISIT}, pages 334--338, 2012.
\newblock \href{http://arxiv.org/abs/1201.2906}{arXiv:1201.2906}.

\bibitem[{\v{Z}}ni08]{Znidaric2}
M.~{\v{Z}}nidari{\v{c}}.
\newblock Exact convergence times for generation of random bipartite
  entanglement.
\newblock {\em Phys. Rev. A}, 78(3):032324, 2008.
\newblock \href{http://arxiv.org/abs/0809.0554}{arXiv:0809.0554}.

\end{thebibliography}

\end{document}